\documentclass[a4paper,USenglish,cleveref,autoref,thm-restate]{lipics-v2021}

 \newif\iflong
\newif\ifshort

\longtrue

\iflong
\else
 \shorttrue
\fi

\usepackage[noadjust]{cite}
 
 \usepackage{amsmath,stmaryrd,graphicx}

\makeatletter
\newcommand{\fixed@sra}{$\vrule height 2\fontdimen22\textfont2 width 0pt\shortrightarrow$}
\newcommand{\shortarrow}[1]{   \mathrel{\text{\rotatebox[origin=c]{\numexpr#1*45}{\fixed@sra}}}
}
\newcommand{\downleftarrow}[0]{\shortarrow{5}}
\newcommand{\downrighttarrow}[0]{\shortarrow{7}}

\usepackage{verbatim}
\usepackage[textsize=footnotesize]{todonotes}
\usepackage{xcolor}
\usepackage[textsize=footnotesize]{todonotes}

\usepackage{tcolorbox}
\usepackage{cite}

\usepackage{etex}
\usepackage{enumerate}
\usepackage{graphics}
\usepackage{xspace}
 
\usepackage{caption}
\usepackage{subcaption}
\usepackage{epsfig}
\usepackage{arcs}
\usepackage[T1]{fontenc}
\usepackage{alltt}
\usepackage{amssymb,amsfonts,epsf,url}
\usepackage{graphicx}
\usepackage{pstricks,pstricks-add}
\usepackage{pst-node}
\usepackage{pst-coil}
\usepackage{amsmath}
\usepackage{amsthm}
\usepackage{verbatim}
\usepackage[textsize=footnotesize]{todonotes}
\usepackage{xcolor}
\usepackage[textsize=footnotesize]{todonotes}
\usepackage{fancyhdr}
\usetikzlibrary{arrows}

\newcommand{\cmpl}{\textsc{\textup{GCMP}}\xspace} 
\newcommand{\cmplone}{\textsc{\textup{GCMP1}}\xspace}

\theoremstyle{plain}
 
\RequirePackage{etex}

\newcommand{\bigoh}{\mathcal{O}}

\newtheorem{longtheorem}{Theorem}
\newtheorem{longlemma}[longtheorem]{Lemma}
\newtheorem{longproposition}[longtheorem]{Proposition}
\newtheorem{longdefinition}[longtheorem]{Definition}

\theoremstyle{definition}
\newtheorem{our_definition}[theorem]{Definition}

\newlength{\alginputwidth}

\newcommand{\Oh}{{\mathcal O}}
\newcommand{\nat}{\mathbb{N}}
\newcommand{\FFF}{\mathcal{F}}
\newcommand{\F}{\mathcal{F}}
\newcommand{\M}{\mathcal{M}}

\newcommand{\Pol}{\mbox{\sf P}}

\newcommand{\NP}{\mbox{{\sf NP}}}
\newcommand{\tw}{\mbox{{\sf tw}}}

\newcommand{\FPT}{\mbox{{\sf FPT}}}

\newcommand{\W}{\mbox{{\sf W}}}

\newcommand{\R}{\mathcal{R}}
\newcommand{\SSS}{\mathbf{\emph{S}}}

\newcommand{\dist}{\emph{\textnormal{dist}}}

\newcommand{\I}{\mathcal{I}}

\newenvironment{tightcenter}
 {\parskip=0pt\par\nopagebreak\centering}
 {\par\noindent\ignorespacesafterend}

\usepackage{tikz}
\usetikzlibrary{calc}
\usepackage{xargs}
\usepackage{xifthen}
\usepackage{framed}

\usepackage{ctable}

\newlength{\RoundedBoxWidth}
\newsavebox{\GrayRoundedBox}
\newenvironment{GrayBox}[1]%
   {\setlength{\RoundedBoxWidth}{\textwidth-10.5ex}
    \def\boxheading{#1}
    \begin{lrbox}{\GrayRoundedBox}
       \begin{minipage}{\RoundedBoxWidth}%
   }{%
       \end{minipage}
    \end{lrbox}%
    \begin{tightcenter}%
    \begin{tikzpicture}%
       \node(Text)[draw=black!90,fill=white,rounded corners,%
             inner sep=2ex,text width=\RoundedBoxWidth]%
             {\usebox{\GrayRoundedBox}};
        \coordinate(x) at (current bounding box.north west);
        \node [draw=white,rectangle,inner sep=3pt,anchor=north west,fill=white] 
        at ($(x)+(6pt,.75em)$) {\boxheading};
    \end{tikzpicture}
    \end{tightcenter}\vspace{0pt}%
    \ignorespacesafterend
}    

\newenvironment{problem}[2][]{\noindent\ignorespaces%
                                \FrameSep=8pt%
                                \parindent=0pt%
                \vspace*{-.5em}
                \ifthenelse{\isempty{#1}}{%
                  \begin{GrayBox}{\textsc{#2}}%
                }{%
                }
                \newcommand\Prob{{Problem:}}%
                \newcommand\Input{{Input:}}%
                          
                \begin{tabular*}{\textwidth}{@{\hspace{.1em}} >{\itshape} p{1.2cm} p{0.85\textwidth} @{}}%
            }{
                \end{tabular*}%
                \end{GrayBox}%
                \vspace*{-.5em}
                \ignorespacesafterend
            } 
 
\bibliographystyle{plainurl}
       \title{Parameterized Algorithms for Coordinated Motion Planning: Minimizing Energy}
  \titlerunning{Parameterized Algorithms for Coordinated Motion Planning: Minimizing Energy}

\author{Argyrios Deligkas}{Department of Computer Science, Royal Holloway, University of London, Egham, UK}{Argyrios.Deligkas@rhul.ac.uk}{https://orcid.org/0000-0002-6513-6748}{Supported by Engineering and Physical Sciences Research Council (EPSRC) grant EP/X039862/1.}

\author{Eduard Eiben}{Department of Computer Science, Royal Holloway, University of London, Egham, UK}{eduard.eiben@rhul.ac.uk}{https://orcid.org/0000-0003-2628-3435}{}

\author{Robert Ganian}{Algorithms and Complexity Group, TU Wien, Vienna, Austria}{rganian@gmail.com}{https://orcid.org/0000-0002-7762-8045}{Project No. Y1329 of the Austrian Science Fund (FWF), Project No. ICT22-029 of the Vienna Science Foundation (WWTF).}

\author{Iyad Kanj}{School of Computing, DePaul University, Chicago, USA}{ikanj@cdm.depaul.edu}{0000-0003-1698-8829}{DePaul URC Grants 606601 and 350130}
\author{M. S. Ramanujan}{Department of Computer Science, University of Warwick, UK}{R.Maadapuzhi-Sridharan@warwick.ac.uk}{https://orcid.org/0000-0002-2116-6048}{ Supported by Engineering and Physical Sciences Research Council (EPSRC) grants EP/V007793/1 and EP/V044621/1.}

\authorrunning{A.\ Deligkas, E.\ Eiben, R.\ Ganian, I.\ Kanj, R.\ M.\ Sridharan}  
\Copyright{A.\ Deligkas, E.\ Eiben, R.\ Ganian, I.\ Kanj, R.\ M.\ Sridharan} 
 
\ccsdesc[300]{Theory of computation~Parameterized complexity and exact algorithms}

\keywords{coordinated motion planning, multi-agent path finding, parameterized complexity.}  
\relatedversion{}

\begin{document}
\nolinenumbers

\maketitle
 
\begin{abstract}
We study the parameterized complexity of a generalization of the coordinated motion planning problem on graphs, where the goal is to route a specified subset of a given set of $k$ robots to their destinations with the aim of minimizing the total energy (i.e., the total length traveled). We develop novel techniques to push beyond previously-established results that were restricted to solid grids.

 We design a fixed-parameter additive approximation algorithm for this problem parameterized by $k$ alone. This result, which is of independent interest, allows us to prove the following two results pertaining to well-studied coordinated motion planning problems: (1) A fixed-parameter algorithm, parameterized by $k$, for routing a single robot to its destination while avoiding the other robots, which is related to the famous Rush-Hour Puzzle; and (2) a fixed-parameter algorithm, parameterized by $k$ plus the treewidth of the input graph, for the standard \textsc{Coordinated Motion Planning} (CMP) problem in which we need to route all the $k$ robots to their destinations. 
The latter of these results implies, among others, the fixed-parameter tractability of CMP parameterized by $k$ on graphs of bounded outerplanarity, which include bounded-height subgrids.

We complement the above results with a lower bound which rules out the fixed-parameter tractability for CMP when parameterized by the total energy. This contrasts the recently-obtained tractability of the problem on solid grids  under the same parameterization. As our final result, we strengthen the aforementioned fixed-parameter tractability to hold not only on solid grids but all graphs of bounded local treewidth -- a class including, among others, all graphs of bounded genus.

     \end{abstract}
  
  \section{Introduction}
The task of routing a set of robots (or agents) from their initial positions to specified destinations while avoiding collisions is of great importance in a multitude of different settings. Indeed, there is an extensive body of works dedicated to algorithms solving this task, especially in the computational geometry~\cite{AdlerBHS15,hopcroft,hopcroft1984,LiuSZZ21,reif,solovey,sharir,sharir1,sharir2}, artificial intelligence~\cite{heuristic3,heuristic1,heuristic4} and robotics~\cite{banfi,heuristic2,heuristic5} research communities. Such algorithms typically aim at providing a schedule for the robots which is not only safe (in the sense of avoiding collisions), but which also optimizes a certain measure of the schedule -- typically its makespan or energy (i.e., the total distance traveled by all robots). In this article, we focus solely on the task of optimizing the latter of these two measures.

A common formalization of our task of interest is given by the \textsc{Coordinated Motion Planning} problem (also known as \textsc{Multiagent Pathfinding}). In the context of energy minimization, the problem can be stated as follows: Given a graph $G$, a budget $\ell$ and a set $\M$ of robots, each equipped with an initial vertex and destination vertex in $G$, compute a schedule which delivers all the robots to their destinations while ensuring that the combined length traveled of all the robots is at most $\ell$. Here, a schedule can be seen as a sequence of commands to the robots, where at every time step each of the robots can move to an adjacent vertex as long as no vertex or edge is used by more than a single robot at that time step. \textsc{Coordinated Motion Planning}, which generalizes the famous \NP-hard $(n^2-1)$-puzzle~\cite{nphard2,nphard1}, has been extensively studied---both in the discrete setting considered here as well as in various continuous settings~\cite{heuristic3,heuristic2,heuristic1,heuristic5,heuristic4,lavalle,lavalle1,halprinunlabeled,demaine,survey,alagar,sharir, sharir1, sharir2}---
and has also been the target of specific computing challenges~\cite{socg2021}.
  In other variants of the problem, there is no requirement to route \emph{all} the robots to their destinations -- sometimes the majority (or all but $1$) of the robots are simply movable obstacles that do not have destinations of their own. This is captured, e.g., by the closely-related and well-studied \textsc{Rush Hour} puzzle/problem~\cite{FlakeB02,BrunnerCDHHSZ21} and by \textsc{Graph Motion Planning with 1 Robot} (\textsc{GCMP1})~\cite{PapadimitriouRST94}, which both feature a single robot with a designated destination.
  
Both \textsc{Coordinated Motion Planning} and \textsc{GCMP1} are known to be \NP-hard~\cite{nphard1,nphard2,lavalle}. While several works have already studied \textsc{Coordinated Motion Planning} in the context of approximation and classical complexity theory, a more fine-grained investigation of the difficulty of finding optimal schedules through the lens of \emph{parameterized complexity}~\cite{DowneyFellows13,CyganFKLMPPS15} was only carried out recently~\cite{EibenGK23,KnopEtalMAPF24}. 
 The work in~\cite{EibenGK23} established the fixed-parameter tractability\footnote{A problem is \emph{fixed-parameter tractable} w.r.t.\ a parameter $k$ if it can be solved in time $f(k)\cdot n^{\bigoh(1)}$, where $n$ is the input size and $f$ is a computable function.} of \textsc{Coordinated Motion Planning} parameterized by either the number $k$ of robots or the budget $\ell$ (as well as the makespan variant when parameterized by $k$), but only on solid grids; a \emph{solid} grid is a standard rectangular $p \times q$ grid (i.e., with no holes), for some $p, q \in \nat$. The more recent work in~\cite{KnopEtalMAPF24} showed the \W[1]-hardness of the makespan variant of \textsc{Coordinated Motion Planning} w.r.t.\ the number of robots. The paper~\cite{KnopEtalMAPF24} also showed the \NP-hardness of the makespan problem-variant on trees, and presented parameterized complexity results with respect to several combinations of parameters.  
In this article, we focus our attention on \textsc{GCMP}~\cite{PapadimitriouRST94}, which generalizes both \textsc{Coordinated Motion Planning} and \textsc{GCMP1} by allowing an arbitrary partitioning of the robots into those with destinations and those which act merely as movable obstacles\footnote{While previous hardness results for \textsc{GCMP} considers serial motion of robots (e.g., ~\cite{PapadimitriouRST94}), the hardness applies to parallel motion as well. Here we obtain algorithms for the coordinated motion variant with parallel movement, but note that the results can be directly translated to the serial version.}.
 
\smallskip
\noindent
\textbf{Contribution.}\quad
The aim of this article is to push our understanding of the parameterized complexity of finding energy-optimal schedules beyond the class of solid grids. While this aim was already highlighted in the aforementioned paper on solid grids~\cite[Section 5]{EibenGK23}, the techniques used there are highly specific to that setting and it is entirely unclear how one could generalize them even to the setting of subgrids; a \emph{subgrid} is a subgraph of a solid grid and we define its \emph{height} to be the minimum of its two dimensions.

 As our two main contributions, we provide novel fixed-parameter algorithms (1) for \textsc{GCMP1} parameterized by the number of robots alone (Theorem~\ref{thm:fptrushhour}), and (2) for \textsc{GCMP} parameterized by the number of robots plus the treewidth of the graph (Theorem~\ref{thm:boundedTreewidthAlgorithm}). Theorem~\ref{thm:boundedTreewidthAlgorithm} implies, as a corollary, the fixed-parameter tractability of \cmpl\ parameterized by $k$ plus the minimum dimension of the subgrid.
  
\iflong
\begin{restatable}{theorem}{fptrushhour}
\label{thm:fptrushhour}
  \cmplone{} is \FPT{} parameterized by the number $k$ of robots.
\end{restatable}
\fi
\ifshort
\begin{restatable}{theorem}{fptrushhour}
\label{thm:fptrushhour}
  \cmplone{} is \FPT{} parameterized by the number $k$ of robots.
\end{restatable}
\fi

\begin{restatable}{theorem}{fpttw}
\label{thm:boundedTreewidthAlgorithm}
	{\cmpl} is \FPT\ parameterized by the number of robots and the treewidth of the input graph. 
\end{restatable}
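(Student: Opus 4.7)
The plan is to perform bottom-up dynamic programming over a nice tree decomposition of the input graph, using the paper's additive approximation algorithm (which is \FPT{} in $k$ alone, per the abstract) as a structural tool. Let $k$ denote the number of robots and $w=\tw(G)$. The guiding intuition is that, even though a \cmpl{} schedule is a global temporal object, bounded treewidth restricts how each robot's walk can interact with a bag: because the graph has separators of size at most $w+1$, every time a robot leaves or re-enters the subtree rooted at a node of the decomposition, it must do so through one of at most $w+1$ vertices of the current bag.

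The first step is to run the additive approximation to obtain a feasible schedule $S$ with total energy $\ell_S \le \mathrm{OPT}+f(k)$ for a computable $f$. This yields an upper bound on $\mathrm{OPT}$ and confines the search to schedules whose energy lies in a bounded additive window around $\ell_S$. The second, and most delicate, step is to prove a bounded-crossing structural lemma: there exists an optimal schedule in which each robot crosses each bag $B_t$ of a width-$w$ decomposition at most $g(k,w)$ times, for some computable $g$. The proof would proceed by an exchange argument: whenever a robot crosses $B_t$ more than $g(k,w)$ times, two crossings of the same ``type'' can be located by pigeonhole on (position-in-bag, set of other robots present, remaining work), and the enclosed detour can be short-circuited after re-synchronising the other robots via inserted waits. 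The approximation bound $\ell_S$ is what lets us rule out pathological cases in this exchange, by capping how much detour any supposed optimum can carry.

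With this lemma in hand, I would compute a nice tree decomposition of width $w$ in \FPT{} time (via Bodlaender's algorithm) and carry out the DP. The state at bag $B_t$ encodes: (a)~the robots currently located in $B_t$ together with their positions; (b)~for each robot, a compressed ``crossing profile'' at $B_t$ recording the at most $g(k,w)$ directed crossings it will make together with the relative order in which these crossings interleave with the crossings of other robots; and (c)~a flag marking whether each robot has already completed its journey inside the subtree rooted at $t$. By the structural lemma the number of distinct states per bag is bounded by a computable function of $k$ and $w$. The transitions at introduce, forget, and join nodes verify consistency of crossing profiles between child and parent, enforce collision-freeness at each shared bag vertex, and additively accumulate the energy contributed within the subtree. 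The minimum energy over accepting states at the root then equals $\mathrm{OPT}$, and the overall running time is \FPT{} in $k+w$ times a polynomial in~$n$.

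The main obstacle is the bounded-crossing lemma itself: \cmpl{} schedules couple spatial routes with temporal coordination, so classical ``reroute along a shorter walk'' arguments familiar from static path-packing do not transfer directly, because shortcutting one robot's detour can force other robots to wait and hence collide with still-further robots. Establishing a genuinely schedule-valid exchange step, while simultaneously preserving the near-optimality guarantee inherited from the approximation, is where most of the technical work will live; the remainder of the argument (setting up the nice decomposition and the transition tables) is then a relatively standard, if notationally heavy, exercise in treewidth-based dynamic programming.
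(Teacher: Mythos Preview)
Your overall architecture matches the paper's: run the additive approximation, extract a bounded-crossings structural lemma, and then do dynamic programming over a nice tree decomposition with states that record, at each bag, a bounded-length ``interaction profile'' of the schedule with the separator. The DP you sketch (configuration at the bag, crossing profile per robot, accumulated energy) is essentially the paper's machinery of \emph{signatures}, \emph{checkpoints}, and \emph{checkpoint tuple sequences}; once the number of crossings is bounded by a function of $k$ and $w$, filling in the introduce/forget/join transitions is, as you say, standard.

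Where you diverge from the paper is precisely at the step you flag as the hard one. You propose to obtain the bounded-crossing lemma by an exchange argument (pigeonhole on ``type'' of crossing, then short-circuit and re-synchronise via waits). The paper does \emph{not} do this. Instead, the paper's Theorem~\ref{thm:fptapx} already establishes, as a byproduct of the structural analysis underlying the approximation (nice vertices, havens, the type classification of Lemma~\ref{lem:types}), that in \emph{every} optimal schedule each vertex of $G$ is visited at most $\Oh(k^5)$ times---a bound depending on $k$ alone. Combined with a bag of size $w+1$, this immediately gives $\Oh(wk^6)$ checkpoints per bag (Observation~\ref{obs:boundedSignaturesWithSmallBoundary}), with no further exchange needed. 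So the approximation is used not merely to cap $\mathrm{OPT}$ within an additive window, but to supply the crossing bound outright.

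Your exchange argument, as stated, has a real gap: the pigeonhole classes you list include ``remaining work'', which is not bounded by any function of $k$ and $w$, so the pigeonhole does not fire. Even restricting to (position-in-bag, multiset of other robots' positions in the bag), short-circuiting one robot's detour between two equal-type crossings need not preserve feasibility---other robots may have been relying on that robot's intermediate positions to avoid conflicts, and ``inserting waits'' can cascade into new collisions elsewhere in the schedule. Making such an exchange schedule-valid is exactly the difficulty the paper sidesteps by proving the per-vertex visit bound directly from the haven-based rerouting. If you want to stick with your route, you would need a substantially more careful argument here; the cleaner fix is to invoke the ``every vertex is visited $\Oh(k^5)$ times'' clause of Theorem~\ref{thm:fptapx} and proceed as the paper does.
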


The main technical tool we use to obtain both of these results is a novel fixed-parameter approximation algorithm for \textsc{Generalized Coordinated Motion Planning} parameterized by $k$ alone, where the approximation error is only \emph{additive} in $k$. We believe this result -- summarized in Theorem~\ref{thm:fptapxcorollary} below -- to be of independent interest.

\iflong
\begin{restatable}{theorem}{fptapxcorollary}
\label{thm:fptapxcorollary}
 There is an \FPT{} approximation algorithm for \cmpl{} parameterized by the number $k$ of robots which guarantees an additive error of $\Oh(k^5)$. 
\end{restatable}
\fi
\ifshort
\begin{restatable}{theorem}{fptapxcorollary}
\label{thm:fptapxcorollary}
 There is an \FPT{} approximation algorithm for \cmpl{} parameterized by the number $k$ of robots which guarantees an additive error of $\Oh(k^5)$. 
\end{restatable}
\fi

 The proof of Theorem~\ref{thm:fptrushhour} builds upon Theorem~\ref{thm:fptapxcorollary}. 
For the proof of Theorem~\ref{thm:boundedTreewidthAlgorithm}, we need to combine the approximation algorithm with 
novel insights concerning the ``decomposability'' of schedules along small separators in order to design a treewidth-based dynamic programming algorithm for the problem. A brief summary of the ideas used in this proof is provided at the beginning of Section~\ref{sec:fptapx}.

We complement our positive results which use $k$ as a parameter with an algorithmic lower bound showing that \textsc{Coordinated Motion Planning} is \W[1]-hard (and hence not fixed-parameter tractable under well-established complexity-theoretic assumptions) when parameterized by the energy budget $\ell$. This result (Theorem~\ref{thm:wone} below) contrasts the fixed-parameter tractability of the problem under the same parameterization when restricted to solid grids~\cite[Theorem 19]{EibenGK23}. 

\begin{restatable}
{theorem}{wone}
\label{thm:wone}
 \cmpl\ is \W\textnormal{[1]}-hard when parameterized by $\ell$.
\end{restatable}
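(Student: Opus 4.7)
The plan is to prove \W[1]-hardness by an FPT-reduction from \textsc{Multicolored Clique} (MCC), which is \W[1]-hard when parameterized by the clique size $k$. Given an MCC instance $(H, k)$ with color classes $V_1, \ldots, V_k$, I would construct an equivalent \cmpl\ instance whose energy budget $\ell$ depends only on $k$ (in particular, on the order of $k^2$).

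The construction would employ two types of gadgets. First, for each color class $V_i$, a \emph{selector gadget} consisting of a single selector robot $r_i$ whose start vertex $s_i$ and destination $t_i$ are joined by $|V_i|$ internally disjoint ``choice branches'' of equal length, each labeled by a vertex of $V_i$. The tight budget forces $r_i$ to traverse exactly one branch, thereby committing to a unique vertex of $V_i$. Second, for each unordered pair $\{i,j\}$, a \emph{verifier gadget} containing a constant number of destination-free ``blocker'' robots (which are precisely the movable obstacles enabled by the generalized setting of \cmpl). These blockers would be placed so as to obstruct portions of the choice branches of $r_i$ and $r_j$, and they can be displaced to nearby parking positions with bounded energy if and only if the branches chosen by $r_i$ and $r_j$ correspond to an edge of $H$ between $V_i$ and $V_j$.

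With $\ell$ set to a constant per selector plus a constant per verifier (hence $\ell = \Oh(k^2)$), I would argue the two directions of the equivalence. In the forward direction, a multicolored clique $\{u_1, \ldots, u_k\}$ in $H$ yields a schedule in which each $r_i$ traverses the branch of $u_i$ and each verifier blocker is parked at the position certifying the edge $\{u_i, u_j\}$, all within the budget. In the reverse direction, the tightness of $\ell$ forces each $r_i$ to commit to a single branch and every verifier blocker to settle at a position certifying an edge of $H$, so that the selected vertices induce a multicolored clique.

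The main obstacle will be designing the verifier gadgets so that gadget costs remain strictly additive and robots cannot interact across gadgets in unintended ways — for instance, by having a selector carry a blocker partway through a verifier, or by having the motion of one selector relieve another from fully traversing its branch. Addressing this typically requires enclosing each choice branch in a thin corridor, equipping the gadget boundaries with additional guard robots whose start and destination coincide, and ensuring that any unauthorized crossing incurs a cost strictly exceeding the remaining budget. Formalizing this cost accounting, and verifying that the gadget collection enforces exactly the MCC constraints and nothing more, will constitute the bulk of the technical work.
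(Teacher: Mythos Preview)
Your proposal identifies the right source problem and the right overall shape (tight energy budget forcing a unique choice per color class, plus pairwise edge verification), but it is a plan rather than a proof: the verifier gadgets are left entirely unspecified, and you yourself flag their design and the additivity accounting as ``the bulk of the technical work.'' In particular, the hard part of your scheme is coupling the choice made by selector $r_i$ to every verifier involving color $i$. With one selector per color and independent verifiers per pair, nothing in your outline forces the $\binom{k}{2}$ verifiers that mention color $i$ to agree on the same vertex of $V_i$; achieving that consistency with only $O(1)$ extra energy per gadget is precisely where naive constructions break, and your sketch does not say how you intend to solve it.

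The paper's reduction sidesteps this difficulty entirely, and it is worth seeing how. Instead of one selector per color and separate verifiers, it places a \emph{blocking} robot with start $=$ destination at every vertex of the original graph, attaches a single pendant to each such vertex, subdivides every edge $\kappa^3$ times, and introduces one \emph{clique} robot $r^{i,j}$ per pair whose start $s_{i,j}$ is adjacent to all of $V_i$ and whose destination $t_{i,j}$ is adjacent to all of $V_j$. The budget is set to exactly $\binom{\kappa}{2}(\kappa^3+3)+2\kappa$: the first term is the bare shortest-path cost for all clique robots, and the slack $2\kappa$ allows precisely one blocking robot per color class to step to its pendant and back. Edge verification is handled by the graph itself (a clique robot can only traverse an existing subdivided edge), and consistency across pairs is automatic because all clique robots passing through $V_i$ must reuse the single vertex whose blocker was moved. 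There are no verifier gadgets, no free robots, and no interaction arguments beyond a one-line lower bound on the cost of any schedule. As a bonus, since $\F=\emptyset$ in this construction, the hardness already holds for ordinary \textsc{Coordinated Motion Planning}, which is stronger than what your approach (relying on destination-free blockers) would yield.
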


While Theorem~\ref{thm:wone} establishes the intractability of the problem when parameterized by the energy on general graphs, one can in fact show that {\cmpl} is fixed-parameter tractable under the same parameterization when the graphs are ``well-structured'' in the sense of having bounded \emph{local treewidth}~\cite{Eppstein00,Grohe03_localtw}. This implies fixed-parameter tractability, e.g., on graphs of bounded genus and generalizes the aforementioned result on grids~\cite[Theorem 19]{EibenGK23}.

\begin{restatable}{theorem}{fptenergy}
\label{thm:FPTbyEnergyOnBoundedLocalTreewidth}
	{\cmpl} is \FPT\ parameterized by $\ell$ on graph classes of bounded local treewidth.
\end{restatable}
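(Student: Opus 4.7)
The plan is to leverage the energy budget $\ell$ to localize an optimal schedule to a subgraph of bounded treewidth, and then invoke (an adaptation of) Theorem~\ref{thm:boundedTreewidthAlgorithm}.

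First I would observe that the set $A$ of \emph{active} robots (those whose destination differs from their start) has $|A| \leq \ell$, as each active robot must move and thus contributes at least $1$ to the energy. More strongly, any schedule of energy $\leq \ell$ moves at most $\ell$ robots in total, and every moving robot stays within distance $\ell$ of its start. In an energy-optimal schedule, a ``who moved whom'' chaining argument---each moving obstacle vacates a vertex needed by another moving robot, ultimately traceable to an active robot---combined with the fact that the sum of all individual travel distances is at most $\ell$, shows that the start of every moving robot lies within distance $2\ell$ of some $\mathrm{start}(a)$ with $a \in A$. Consequently the set $V^{\star}$ of vertices ever occupied by a moving robot satisfies $V^{\star} \subseteq U := \bigcup_{a \in A} B_{3\ell}(\mathrm{start}(a))$, so it suffices to solve the problem on $G[U]$.

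Next I would bound $\operatorname{tw}(G[U])$. Partition the balls $B_{3\ell}(\mathrm{start}(a))$ by the connected components of their overlap graph; each cluster consists of at most $\ell$ pairwise-overlapping balls whose centers, by a spanning-tree/chain argument, all lie within distance $\Oh(\ell^{2})$ of one another, so the whole cluster is contained in a single ball of radius $\Oh(\ell^{2})$. The bounded-local-treewidth hypothesis (witnessed by some function $f$) then bounds the treewidth of this ball by $f(\Oh(\ell^{2}))$, and since distinct clusters are vertex-disjoint in $G[U]$ we obtain $\operatorname{tw}(G[U]) \leq g(\ell)$ for some computable $g$.

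The final step is to apply Theorem~\ref{thm:boundedTreewidthAlgorithm} to the restricted instance on $G[U]$. A subtlety is that $G[U]$ may still contain many \emph{stationary} obstacle robots, so the total robot count is not bounded by $\ell$; however, at most $\ell$ of them actually move, and the remainder act as fixed forbidden vertices. I would therefore refactor the tree-decomposition DP underlying Theorem~\ref{thm:boundedTreewidthAlgorithm} so that each bag state records, at each of the $\leq \ell$ time steps, only the positions of the (at most $\ell$) moving robots within the bag plus an $\Oh(\ell)$-valued counter tracking moving robots already committed in the rest of the decomposition, while stationary obstacles are encoded as ``blocked'' vertices read directly from the input. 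The main obstacle is this refactoring: whether a given input robot is a mover or a stayer is a global property of the schedule, so the bag-join step must consistently merge local guesses made in different subtrees. The resolution is to guess, at each bag, which of the $\Oh(g(\ell))$ incident input robots are declared movers---a function-of-$\ell$ number of choices per bag---while the global $\Oh(\ell)$ counter enforces the overall cap of $\ell$ movers across the entire tree. Once this is in place, the DP runs in $h(\ell) \cdot n^{\Oh(1)}$ time, yielding the theorem.
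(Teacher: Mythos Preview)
Your localization step and treewidth bound are essentially what the paper does (the paper uses a cleaner connected-component argument---delete all routes in any component of the ``used-edge'' subgraph that misses every active start, contradicting optimality---and obtains $\ell$-balls rather than $3\ell$-balls, but the outcome is the same: a subgraph $H$ of treewidth bounded by a function of~$\ell$).

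Where you diverge from the paper is the final step. The paper does \emph{not} invoke or refactor Theorem~\ref{thm:boundedTreewidthAlgorithm} at all; instead it observes that after compressing idle time steps the makespan is at most $\ell$ and at most $\ell$ robots move, then guesses the combinatorial ``shape'' of the schedule (the number $\alpha\le\ell$ of moving robots, the route lengths $d_1,\dots,d_\alpha$, and for every pair of positions $v^i_j,v^p_q$ whether they coincide), labels starting vertices of robots in $H$, and writes a single MSO sentence of size $f(\ell)$ asserting the existence of vertices realizing this shape as pairwise non-conflicting walks that respect the labels. Courcelle's theorem then finishes the job on $H$. This sidesteps entirely the issue you flag---that $H$ may contain unboundedly many stationary robots---because stationary robots are handled by a single labelled-vertex predicate in the formula rather than by any per-robot DP state.

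Your proposed refactoring of the DP is plausible but underspecified at exactly the point you call ``the main obstacle''. The difficulty is not just counting movers but tracking their \emph{identity}: a non-active robot that moves must return to its own start, so a mover slot committed in one subtree must be matched to a specific start vertex when that vertex is forgotten, and this identity must be carried consistently through joins. Your sketch (local per-bag guesses plus a global counter) does not say how slot--robot correspondence is maintained across the decomposition. This can likely be fixed (e.g., by recording for each of the $\le\ell$ slots whether its start vertex has already been forgotten and, if not, which bag vertex it is), but the resulting DP is a new algorithm rather than an adaptation of Theorem~\ref{thm:boundedTreewidthAlgorithm}, whose state space and checkpoint machinery are tied to the total robot count~$k$ via Theorem~\ref{thm:fptapxcorollary}. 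The paper's MSO route buys you a complete proof in a paragraph; your route would buy a self-contained DP but at the cost of substantially more bookkeeping than your proposal currently provides.
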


\subparagraph{Further Related Work.}
\label{subsec:relatedwork}
        As surveyed above, the computational complexity of \cmpl has received significant attention, particularly by researchers in the fields of computational geometry, AI/Robotics, and theoretical computer science. The problem has been shown to remain \NP-hard under a broad set of restrictions, including on graphs where only a single vertex is not occupied~\cite{goldreich2011finding}, on grids~\cite{banfi}, and bounded-height subgrids~\cite{GeftHalperin}. On the other hand, a feasibility check for the existence of a schedule can be carried out in polynomial time~\cite{YuR14}. The recent AAMAS blue sky paper~\cite{salzman2020research} also highlighted the need of understanding the hardness of the problem and asked for a deeper investigation of its parameterized complexity.

\section{Terminology and Problem Definition}
The graphs considered in this paper are undirected simple graphs. We assume familiarity with the standard graph-theoretic concepts and terminology~\cite{Diestel12}. For a subgraph $H$ of a graph $G$ and two vertices $u, v \in V(H)$, we denote by $\dist_H(u, v)$ the length of a shortest path in $H$ between $u$ and $v$. We write $[n]$, where $n \in \nat$, for $\{1, \ldots, n\}$.

\label{sec:prelim}
\ifshort
We also assume basic familiarity with parameterized complexity theory, including \emph{fixed-parameter tractability}, \emph{parameterized reductions}, and the class \W[1]
 \cite{DowneyFellows13,CyganFKLMPPS15}. 
\fi
\iflong
\subparagraph*{Parameterized Complexity.}
 A {\it parameterized problem} $Q$ is a subset of $\Omega^* \times
\nat$, where $\Omega$ is a fixed alphabet. Each instance of $Q$ is a pair $(I, \kappa)$, where $\kappa \in \nat$ is called the {\it
parameter}. A parameterized problem $Q$ is
{\it fixed-parameter tractable} (\FPT)~\cite{CyganFKLMPPS15,DowneyFellows13,FlumGrohe06}, if there is an
algorithm, called an {\em \FPT-algorithm},  that decides whether an input $(I, \kappa)$
is a member of $Q$ in time $f(\kappa) \cdot |I|^{\bigoh(1)}$, where $f$ is a computable function and $|I|$ is the input instance size.  The class \FPT{} denotes the class of all fixed-parameter tractable parameterized problems.

Showing that a parameterized problem is hard for the complexity classes \W[1] or \W[2] rules out the existence of a fixed-parameter algorithm under well-established complexity-theoretic assumptions. Such hardness results are typically established via a \emph{parameterized reduction}, which is an analogue of a classical polynomial-time reduction with two notable distinctions: a parameterized reduction can run in time $f(k)\cdot n^{\bigoh(1)}$, but the parameter of the produced instance must be upper-bounded by a function of the parameter in the original instance. We  refer to~\cite{CyganFKLMPPS15,DowneyFellows13} for more information on parameterized complexity. 
\fi

\subparagraph*{Treewidth.}
Treewidth is a structural parameter that provides a way of expressing the resemblance of a graph to a forest. 
 Formally, the treewidth of a graph is defined via the notion of tree decompositions as follows.

\begin{our_definition}[{\bf Tree decomposition}]\label{def:treewidth}
{\rm A \emph{tree decomposition} of a graph $G$ is a pair $(T,\beta)$ of a tree $T$ and $\beta: V(T) \rightarrow 2^{V(G)}$,
such that: 
\begin{itemize}
\item $\bigcup_{t \in V(T)} \beta(t) = V(G)$, 
\item for any edge $e \in E(G)$, there exists a node $t \in V(T)$ such that both endpoints of $e$ belong to $\beta(t)$, and
\item for any vertex $v \in V(G)$, the subgraph of $T$ induced by the set $T_v = \{t\in V(T): v\in\beta(t)\}$ is a tree.
\end{itemize}
The {\em width} of $(T,\beta)$ is $\max_{v\in V(T)}\{|\beta(v)|\}-1$. The {\em treewidth} of $G$ is the minimum width of a tree decomposition of $G$.}
\end{our_definition}

Let $(T,\beta)$ be a tree decomposition of a graph $G$. We refer to the vertices of the tree $T$ as \emph{nodes}. We always assume that $T$ is a rooted tree and hence, we have a natural parent-child and ancestor-descendant relationship among nodes in $T$. A \emph{leaf} node nor a \emph{leaf} of $T$ is a node  with degree exactly one in $T$ which is different from the root node. All the nodes of $T$ which are neither the root node or a leaf are called \emph{non-leaf} nodes. 
The set $\beta(t)$ is called the \emph{bag} at $t$.
  For two nodes $u,t \in  T$, we say that $u$ is a {\em descendant} of $t$, denoted $u \preceq t$, if $t$ lies on the unique path connecting $u$ to the root. Note that every node is its own descendant. If $u\preceq t$ and $u\neq t$, then we write $u\prec t$.
 For a tree decomposition $(T,\beta)$ we also have a mapping $\gamma:V(T)\to 2^{V(G)}$ defined as $\gamma(t)=\bigcup_{u\preceq t} \beta(u)$.

  We use the following structured tree decomposition in our algorithm. 
  
\begin{our_definition}[{\bf Nice tree decomposition}]\label{def:niceTreeDec}{\rm 
Let $(T,\beta)$ be a tree decomposition of a graph $G$, where $r$ is the root of $T$. The tree decomposition $(T,\beta)$ is called a \emph{nice tree decomposition} if the following conditions are satisfied. 
 \begin{enumerate}
 \item $\beta(r) =\emptyset$ and $\beta(\ell) =\emptyset$ for every leaf node $\ell$ of $T$; and
 \item every non-leaf node (including the root node) $t$ of $T$ is of one of the following types:
 \begin{itemize}
 \item {\bf Introduce node:} The node $t$ has exactly one child $t'$ in $T$ and $\beta(t)=\beta(t') \cup \{v\}$, where $v \notin \beta(t')$.
  \item {\bf Forget node:} The node $t$ has exactly one child $t'$ in $T$ and $\beta(t)=\beta(t') \setminus \{v\}$, where $v \in \beta(t')$.
  \item {\bf Join node:} The node $t$ has exactly two children $t_1$ and $t_2$ in $T$ and $\beta(t)=\beta(t_1) =\beta(t_2)$.
 \end{itemize}
 \end{enumerate}
 }
\end{our_definition}

\ifshort
A graph class closed under vertex and edge deletion is said to have \emph{bounded local treewidth}~\cite{Eppstein00,Grohe03_localtw} if the treewidth of each graph in the class is upper-bounded by a function of its diameter. Examples of classes of bounded local treewidth include, e.g., graphs of bounded genus~\cite{BonnetDGKMST22}.
\fi
 \iflong

Efficient fixed-parameter algorithms are known for computing a nice tree-decomposition of near-optimal width:

\begin{longproposition}[\cite{Korhonen21}]\label{fact:findtw} 
	There exists an algorithm which, given an $n$-vertex graph $G$ and an integer~$k$, in time $2^{\bigoh(k)}\cdot n$ either outputs a nice tree-decomposition of $G$ of width at most $2k+1$ and $\bigoh(n)$ nodes, or determines that $\tw(G)>k$.
\end{longproposition}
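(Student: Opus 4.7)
The plan is to follow the recent single-exponential $2$-approximation algorithm for treewidth due to Korhonen, whose crux is an efficient balanced-separator routine combined with a standard divide-and-conquer construction of a tree decomposition.

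First I would isolate a subroutine that, given a graph $H$ and a set $W \subseteq V(H)$ with $|W| \leq 2k+2$, in time $2^{\bigoh(k)}\cdot |V(H)|$ either outputs a set $S \subseteq V(H)$ of size at most $k+1$ that is a balanced $W$-separator (every connected component of $H-S$ contains at most $2|W|/3$ vertices of $W$), or certifies that $\tw(H) > k$. This is the technical heart of the result: getting single-exponential rather than $k^{\bigoh(k)}$ dependence requires Korhonen's iterative refinement scheme, where one maintains a candidate separator and improves it through a bounded number of local augmenting steps guided by a Menger-style min-cut argument, charging each refinement against a potential function on the evolving candidate.

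Second, I would plug this subroutine into a recursive divide-and-conquer. Starting with $W = \emptyset$, compute a balanced $W$-separator $S$, form a bag $\beta(t) := W \cup S$ of size at most $2k+2$, and recurse on each connected component $H'$ of $H - S$ with updated boundary $(W \cap V(H')) \cup S$. The geometric shrinking of the $W$-mass bounds the recursion depth by $\Oh(\log n)$, and gluing the child decompositions under the root bag yields a tree decomposition of width at most $2k+1$ with $\Oh(n)$ nodes in total time $2^{\bigoh(k)} \cdot n$.

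Finally, I would convert the resulting tree decomposition into a nice one via the standard transformation: subdivide each edge of $T$ into a chain of introduce/forget nodes that modify the bag by one vertex at a time, duplicate bags where needed to turn high-degree nodes into binary join nodes, and attach empty-bag leaves and an empty-bag root. This transformation preserves the width and increases the node count by only a linear factor, so the final bounds of $\Oh(n)$ nodes and width at most $2k+1$ are maintained. The main obstacle is the separator subroutine itself: the combinatorial analysis showing that the refinement process terminates within a $2^{\bigoh(k)}$ budget per processed vertex, rather than the naive $k^{\bigoh(k)}$ one gets from enumerating candidate separators, is the essential new idea and is where all the difficulty lies.
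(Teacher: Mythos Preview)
The paper does not prove this statement at all: it is stated as a proposition with a citation to Korhonen~\cite{Korhonen21} and used as a black box, with no accompanying argument. Your outline is a reasonable high-level sketch of Korhonen's single-exponential $2$-approximation (balanced-separator subroutine, recursive decomposition, then conversion to a nice tree decomposition), so in that sense it is ``correct'' as a summary of the cited work, but there is nothing in the paper itself to compare your approach against.
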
  

{\noindent\bf Boundaried graphs.} Roughly speaking, a boundaried graph is a graph where some vertices are annotated. A formal definition is as follows. 

\begin{longdefinition}[{Boundaried Graph}]\label{def:boundariedGraph}
{\rm A {\em boundaried graph} is a graph $G$ with a set $X\subseteq V(G)$ of distinguished vertices called {\em boundary vertices}. The set $X$ is the {\em boundary} of $G$. A boundaried graph $(G,X)$  is called  a {\em $p$-boundaried} graph if $|X|\leq p$.}
\end{longdefinition}

\begin{longdefinition}\label{def:boundariedSubgraphs}
{\rm 	A {\em $p$-boundaried subgraph} of  a graph $G$ is a $p$-boundaried graph $(H,Z)$ such that (i) $H$ is a vertex-induced subgraph of $G$ and $Z$ separates $V(H)\setminus Z$ from $V(G)\setminus V(H)$.  }
\end{longdefinition}

 \begin{longdefinition}
 {\rm 
Let $G$ be a graph with a tree decomposition $(T,\beta)$ and let $x\in V(T)$.  
\begin{itemize} \item $G_{x,T}^\uparrow$ denotes the boundaried graph obtained by taking the subgraph of $G$ induced by the set $V(G)\setminus  (\gamma(x)\setminus \beta(x))$, with $\beta(x)$ as the boundary. 
\item Similarly, 
$G_{x,T}^\downarrow$ denotes the boundaried graph obtained by taking the subgraph of $G$ induced by $\gamma(x)$ with $\beta(x)$ as the boundary.
\end{itemize}  
 }
\end{longdefinition}

Notice that if $G$ is a graph with a tree decomposition $(T,\beta)$ of width at most $k$ and $x\in T$, then $G_x^\uparrow$ and $G_x^\downarrow$ are both ($k+1$)-boundaried graphs.

\fi

\subparagraph*{Problem Definition.}
In our problems of interest, we are given an undirected graph $G$ and a set $\R=\{R_1, R_2, \ldots, R_k\}$ of $k$ robots where $\R$ is partitioned into two sets $\M$ and $\F$. Each $R_i \in \M$ has a starting vertex $s_i$ and a destination vertex $t_i$ in $V(G)$ and each $R_i \in \F$ is associated only with a starting vertex $s_i \in V(G)$.
We refer to the elements in the set $\{s_i\mid i\in [k]\}\cup \{t_i\mid R_i\in \M\}$ as {\em terminals}. The set $\M$ contains robots that have specific destinations they must reach, whereas $\F$ is the set of remaining ``free'' robots. We assume that all the $s_i$ are pairwise distinct and that all the $t_i$ are pairwise distinct.  At each time step, a robot may either move to an adjacent vertex, or stay at its current vertex, and robots may move simultaneously.  We use a discrete time frame $[0, t]$, $t \in \nat$, to reference the sequence of moves of the robots and in each time step $x \in [0, t]$ every robot remains stationary or moves.

A \emph{route} for robot $R_i$ is a tuple $W_i=(u_0, \ldots, u_{t})$ of vertices in $G$ such that (i) $u_0=s_i$ and $u_{t}=t_i$ if $R_i \in \M$ and (ii) $\forall j \in [t]$, either $u_{j-1} =u_j$ or $u_{j-1}u_{j} \in E(G)$.
         Put simply, $W_i$ corresponds to a ``walk'' in $G$, with the exception that consecutive vertices in $W_i$ may be identical (representing waiting time steps), in which $R_i$ begins at its starting vertex at time step $0$, and if $R_i \in \M$ then $R_i$ reaches its destination vertex at time step $t$. Two routes $W_i=(u_0, \ldots, u_{t})$ and $W_j=(v_0, \ldots, v_{t})$, where $i \neq j \in [k]$, are \emph{non-conflicting} if (i) $\forall r \in \{0, \ldots, t\}$, $u_r \neq v_r$, and (ii) $\nexists r \in \{0, \ldots, t-1\}$ such that $v_{r+1} =u_r$ and $u_{r+1} =v_r$. Otherwise, we say that $W_i$ and $W_j$ \emph{conflict}. Intuitively, two routes conflict if the corresponding robots are at the same vertex at the end of a time step, or go through the same edge (in opposite directions) during the same time step. 

A \emph{schedule} $\SSS$ for $\R$ is a set of pairwise non-conflicting routes $W_i, i \in [k]$, during a time interval $[0, t]$.  The (\emph{traveled}) \emph{length} of a route (or its associated robot) within $\SSS$ is the number of time steps $j$ such that $u_j\neq u_{j+1}$. The \emph{total traveled length} of a schedule is the sum of the lengths of its routes; this value is often called the \emph{energy} in the literature (e.g., see~\cite{socg2021}).

Using the introduced terminology, we formalize the {\sc Generalized Coordinated Motion Planning with Energy Minimization} (\cmpl) problem below.  

\begin{problem}[]{{ \cmpl}}
\Input & A tuple $(G, \R=(\M,\F), k, \ell)$, where $G$ is a graph, $\R=\{R_i \mid i \in [k]\}$ is a set of robots partitioned into sets $\cal M$ and $\cal F$, where each robot in $\cal M$ is given as a pair of vertices $(s_i, t_i)$  and each robot in $\cal F$ as a single vertex $s_i$, and $k, \ell \in \nat$.\\
 \Prob & Is there a schedule for $\R$ of total traveled length at most $\ell$?
\end{problem}

By observing that the feasibility check of Yu and Rus~\cite{YuR14} transfers seamlessly to the case where some robots do not have destinations, we obtain the following.

\begin{proposition} [\cite{YuR14}]
\label{prop:reconfiguration}
    The existence of a schedule for an instance of \cmpl{} can be decided in linear time. Moreover, if such a schedule exists, then a schedule with total length traveled of $\Oh(|V(G)|^3)$ can be computed in $\Oh(|V(G)|^3)$ time. 
 \end{proposition}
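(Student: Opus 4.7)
The plan is to obtain the result by reducing \cmpl{} to the standard coordinated motion planning problem (in which every robot has a prescribed destination) and invoking the feasibility and construction framework of Yu and Rus~\cite{YuR14}, which decides reachability in linear time and, when the answer is affirmative, produces a realising schedule of total length $\Oh(|V(G)|^3)$ in $\Oh(|V(G)|^3)$ time.

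The reduction reformulates \cmpl{} feasibility as the question: does there exist a completion of the partial target configuration (i.e., an assignment of each $R_i\in\F$ to a vertex, with all targets pairwise distinct and disjoint from $\{t_j : R_j\in\M\}$) which is reachable from $(s_i)_{i\in[k]}$ under the standard labelled motion? My first step is to use the Yu-Rus characterisation of reachable configurations, which proceeds component-by-component and depends on only a constant number of combinatorial invariants per connected component of $G$ (related to the component's biconnected/cycle structure and to parities of induced permutations on labelled tokens). Crucially, these invariants are insensitive to permutations of the F-robots among themselves, so once the M-robots are pinned at their destinations there are only $\Oh(1)$ distinct reachable equivalence classes to probe within each component.

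Concretely, I would assign each $R_i\in\F$ to an arbitrary vacant vertex of the connected component of $G$ containing $s_i$ and then test the resulting fully-labelled target configuration via the Yu-Rus algorithm in linear time. If the test fails, I iterate over a constant-size list of alternative local completions per component that together exhaust the reachable equivalence classes compatible with the fixed M-placement; the instance is feasible if and only if one of these completions passes the test. Given a successful completion, invoking the constructive part of their algorithm yields the desired schedule of length $\Oh(|V(G)|^3)$ in $\Oh(|V(G)|^3)$ time. The main subtlety, rather than a true obstacle, lies in the case analysis for trees and for cycles of specific lengths, where the reachability equivalence classes are more constrained and must be enumerated explicitly; all of this bookkeeping is already laid out in~\cite{YuR14} and can be reused directly without affecting the stated time bounds.
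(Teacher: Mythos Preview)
The paper does not actually prove this proposition; it is stated as a citation of Yu and Rus~\cite{YuR14}, preceded only by the one-line remark that their feasibility check ``transfers seamlessly to the case where some robots do not have destinations.'' In other words, the paper treats the extension from the fully-labelled setting to \cmpl{} as immediate and defers entirely to~\cite{YuR14} for both the linear-time decision procedure and the $\Oh(|V(G)|^3)$ constructive bound.

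Your proposal, by contrast, supplies an explicit reduction: complete the partial target configuration by assigning destinations to the free robots, then invoke the fully-labelled algorithm, iterating over a bounded set of candidate completions if necessary. This is a reasonable way to make the paper's ``seamless'' explicit, and the high-level idea is sound. The one place your sketch is looser than it should be is the assertion that only $\Oh(1)$ equivalence classes of completions per component need to be probed. For 2-connected non-cycle components this is trivially true (all permutations are reachable once there is a hole), and for trees the completion is essentially forced by the rigidity of the order; but for a cycle component the reachable configurations are exactly the cyclic rotations, and checking whether \emph{some} rotation is consistent with the $\M$-placements is not literally a choice among $\Oh(1)$ completions---it is a linear-time check on cyclic order and spacing. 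This does not break your time bound, but the ``$\Oh(1)$ classes'' phrasing oversells the uniformity of the argument. Since the paper itself argues none of this and simply cites~\cite{YuR14}, your account is already more detailed than what appears there.
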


Proposition~\ref{prop:reconfiguration} implies inclusion in \NP, and allows us to assume henceforth that every instance of \cmpl{} is feasible (otherwise, in linear time we can reject the instance). 
     We denote by \cmplone\ the restriction of \cmpl\ to instances where $|\M|=1$.
 We remark that even though \cmpl is stated as as a decision problem, all the algorithms provided in this paper are constructive and can output a corresponding schedule (when it exists) as a witness.

\section{An Additive \FPT{} Approximation for \cmpl}
\label{sec:fptapx}
In this section, we give an \FPT{} approximation algorithm for \cmpl{} with an additive error that is a function of the number $k$ of robots.

We start by providing a high-level, low-rigor intuition for the main result of this section. Let $\I=(G, \R=(\M,\F), k, \ell)$ be an instance of \cmpl{}. Ideally, we would like to route the robots in $\M$ along shortest paths to their destinations, while having the other robots move away only a ``little'' ($\epsilon(k)$-many steps) to unblock the shortest paths for the robots in $\M$. Unfortunately, this might not be possible as it is easy to observe that a free robot might have to travel a long distance in order to unblock the shortest paths of other robots. (For instance, think about the situation where a free robot is positioned in the middle of a long path of degree-2 vertices that the shortest paths traverse.) However, we will show that such a situation could only happen if the blocking robots are positioned in simple graph structures containing a long path of degree-2 vertices. 

We then exploit these simple structures to ``guess'' in \FPT-time, for each robot, a location which it visits in an optimal solution and which is in the vicinity of a safe location, called a ``haven''; this haven is centered around a ``nice'' vertex of degree at least 3, and allows the robot to avoid any passing robot within $\epsilon(k)$ moves. We show how to navigate the robots to these havens optimally in the case of \cmplone{}, and with an $\epsilon(k)$ overhead in the case of \cmpl{}. Moreover, this navigation is well-structured and can be leveraged to show that no robot in an optimal solution will visit the same vertex many times. 
A similar navigation takes place at the end, during the routing of the robots from their havens to their destinations.

Once
we obtain such a reduced instance in which all starting positions and destinations of the
 robots are in havens, we can use our intended strategy to navigate each robot in $\M$ along a shortest path,
 with only an $\epsilon(k)$ overhead, which immediately gives us the approximation result and the property that no robot visits any vertex more than $\epsilon(k)$ times in an optimal solution. 
This latter property about the optimal solution is crucial, as we exploit it later in Section~\ref{sec:treewidth} to design an intricate dynamic programming algorithm over a tree decomposition of the input graph.

In addition, each free robot moves at most $\epsilon(k)$ times in the considered solution for the reduced instance.
This, together with the equivalence between $\I$ and the reduced instance in the case of \cmplone{},  allow us to restrict the movement of the free robots in an optimal solution to only $\epsilon(k)$-many locations, which we use in Section~\ref{sec:rushhour} to obtain the \FPT{} algorithm for \cmplone{}.

 We start by defining the notion of a nice vertex and its haven. 
  
\begin{our_definition}[Nice Vertex]
\label{def:nicevertex}
  A vertex $w\in V(G)$ is \emph{nice} if there exist three connected subgraphs $C_1, C_2, C_3$ of $G$ such that: (i) the pairwise intersection of the vertex sets of any pair of these subgraphs is exactly $w$, and (ii) $|V(C_1)| \geq k+1$, $|V(C_2)| \geq k+1$, and $|V(C_3)| \geq 2$. If $w$ is nice, let $x \in V(C_3)$ be a neighbor of $w$, and define the \emph{haven} $H_w$ of $w$ to be the subgraph  of $G$ induced by the vertices in $\{x\} \cup V(C_1) \cup V(C_2)$ whose distance from $w$ in $H_w$ (i.e., $\dist_{H_v}(w, u)$, for $u \in V(C_1) \cup V(C_2)$) is at most $k$. 
\end{our_definition}

For a set $S \subseteq \R$ of robots and a subgraph $H$, a \emph{configuration} of $S$ w.r.t.~$H$ is an injection $\iota: \  S \longrightarrow V(H)$. The following lemma shows that we can take the robots in a haven from any configuration to any other configuration in the haven, while incurring a total of $\Oh(k^3)$ travel (in the haven) length.

\begin{lemma}
\label{lem:swap}
Let $w$ be a nice vertex, let $C_1, C_2, C_3$ be three subgraphs satisfying conditions (i) and (ii) of Definition~\ref{def:nicevertex}, and let $H_w$ be a haven for $w$. 
Then for any set of robots $S \subseteq \R$ in $H_w$ with current configuration $\iota(S)$, any configuration $\iota'(S)$ with respect to $H_w$ can be obtained from $\iota(S)$ via a sequence of $\Oh(k^3)$ moves that take place in $H_w$.
\end{lemma}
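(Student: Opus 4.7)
The plan is to reduce the rearrangement to a bounded-size token-sliding problem on a carefully chosen subtree of $H_w$. As a preliminary, let $B_i := H_w\cap C_i$ for $i\in\{1,2\}$. Because $C_i$ is connected with $|V(C_i)|\geq k+1$, a BFS from $w$ inside $C_i$ supplies a sequence $w, v_1, v_2, \dots$ of distinct vertices that either exhausts $V(C_i)$ (in which case all $\geq k+1$ vertices lie within distance at most $k$ of $w$) or reaches depth $k$ (in which case $k+1$ distinct vertices appear on a single path from $w$). Either way $|V(B_i)|\geq k+1$, so $B_1$ and $B_2$ each contain an unoccupied vertex in any configuration of at most $k$ robots, and the extra vertex $x$ adjacent to $w$ provides an additional parking slot.

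Next I would build a compact workspace $T^\star\subseteq H_w$ as follows. Let $U=\iota(S)\cup\iota'(S)$, so $|U|\leq 2k$. Take a shortest-path tree rooted at $w$ in $H_w$, restrict it to the union of the shortest $w$--$u$ paths for $u\in U$, and then augment it with one additional free vertex from each of $B_1$ and $B_2$ (which exist since $|B_i|\geq k+1>|S|$) together with $x$. Every path from $w$ used here has length at most $k$, so $T^\star$ is a subtree on $\Oh(k^2)$ vertices containing every starting and target vertex as well as at least one unoccupied vertex in each branch of $H_w$.

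To realise $\iota'$ from $\iota$ inside $T^\star$, I would process the robots one at a time in an order that respects a leaves-to-root traversal of $T^\star$ (targets closer to the leaves are filled first), maintaining the invariant that after iteration $i$ the robots $r_1,\dots,r_i$ sit at their target positions. In iteration $i$, robot $r_i$ is routed from its current position to $\iota'(r_i)$ along a path of length $\Oh(k)$ in $T^\star$; whenever this path is blocked by another robot, that robot is temporarily pushed into the free vertex of the opposite branch through $w$ in $\Oh(k)$ moves (and later restored if it has already been placed). The path has at most $\Oh(k)$ blockers, so each iteration costs $\Oh(k^2)$ moves; summing over the $m\leq k$ robots yields the claimed $\Oh(k^3)$ bound, and since $T^\star\subseteq H_w$ every move is legal in $H_w$.

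The main obstacle will be ensuring that a suitable parking vertex is always available in the opposite branch during the push operations. This is guaranteed by the slack built into $T^\star$: at any moment at most $|S|\leq k$ vertices are occupied and $|V(B_1)|, |V(B_2)|\geq k+1$, so each branch retains at least one free vertex, while $x$ absorbs the corner case in which the temporary displacement must be held briefly off the $B_1$--$B_2$ axis because $w$ is itself occupied. Verifying that this invariant survives every push-and-restore step is the only nontrivial bookkeeping; once it is in place, the $\Oh(k^3)$ bound follows immediately from the per-iteration accounting.
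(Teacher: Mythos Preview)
Your overall strategy---pick a bounded-size spanning tree $T^\star$ inside $H_w$, then route the robots one by one in a leaves-to-root order while pushing blockers into the ``opposite'' branch---is a reasonable alternative to the paper's three-phase scheme (consolidate everything into $T_1$, then route the $T_2$-destined robots, then the rest). However, the push step as you describe it has a real gap.

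Consider the simplest case: $k=2$, $B_1$ is the path $w\text{--}a\text{--}b$, $B_2$ is the path $w\text{--}d\text{--}e$, and $x$ is the pendant. Robot $R_1$ starts at $a$ with target $b$; robot $R_2$ starts at $b$ with target $a$. In your leaves-to-root order you process $R_1$ first, routing it along $a\text{--}b$. The blocker is $R_2$ at $b$, and your rule says to push $R_2$ ``into the free vertex of the opposite branch through $w$''. But the only way for $R_2$ to reach $w$ from $b$ is through $a$, which is occupied by $R_1$---the very robot you are routing. So the push is blocked by $r_i$ itself, and your per-iteration argument breaks down. The same obstruction arises whenever $r_i$ sits strictly between a blocker and $w$ on the tree path, which is the generic situation when $r_i$ starts deep in one branch.

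The paper sidesteps this by never letting the robot being placed obstruct the escape route: it first parks \emph{all} robots in $T_1\setminus\{w\}$, then for each robot destined for $T_2$ it pulls that robot up to $w$, stashes it on $x$, undoes the pull, and finally walks it down an already-clear path in $T_2$ (clear because of the depth ordering). The key structural point you are missing is that you must get $r_i$ out of the corridor (e.g., onto $x$ or into the other branch) \emph{before} you can evacuate blockers through $w$; once you add such a preliminary step and argue that the evacuation itself does not cascade, your one-by-one scheme can be made to work with the same $\Oh(k^3)$ bound. As written, though, the push-and-restore mechanism is underspecified and fails on the swap example above.
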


\iflong \begin{proof}
Let $T_1$ (resp.~$T_2$) be a Breadth-First Search (BFS) tree in $H_w$ rooted at $w$ whose vertex set is $V(H_w) \cap V(C_1)$ (resp.~$V(H_w) \cap V(C_2)$). Note that $T_1$ and $T_2$ are well defined since $C_1$ and $C_2$ are connected. Moreover, we have $|V(T_1)|, |V(T_2)| \geq k+1$, and $w$ is the only intersection between the two sets $V(T_1)$ and $V(T_2)$. Let $x \in H_w$ be the neighbor of $w$ in $C_3$. To arrive at configuration $\iota'(S)$ from $\iota(S)$, we perform the following process.

In the first phase of this process, we move all the robots in $S$ from their current vertices to arbitrary distinct vertices in $V(T_1) - \{w\}$. Note that this is possible since $|V(T_1) - \{w\}| \geq k$. Moreover, this is achievable in $\Oh(k^2)$ robot moves since every vertex in $S$ is at distance at most $k$ from $w$. For instance, one way of achieving the above is to move first the robots in $S$ that are in $T_1$, one by one to the deepest unoccupied vertex in $T_1$, then repeat this for the robots in $S$ that are in $T_2$, and finally move the robot on $x$ (if there is such a robot) to $T_1$. 

Let $S_2= \{R \in S \mid \iota'(R) \in V(T_2)\setminus 
\{w\}\}$ (i.e., the set of robots in $S$ whose final destinations w.r.t.~$\iota'(S)$ are in $T_2$ excluding vertex $w$). In the second phase, we route the robots in $S_2$ in reverse order of the depths of their destinations in $T_2$; that is, for two robots $R$ and $R'$ in $S_2$, if $R$'s destination is deeper than that of $R'$, then $R$ is routed first; ties are broken arbitrarily. To route a robot $R \in S_2$, we first move $R$ in $T_1$ towards $w$ as follows. Let $P$ be the unique path in $T_1$ from $w$ to the vertex at which $R$ currently is. (Note that $P$ may contain other robots.) We ``pull'' the robots along $P$ towards $w$, and into $T_2$ until $R$ reaches $w$, and then move $R$ to vertex $x$. When a robot $R'\neq R$ on $P$ reaches $w$, $R'$ is then routed towards an unoccupied vertex in $T_2$ by finding a root-leaf path $P'$ in $T_2$ that contains an unoccupied vertex (which must exist since $|V(T_2)| \geq k+1$), and ``pushing'' the robots on $P'$ down that path, possibly pushing (other) robots on $P'$ along the way, until $R'$ and every robot on $P'$ occupy distinct vertices in $T_2$. Once $R$ is moved to $x$, all the robots on $P$ that were moved into $T_2$ during this step are moved back into $T_1$ (i.e., we reverse the process). Afterwards, $R$ is routed to its destination in $T_2$, along the unique path from $w$ to that destination; note that this path must be devoid of robots due to the ordering imposed on the robots in the routing scheme.  The above process is then repeated for every robot in $S_2$, routing it to its location stipulated by $\iota'(S)$. 

In the last phase, we move all the robots in $S_1=S-S_2$, one by one, to $T_2$  by applying the same ``sliding'' operation as in the second phase. As a robot is moved to $T_2$, it may push the robots in $T_2$ deeper into $T_2$. Note that, during this process, no robot in $T_2$ occupies a vertex that is an ancestor of a robot $S_1$, and hence no robot in $T_2$ ``blocks'' a robot in $S_1$ from coming back to $T_1$. We then route the robots in $S_1$ from $T_2$ back into $T_1$ in the reverse order of the depths of their destinations w.r.t.~$\iota'(S)$ in $T_1$. However, if a robot has either $w$ or $x$ as a destination, then we route the robot whose destination is $w$ last, and the one whose destination is $x$ second from last.
To route a robot $R\in S_1$ from $T_2$ to its destination in $T_1$, we pull $R$ up in $T_2$ towards $w$, possibly pulling along the way other robots (that are only in $S_1$) into $T_1$,
until $R$ escapes to $x$. We then slide the robots that we pulled into $T_1$ back to $T_2$ and route $R$ to its final destination in $T_1 \cup \{w, x\}$. We repeat this step until all the robots in $S_1$ have been routed to their destination in $T_1 \cup \{w, x\}$ as stipulated by $\iota'(S)$. 

    Since $|V(H)| = \Oh(k)$, it is not difficult to verify that the total traveled length by all the robots over the whole process is $\Oh(k^3)$.  
\end{proof} \fi

\iflong \begin{longlemma} 
 \label{lem:cycle}
Let $w$ be a vertex of degree at least 3 that is contained in a (simple) cycle $C$ of length at least $2k+2$. Then $w$ is nice.
\end{longlemma}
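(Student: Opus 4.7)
The plan is to exhibit three connected subgraphs $C_1, C_2, C_3$ witnessing Definition~\ref{def:nicevertex} by taking the two ``arcs'' of the simple cycle $C$ emanating from $w$ as the large subgraphs, and a single edge from $w$ to a third neighbor as the short one. Write $C = w, v_1, v_2, \ldots, v_{n-1}, w$ with $n \geq 2k+2$ (all $v_i$ distinct since $C$ is simple), and let $u$ be a neighbor of $w$ distinct from $v_1$ and $v_{n-1}$, which exists because $\deg(w)\geq 3$. The case analysis splits according to whether $u$ lies on $C$.

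When $u \notin V(C)$, setting $C_1 = \{w, v_1, \ldots, v_k\}$, $C_2 = \{w, v_{n-1}, \ldots, v_{n-k}\}$, and $C_3 = \{w, u\}$ works directly: the bound $n \geq 2k+2$ ensures the two arcs share only $w$, and $C_3$ avoids $V(C)$. When $u = v_j$ is a chord (so $2 \leq j \leq n-2$), I may assume $j \leq n/2$ after reversing the orientation of $C$ if necessary; for $j \geq k+1$ I reuse the same two arcs (which now avoid $v_j$, since $v_j$ sits strictly between $v_k$ and $v_{n-k}$ on the cycle) and take $C_3 = \{w, v_j\}$ via the chord.

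The main obstacle is the residual regime $j \leq k$, where the natural $v_1$-arc contains $v_j$ and thus blocks the chord from serving as $C_3$. My plan is to reroute $C_1$ so that it uses the chord rather than the edge $w v_1$: take $C_1 = \{w, v_2, v_3, \ldots, v_{k+1}\}$, connected via the chord $w v_j$ (permissible since $j \in [2, k+1]$) together with the cycle path $v_2 v_3 \cdots v_{k+1}$. This frees $v_1$ to form the short branch $C_3 = \{w, v_1\}$, while $C_2 = \{w, v_{n-1}, v_{n-2}, \ldots, v_{k+2}\}$ retains $n-k-1 \geq k+1$ vertices by the cycle-length bound. The three vertex sets are pairwise disjoint apart from $w$, since their index ranges $\{2,\ldots,k+1\}$, $\{k+2,\ldots,n-1\}$, and $\{1\}$ are pairwise disjoint, which completes the verification.
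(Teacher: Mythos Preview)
Your proof is correct. The overall strategy matches the paper's: split on whether the extra neighbor $u$ lies on $C$, and in either case carve the two large subgraphs out of the cycle arcs while using a single edge from $w$ for the short branch.

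The only difference is in how the ``$u\in V(C)$'' case is organized. The paper handles it uniformly: after a WLOG it always routes one of the two large subgraphs through the chord $wu$ (taking the edge $ww_r$ on the cycle as the short branch $C_3$). You instead introduce a secondary split on the position of $u=v_j$: when $j\geq k+1$ you keep both large subgraphs as pure cycle arcs and let the chord itself be $C_3$, and only when $j\leq k$ do you reroute $C_1$ through the chord (freeing the cycle edge $wv_1$ for $C_3$), which is essentially the paper's construction. Your extra subcase~2a buys a slightly simpler picture when the chord lands far from $w$, at the cost of one more case; the paper's version avoids the split but needs the chord in a large subgraph even when it is not strictly necessary. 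Either route is fine.
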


 \begin{proof}
Let $C=(w_0=w, w_1, \ldots, w_r)$, where $r \geq 2k+1$.
Let $x \notin \{w_1, w_r\}$ be a neighbor of $w$. If $x \notin V(C)$, then the subgraphs consisting of the single edge $wx$, and the two subpaths $(w, w_1, \ldots, w_{k})$ and $(w, w_r, \ldots, w_{r-k+1})$ of $C$ satisfy conditions (i) and (ii) in Definition~\ref{def:nicevertex}, and $w$ is nice. 

Suppose now that $x \in V(C)$, and let $x_1, x_2$ be the neighbors of $x$ that appear adjacently (on either side) to $x$ on $C$.
Let $P_{wx_1}$ be the subpath of $C$ between $w$ and $x_1$ and $P_{wx_2}$ that between $w$ and $x_2$. At least one of $P_{wx_1}, P_{wx_2}$ contains at least $k+1$ vertices; otherwise, $C$ would have length at most $2k$. Assume, without loss of generality, that $P_{wx_1}$ contains at least $k+1$ vertices, and let $P_{w}^{-}$ be the subpath of $P_{wx_1}$ that starts at $w$ and that contains exactly $k+1$ vertices. Assume, without loss of generality, that $w_1$ is on $P_{w}^{-}$, and hence $w_k$ is the endpoint of $P_{w}^{-}$ other than $w$. Then it is easy to see that the three subgraphs consisting of the edge $ww_r$, the path $P_{w}^{-}$, and the connected subgraph consisting of $wx +P_{w_{r-1}w_k}$, where $P_{w_{r-1}w_k}$ is the subpath of $C$ between $w_{r-1}$ and $w_k$, satisfy conditions (i) and (ii) in Definition~\ref{def:nicevertex}, implying that $w$ is nice. 
\end{proof} \fi

\begin{figure}[htbp]
\centering
 	
	\includegraphics{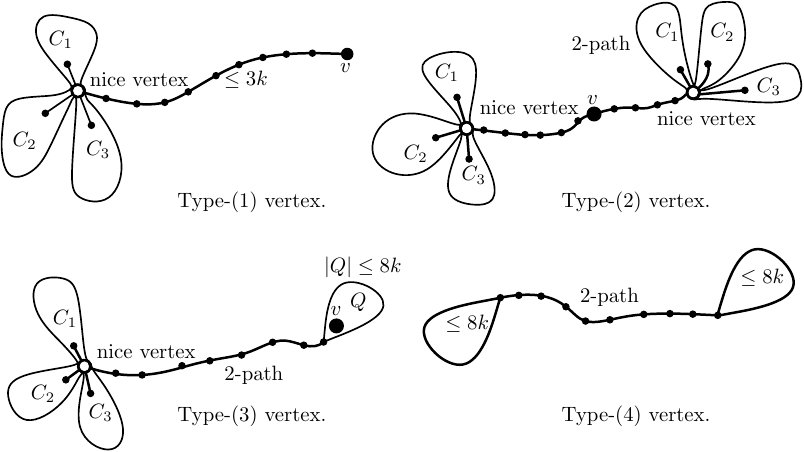}  	 
 	\caption{Illustration of the four types for a vertex $v$ that is not nice.}
	\label{fig:types}
\end{figure}

Call a path $P$ in $G$ a \emph{2-path} if $P$ is an induced path of degree-2 vertices in $G$. The next lemma is used to characterize the possible graph structures around a vertex that is not nice; an illustration for the four cases handled by the lemma is provided in Figure~\ref{fig:types}. 
 
\begin{lemma}
\label{lem:types}
For every vertex $v$ that is not nice, one of the following holds:

\begin{itemize}
    \item[(1)] There is a nice vertex at distance at most $3k$ from $v$; or
    
    \item[(2)] vertex $v$ is on a 2-path whose both endpoints are nice vertices; or 
    \item[(3)] there is a 2-path $P$ 
         such that one of its endpoints is nice and the other is contained in a connected subgraph $Q$ of size at most $8k$ that $P$ cuts from the nice vertex and such that $v$ is in $P \cup Q$; or 
        \item[(4)] Either $|V(G)| \leq 8k$, or $G$ consists of a 2-path $P$, each of whose endpoints is contained in a subgraph of size at most $8k$, such that $P$ cuts the two subgraphs from one another.  
\end{itemize}  
\end{lemma}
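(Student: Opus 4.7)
The plan is a case analysis on the distance from $v$ to the nearest nice vertex in $G$, followed by a subcase analysis on $\deg(v)$. First, if some nice vertex lies within distance $3k$ of $v$, case (1) holds. Otherwise, every vertex in the ball of radius $3k$ around $v$ is non-nice, and by the contrapositive of Lemma~\ref{lem:cycle}, no degree-$\geq 3$ vertex in this ball lies on a cycle of length $\geq 2k+2$; the local structure therefore has only short cycles.

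Next, I split on $\deg(v)$. If $\deg(v) = 2$, let $P$ be the maximal induced path of degree-$2$ vertices containing $v$, with endpoints $a, b$, and let $x, y$ be the unique degree-$\neq 2$ neighbours of $a, b$ lying outside $P$; if $P$ instead closes into a cycle, the connected component of $v$ equals that cycle and falls under case (4). Then: if both $x$ and $y$ are nice, case (2) holds (with $P$ in the role of the 2-path between them); if exactly one is nice (say $x$), case (3) holds with $Q$ the connected subgraph of $G \setminus V(P)$ containing $y$; if neither is nice, case (4) holds with the two connected subgraphs of $G \setminus V(P)$ as the two end-pieces. For $\deg(v) \neq 2$, a parallel analysis applies: if $\deg(v) = 1$, I reduce to the above via $v$'s unique neighbour and the 2-path emanating from it; if $\deg(v) \geq 3$, I exploit non-niceness of $v$ to bound the components of $G - v$. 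In that subcase, either all components of $G - v$ are small (size $< k$) and a bundling argument bounds $|V(G)| \leq 8k$, placing us in case (4); or exactly one component of $G-v$ is large and contains a 2-path, leading either to a nice vertex (case (3), with $Q$ covering $v$ together with the small components) or to another non-nice boundary (case (4)).

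The main technical step is the uniform size bound $|V(Q)| \leq 8k$ for the non-nice subgraph $Q$ arising in cases (3) and (4). I would BFS from the non-nice boundary vertex (say $y$) into $Q$; every vertex reached is non-nice, since a nice vertex inside $Q$ would be reachable from $v$ via $P$ within the $3k$-ball (once one verifies that $|V(P)| \leq 2k$ or so, itself forced because a longer $P$ together with a side-vertex would create a long cycle contradicting Lemma~\ref{lem:cycle}, or else push a nice vertex within the allowed distance). By Lemma~\ref{lem:cycle}, every degree-$\geq 3$ vertex in $Q$ has only short cycles of length $< 2k+2$, and by non-niceness each such vertex has all components of $G - u$ restricted to $Q$ of size $< k$. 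A charging argument over a compressed skeleton of $Q$ (in which internal 2-paths are contracted to single edges) bounds both the number of skeleton vertices and the total length of internal 2-paths by $\Oh(k)$, yielding $|V(Q)| \leq 8k$ after constants are tuned.

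The principal obstacle will be the precise bookkeeping for this $8k$ bound, particularly when $Q$ contains multiple interacting degree-$\geq 3$ non-nice vertices whose small components and short cycles overlap. I expect this is resolved by an amortized count along the skeleton of $Q$: each skeleton vertex contributes at most $k$ ``away'' vertices via its small components, while the number of skeleton vertices and the cumulative 2-path lengths are simultaneously bounded by invoking the cycle constraint from Lemma~\ref{lem:cycle} and the assumed unreachability of any nice vertex within distance $3k$ of $v$.
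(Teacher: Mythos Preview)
Your high-level plan (rule out case~(1), then analyze the maximal 2-path through $v$ and the pieces at its ends, using Lemma~\ref{lem:cycle}) is reasonable and has the same flavor as the paper's argument. However, the crucial step --- bounding $|V(Q)|\le 8k$ --- has a real gap.

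Your argument for the bound hinges on the claim that every vertex of $Q$ is non-nice, which you justify by saying a nice vertex in $Q$ would lie within the $3k$-ball around $v$ ``once one verifies that $|V(P)|\le 2k$ or so''. But $P$ is \emph{not} short in general: the entire point of cases~(2)--(4) is that $v$ can sit in the middle of an arbitrarily long 2-path. So $Q$ may very well contain nice vertices that are far from $v$, and your downstream ``skeleton charging'' argument, which assumes every degree-$\ge 3$ vertex in $Q$ is non-nice, never gets off the ground. Concretely, take $y$ (your non-nice endpoint) to have two neighbors in $Q$ joined by a short cycle, from which a further long 2-path leads to a genuinely nice vertex $z$; then $Q$ is huge, yet your decomposition via the maximal 2-path through $v$ still terminates at $y$, not at $z$. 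The lemma is still true in this situation, but not via your $Q$.

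The paper avoids this by \emph{not} committing to the maximal 2-path through $v$. Instead it runs BFS from $v$ and selects a degree-$\ge 3$ vertex $w$ in the BFS tree that is minimal subject to $|V(T_w^-)|\ge 3k$. This choice is the key: if $w$ is nice one gets case~(2) or~(3) with $w$ as the nice endpoint; if $w$ is not nice, then since $T_w^-$ already supplies one large arm at $w$, non-niceness forces $|V(T_w)|\le 2k$, and BFS-level arguments bound the back-edges to obtain the $8k$ pieces for case~(3) or~(4). The subtree-size criterion is exactly what lets one localize the ``small piece'' without first proving that the whole end-component is free of nice vertices. Your sketch is missing an analogous selection rule; without it the $8k$ bound does not follow.
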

 
\iflong \begin{proof}
Suppose that $v$ is not a nice vertex, and assume that (1) does not hold (i.e., that there is no nice vertex at distance at most $3k$ from $v$). We will show that one of the Statements (2)-(4) must hold.

Let $T$ be a BFS-tree of $G$ rooted at $v$. For a vertex $x$ in $T$, denote by $deg_T(x)$ the degree of $x$ in $T$, by $T_x$ the subtree of $T$ rooted at $x$, and let $T_{x}^{-}= T - V(T_x)$. For two vertices $x, y \in T$, denote by $P_{xy}$ the unique path between $x$ and $y$ in $T$, and by $d_T(x, y)$ the distance between $x$ and $y$ in $T$, which is $|P_{xy}|$. 

 If there is no vertex $x$ in $T$ with $deg_T(x) \geq 3$, then $T$ is a path of vertices, each of degree at most $2$ in $T$.  Note that any edge in $G$ is between two vertices that are at most one level apart in the BFS-tree $T$. If there is an edge $xy$ in $E(G)-E(T)$ that is between two vertices that are not at the same level of $T$, then assume w.l.o.g.~that $x$ has a higher level than $y$. We can ``re-graft'' $y$ so that  $x$ becomes its parent in $T$ (i.e., cut $y$ from its current parent and attach it to $x$) to obtain another BFS-tree rooted at $v$ and containing a vertex $x$ with $deg_T(x) \geq 3$. Suppose now that all the edges in $E(G)-E(T)$ join vertices at the same level of $T$. Let $x$ and $y$ be the two vertices in $T$ of degree-3 in $G$ that are of minimum distance to $v$, and note that $xy \in E(G)$. Moreover, $P_{xy}-x - y$ is a 2-path in $G$. If both $x$ and $y$ are nice, then $v$ is contained in an induced path of degree-2 vertices in $G$ that has two nice vertices at both ends (since no edges can exist between any two vertices along that path), and hence, $v$ satisfies Statement (2) of the lemma. If one of $x, y$, say $x$ is nice, then $y$ must be nice as well by Lemma~\ref{lem:cycle}, since $d_T(v, x) =d_T(v, y) > 3k$ (as (1) does not hold), and hence $y$ is a degree-3 vertex that is contained in the cycle $P_{xy} +xy$ of length at least $3k+1$. Suppose now that both $x$ and $y$ are not nice. Observe in this case that $d_T(v,x)=d_T(v, y) \leq k+1$; otherwise, $x$ and $y$ would be nice by Lemma~\ref{lem:cycle}. If each of $T_x$, $T_y$ contains more than $k+1$ vertices, then again $x$ and $y$ would be nice. It follows that at least one of $T_x, T_y$, say $T_x$, contains at most $k$ vertices, and hence $G$ consists of a subgraph containing $T_x$, $P_{xy}$ and at most $k$ neighbors in $T_y$ of the vertices in $T_x$ that must be within distance $k$ from $y$, followed by a 2-path (the rest of $T_y$) in $G$. Therefore, $G$ consists of a subgraph of at most $4k+2$ vertices plus a 2-path in $G$, and Statement (4) in the lemma is satisfied. 

Therefore, we may assume henceforth that there is a vertex $x$ in $T$ satisfying $deg_T(x) \geq 3$.

We first claim that any subtree of $T$ that contains a vertex of degree at least $3$ in $T$ contains a vertex $x$ satisfying $deg_T(x) \geq 3$ and $|V(T_{x}^{-})| \geq 3k$, otherwise the statement of the lemma is satisfied.

Let $T_{v'}$ be a subtree containing a vertex whose degree in $T$ is at least 3, and suppose that there is no vertex $x \in T_{v'}$ satisfying $deg_T(x) \geq 3$ and $|V(T_{x}^{-})| \geq 3k$. Since there exists a vertex in $T_{v'}$ whose degree in $T$ is at least 3, for any vertex $x$ in $T_{v'}$ satisfying $deg_T(x) \geq 3$, we have $|V(T_{x}^{-})| < 3k$. Let $x$ be a vertex in $T_{v'}$ satisfying $deg_T(x) \geq 3$ and with the maximum $d_T(v, x)$. Since $|V(T_{x}^{-})| < 3k$ and $T_{x}^{-}$ is connected and contains $v$, it follows that $d_T(v, x) < 3k$, and hence $d_G(v, x) < 3k$. Since -- by our assumption -- (1) is not satisfied, there is no nice vertex within distance at most $3k$ from $v$, it follows that $x$ is not nice. Moreover, by the choice of $x$, for every child $y$ of $x$ in $T$, $T_y$ is a path of degree-2 (in $T$) vertices. Since $x$ is not nice, at most one of the paths rooted at a child of $x$ can contain more than $k$ vertices; call such a path \emph{long}. If no such long path exists, then $T_x$ contains at most $4k$ vertices; otherwise, we can split $T_x$ into two subgraphs $C_2, C_3$, intersecting only at $x$ and each containing at least $k+1$ vertices. It follows in this case that the number of vertices in $T$, and hence in $G$, is $|V(T_x)| + |V(T_{x}^{-})| \leq 7k$, and $G$ satisfies Statement (4) of the lemma (with the 2-path being empty). Suppose now that a long path $P_x$ of degree-2 vertices rooted at $x$ exists. Then since $x$ is not nice, $T_x - V(P_x)$ contains at most $k$ vertices, and hence $T$ can be decomposed into two parts $P_x$ and $T-P_x$ that intersect only at $x$, where $|V(T-P_x)| \leq 4k$. Note that, by the property of a BFS-tree, no vertex on $P_x$ at distance more than $3k$ from $x$ can be a neighbor of a vertex in $V(T-P_x)$, and hence, any vertex on $P_x$ of distance at least $3k$ from $x$ is a cut-vertex that cuts $G$ into a connected subgraph of size at most $7k$ plus a 2-path in $G$, thus satisfying the Statement (4) of the lemma. 

Suppose now that any subtree of $T$ that contains a vertex of degree at least $3$ in $T$ contains a vertex $x$ satisfying $deg_T(x) \geq 3$ and $|V(T_{x}^{-})| \geq 3k$. Since there exists a vertex $x$ satisfying $deg_T(x) \geq 3$, let $w$ be such a vertex satisfying $|V(T_{w}^{-})| \geq 3k$ and such that no ancestor $y$ of $w$ in $T$ satisfies $deg_T(y) \geq 3$ and $|V(T_{y}^{-})| \geq 3k$.

If $w$ is nice, 
consider the path $P_{wv}$ of $T$. If there exists an internal degree-3 vertex (in $T$) on $P_{wv}$, then let $x$ be the farthest such vertex from $v$ (i.e., closest such ancestor of $w$) on this path. By the choice of $w$, $|V(T_{x}^{-})| < 3k$, and hence $v \in V(T_{x}^{-})$ is contained in a subgraph of size at most $3k$ that is connected by a (possibly empty) path of vertices of degree 2 in $T$ to $w$. Since $w$ is nice and Statement (1) is not satisfied (by our assumption), we have $d_T(v, w) > 3k$. Since $|V(T_{x}^{-})| < 3k$, by the properties of BFS-trees, no vertex in $V(T_{x}^{-})$ could be adjacent in $G$ to any vertex in $T_w$ or to any vertex $y$ on $P_{x}{w}$, the path of $T$ between $x$ and $w$, satisfying 
$d_T(v, y) > 3k$. Therefore, by picking any vertex $u$ on $P_{xw}$ whose distance from $w$ is at most $1$ (Which must exist), we have a cut-vertex in $G$ that cuts $G$ into a component of size at most $6k$ containing $v$, connected by a 2-path in $G$ to a nice vertex $w$, as stated in Statement (3) of the lemma. 

Suppose now that no vertex $x$ satisfying $deg_T(x) \geq 3$ exists along $P_{vw}$. 
Note that $|P_{vw}| > 3k$,  since the distance between $v$ and the nice vertex $w$ is more than $3k$. 

If $deg_T(v) \geq 3$, then since $v$ is not nice, it follows that the union of all the subtrees rooted at the children of $v$, except the subtree containing $w$, is at most $2k$, and hence, by a similar analysis to the above, it is easy to see that $v$ is part of a connected subgraph of size at most $4k$ that is connected by a 2-path $P$ (which is a subpath of $P_{vw}$) in $G$ to a nice vertex $w$ such that $P$ cuts this subgraph from $w$, thus satisfying Statement (3) of the lemma. 

Suppose now that $deg_T(v)$ is 2, and hence $deg_G(v)=2$. Let $w'$ be the child of $v$ such that $T_{w'}$ contains $w$ and $v'$ be the other child of $v$ (i.e, such that $T_{v'}$ does not contain $w$ ).

Let $z$ be the first descendant of $v$ in $T_{v'}$ such that $deg_G(z) \geq 3$; if no such $z$ exists, then $v$ must be on a 2-path in $G$ that ends a nice vertex $w$, thus satisfying Statement (3) of the lemma.

Suppose now that $z$ exists. If $z$ is nice, then $d_T(v, z) > 3k$. Observe that any neighbor of $z$ in $G$ must be within 1 level from that of $z$ in $T$. Observe also that Since $P_{vz}$ is an induced path of degree-2 vertices in $G$. If $z$ has a neighbor $x$ in $G$ on $P_{vw}$, let $x$ be the neighbor with the minimum $d_T(v, x)$. Then $x$ must be nice since $x$ is a degree-3 vertex on a cycle of length at least $2k+2$ (see Lemma~\ref{lem:cycle}). Moreover, by the choice of $z$ and $x$, $v$ is contained in an induced path of degree-2 vertices in $G$ whose both endpoints are nice vertices ($x$ and $z$), thus satisfying Statement (2) of the lemma. Similarly, if any vertex $x$ on $P_{vw}$ has a neighbor in $G$, which must be a vertex below $z$ in $T_{v'}$, and hence $x$ must be on a cycle of length at least $2k+2$ and must be nice, and $v$ satisfies Statement (2) of the lemma. Otherwise, $v$ is contained in a 2-path in $G$ that ends at nice vertices, namely $z$ and $w$, again satisfying Statement (2) of the lemma.

Suppose now that $z$ is not nice. Since the length of $P_{zv}+P_{vw}$ is more than $3k$ and $z$ is not nice, the number of vertices in $T_z$ is at most $2k$. Note that, since $z$ is not nice, $z$ cannot be contained in a cycle of length more that $2k+1$, and hence, no edge in $G$ exists between a vertex in $T_w$ and a vertex in $T_z$. If no vertex in $P_{vw}$ has a neighbor in $T_{v'}$ then $v$ is contained in 2-path $P$ that on one side ends with a nice vertex $w$, and on the other side with a connected subgraph $T_z$ of size at most $2k$ such that $P$ cuts $T_z$ from $w$, thus satisfying Statement (3) of the lemma. Otherwise, any vertex $x$ on $P_{vw}$ with a neighbor in $T_z$ must be within distance at most $k+1$ from $v$ (otherwise, $z$ would be contained in a cycle of length at least $2k+2$ and would be nice). It follows that $v$ is contained in a connected subgraph of size at most $4k+2$ that is connected by a 2-path $P$ in $G$ to a nice vertex $w$ such that $P$ cuts $w$ from this subgraph containing $v$, and thus satisfying Statement (3) of the lemma.
  
Suppose now that $w$ is not nice. Then since $|V(T_{w}^{-})| \geq 3k$ by the choice of $w$, $|V(T_w)| \leq 2k$. 

Consider the path $P_{wv}$ of $T$. If there exists a degree-3 (in $T$) vertex on $P_{wv}$, then let $x$ be the deepest such vertex (ancestor of $w$) on this path. By the choice of $w$, $|V(T_{x}^{-})| < 3k$, and hence $v \in V(T_{x}^{-})$ is contained in a subgraph of size at most $3k$. Observe that any vertex in $V(T_{x}^{-})$ cannot have a neighbor on $P_{xw}$ that is at distance more than $3k$ from $x$. Hence, either the total size of $G$ is at most $8k$ (if a vertex in $V(T_{x}^{-})$ has a neighbor in $T_w$), or $G$ can be decomposed into a 2-path $P$, each of its endpoints is a contained in a connected subgraph of size at most $8k$ (one of the two subgraphs must contain $v$) such that $P$ separates the two. In either of the two cases, $v$ satisfies Statement (4) of the lemma.

Suppose now that no such $x$ exists, then $w$ is connected to $v$ by a path of degree-2 vertices in $T$. By symmetry, we can assume that, for every child $v'$ of $v$, either $T_v'$ does not contain a vertex $x$ satisfying $deg_T(x) \geq 3$, or contains a vertex $w'$ that is not nice and satisfying $|V(T_w')| \leq 2k$, $|V(T_{w'}^{-})| \geq 3k$, and is connected to $v$ via a path of degree-2 vertices in $T$. 

If $deg_T(v) \leq 2$, then we can argue -- similarly to the above -- that $v$ must satisfy the lemma. We summarize these arguments below for the sake of completeness. Let $w'$ be the child of $v$ such that $T_{w'}$ contains $w$ and $v'$ be the other child of $v$. Either $T_{v'}$ is a path of vertices, each of degree at most $2$ in $T$, or there exists a vertex $x$ in $T_{v'}$ such that  $deg_T(x) \geq 3$,  $x$ is not nice, $x$ satisfies $|V(T_x)| \leq 2k$ and $|V(T_{x}^{-})| \geq 3k$, and $P_{xv}$ is a path of degree-2 vertices in $T$. If $|P_{vw}| \leq 2k$, then $|V(T_{w'})| \leq 4k$ and no vertex in $T_{w'}$ can be adjacent in $G$ to any vertex in $T_{v'}$ whose depth in $T$ is more than $4k$; it is easy to see in this case that Statement (4) of the lemma holds. Suppose now that $|P_{vw}| > 2k$ and observe that since $w$ is not nice, no vertex in $T_w$ can be adjacent to a vertex in $T_{v'}$; otherwise, $w$ would be contained in a cycle of length at least $2k+2$. Therefore, the only edges in $G$ that could exist between $T_{w'}$ and $T_{v'}$ are edges between $P_{vw}$ and $T_{v'}$. If no edge exists between $T_{w'}$ and $T_{v'}$, then Statement (4) of the lemma holds. Otherwise, let $xy$ be the edge between a vertex $x$ in $P_{vw}$ and a vertex $y$ in $T{v}$ such that $x$ is of minimum depth in $T_{w'}$. If $x$ is nice then $y$ must be nice (since both would be contained in a cycle of length at least $2k+2$) and Statement (2) of the lemma holds. Suppose now that both $x$ and $y$ are not nice, which implies that each has depth at most $k+1$ in $T$. Moreover, either $T_x$ or $T_y$ must contain at most $k$ vertices. In each of these two cases, it is easy to see that Statement (4) of the lemma holds.

Suppose now that $v$ has at least three children in $T$. At most one of these children can satisfy that the subtree of $T$ rooted at it has at least $k$ vertices; otherwise, $v$ would be nice. If no child $v_i$ of $v$ satisfies that $|V(T_{v_i})| \geq k$, then it is easy to see that $|V(G)| \leq 6k$, and hence Statement (4) of the lemma is satisfied, as otherwise $v$ would be nice. Suppose now that exactly one child of $v$, say $v_1$, is such that $|V(T_{v_1})| \geq k$, and note in this case that $|V(T) -V(T_{v_1})| \leq 2k$. If $T_{v_1}$ contains a vertex $x$     satisfying $deg_T(x) \geq 3$, then $T_{v_1}$ contains a vertex $w$ that is not nice and satisfying $|V(T_w)| \leq 2k$, $|V(T_{w}^{-})| \geq 3k$, and such that $P_{vw}$ is a path of degree-2 vertices in $T$. Since $|V(T) -V(T_{v_1})| \leq 2k$, and hence, any vertex in 
$V(T) -V(T_{v_1})$ can be adjacent only to vertices on $P$ that are within distance at most $2k$ from $v$ (since $T$ is a BFS-tree), either $G$ has at most $6k$ vertices, or it consists of two connected subgraphs, each of size at most $4k$ that are connected by a 2-path of $G$ (which is a subpath of $P$) that separates the two subgraphs, and Statement (4) of the lemma holds. Suppose now that $T_{v_1}$ consists of a path $P$ of vertices each of degree at most 2 in $T$. Again, by the properties of the BFS-tree $T$, any vertex in $V(T) -V(T_{v_1})$ can be adjacent to vertices on $P$ that are within distance at most $2k$ from $v$. It follows in this case that $G$ can be decomposed into a connected subgraph of size at most $4k$ that contains $v$ and that is connected by a 2-path, and Statement (4) of the lemma holds.
\end{proof} \fi

\begin{our_definition}[Vertex Types]
\label{def:type}
Let $v$ be a vertex in $G$ that is not nice. We say that $v$ is a \emph{type-($i$) vertex}, where $i \in \{1, \ldots, 4\}$, if $v$ satisfies Statement ($i$) in Lemma~\ref{lem:types} but does not satisfy any Statement ($j$), where $j < i$ (if Statement ($j$) exists). Note that, by Lemma~\ref{lem:types}, every vertex $v$ that is not nice must be a type-($i$) vertex for some  $i \in \{1, \ldots, 4\}$. 
\end{our_definition}

The following lemma shows that we can restrict our attention to the case where there is no type-(4) vertex in $G$.

\begin{lemma}
\label{lem:specialcase}
Let $\I=(G, \R=(\M, \F), k, \ell)$ be an instance of \cmpl{}. If there exists a type-(4) vertex $v \in V(G)$ then $\I$ can be solved in time $\Oh^*((4k^2+16k)^{2k})$ and hence is \FPT.
\end{lemma}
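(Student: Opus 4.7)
The plan is to split the proof along the dichotomy in Statement (4) of Lemma~\ref{lem:types}: either $|V(G)| \le 8k$, or $G$ consists of two ``blobs'' $Q_1, Q_2$ of size at most $8k$ connected by a 2-path $P$ that separates them. The first case is handled by an explicit dynamic-programming scheme that comfortably fits within the claimed running time; the second case requires exploiting the rigidity of motion on the (possibly long) 2-path.

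For the small case ($|V(G)| \le 8k$), I would enumerate configurations. A configuration is a $k$-tuple specifying one vertex per robot, so there are at most $(8k)^k$ of them. By Proposition~\ref{prop:reconfiguration}, if the instance is feasible then it admits a schedule of total length $\Oh(|V(G)|^3)=\Oh(k^3)$, so any optimal schedule has length at most this value. I would then construct the configuration graph whose edges are valid one-step moves between configurations and compute a shortest path from the initial configuration to any configuration where every $R_i \in \M$ is placed at $t_i$. The running time is $\Oh^*((8k)^k)$, which is within the bound $\Oh^*((4k^2+16k)^{2k})$ for every $k \ge 1$.

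For the 2-path case, write $u_1,u_2$ for the endpoints of $P$ in $Q_1,Q_2$ respectively, and $P^\circ$ for the interior of $P$ (whose vertices all have degree $2$ in $G$). The crucial observation is the \emph{rigidity} of $P^\circ$: two robots on $P^\circ$ cannot swap positions, so the relative order of the robots resident on $P$ is invariant in time. Consequently, the motion on $P$ is determined (up to waiting) by the sequence of entries into and exits from $u_1$ and $u_2$, while the motion inside each blob $Q_i$ is just a standard reconfiguration on a graph of size at most $8k$. I would therefore enumerate, for each robot, a ``signature'' consisting of its sequence of $P$-crossings together with its final resting vertex; the aim is to show that each robot's signature is drawn from a set of size at most $4k^2+16k$. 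Splitting each robot's signature into a ``pre-crossings half'' and a ``post-crossings half'' then gives the overall enumeration bound $(4k^2+16k)^{2k}$. For each signature, feasibility and the optimal cost can be decided by applying Proposition~\ref{prop:reconfiguration} to the intra-blob reconfiguration subproblems induced between consecutive $P$-events.

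The main technical obstacle is proving that some optimum schedule admits a signature of the promised size. I would decompose this into two simplification claims: (a) in some optimal schedule each robot crosses $P$ only $\Oh(k)$ times, because additional back-and-forth motion on the long 2-path is wasteful and can be short-circuited by reordering, and because free robots gain nothing by crossing $P$ after every destination robot has already reached its target; and (b) between consecutive $P$-events each robot can be assumed to rest at one of $\Oh(k^2)$ canonical vertices of the blobs (chosen, for instance, from the union of starting/ending positions of robots and the two gateway vertices $u_1,u_2$). Together these two claims pin down the signature size, and the enumerate-and-verify procedure outlined above then yields the stated running time.
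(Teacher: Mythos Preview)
Your small-graph case is fine, but your handling of the 2-path case diverges from the paper and introduces real gaps.

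The paper does not attempt to bound the number of $P$-crossings or enumerate per-robot ``signatures.'' Instead it defines a set of \emph{critical vertices}: those within distance $k$ of $V(H_1)\cup V(H_2)$ or of some terminal $s_i,t_i$. There are at most $2k\cdot 2|\R| + |V(H_1)|+|V(H_2)| \le 4k^2+16k$ of them. The key (and easy) observation is that in some optimal schedule, whenever a robot sits at a non-critical vertex it is in monotone transit along $P$ between two critical vertices. This lets one build a weighted configuration graph whose nodes are injections $\R\to\{\text{critical vertices}\}$ and whose edges are either a single parallel move \emph{or} a single robot's ``long jump'' along an unoccupied stretch of $P$ between two critical vertices, weighted by the jump length. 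Dijkstra on this graph gives the optimum directly; the node count $(4k^2+16k)^k$ yields the stated $\Oh^*((4k^2+16k)^{2k})$ bound with no further case analysis.

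Your signature-based route has two concrete problems. First, you invoke Proposition~\ref{prop:reconfiguration} to decide the \emph{optimal} cost of each intra-blob reconfiguration between consecutive $P$-events, but that proposition only checks feasibility and produces some schedule of length $\Oh(|V|^3)$; it says nothing about optimality. You would have to solve each intra-blob subproblem exactly (as you do in the $|V(G)|\le 8k$ case), and then reconcile the product of signature count and per-signature cost with the target bound---which you have not done. Second, the claim that each robot crosses $P$ only $\Oh(k)$ times in some optimal schedule is plausible but needs a genuine argument; ``back-and-forth is wasteful and can be short-circuited'' does not handle robots that must reorder inside a blob by temporarily spilling onto $P$, and your ``pre-crossings half / post-crossings half'' counting is too loose to recover the precise base $4k^2+16k$. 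The paper's critical-vertex abstraction sidesteps all of this: it never bounds crossings, it simply collapses the long non-critical stretch of $P$ into weighted edges of the configuration graph.
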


\iflong \begin{proof}
Suppose that there exists a type-(4) vertex $v \in V(G)$. Then $G$ can be decomposed into two connected subgraphs, $H_1$, $H_2$, that are connected by a 2-path $P$ in $G$. We will show that in this case  $\I$ can be solved in \FPT-time.

Call a vertex $x \in V(G)$ \emph{critical} if $\dist_G(x, w) \leq k$ for some $w\in V(H_1) \cup V(H_2) \cup \{s_i, t_i\mid R_i\in \R\}$. That is, $x$ is critical if it is within distance at most $k$ from a vertex in one of the two subgraphs $H_1, H_2$, or if it is within distance at most $k$ from the starting position or the ending position of some robot.

Initially, all robots are located at critical vertices (since their starting vertices are critical). It is not difficult to prove, by induction on the number of moves and by normalizing the moves in a solution, that there exists an optimal solution to $\I$ in which 
if a robot $R\in \R$ is not a critical vertex, then it is moving directly without stopping or changing direction on $P$ between two critical vertices.

Define a \emph{configuration} to be an injection from $\R$ into the set of critical vertices in $G$. Define the configuration graph ${\cal C}_G$ to be the weighted directed graph whose vertex-set is the set of all configurations, and whose edges are defined as follows. There is an edge from a configuration $q$ to a configuration $q'$ if and only 
$q'$ can be obtained from $q$ via a single (parallel) move of a subset of the robots, or $q$ and $q'$ are identical with the exception of a single robot's transition from a critical vertex on $P$ to another critical vertex on $P$ along an unoccupied path; in either case, the weight of edge $(q, q')$ is the total traveled length by all the robots for the corresponding move/transition.

To decide $\I$, we compute the shortest paths (e.g., using Dijkstra's algorithm) between the initial configuration, defined based on the input instance (in which every robot is at its initial position), and every configuration in the configuration graph in which the robots in $\M$ are located at their destinations, and then choose a shortest path, among the computed ones, with total length at most $\ell$ if any exists. 

 The number of critical vertices is at most $2k\cdot 2|\R| + |V(H_1)|+|V(H_2)| \leq 4k^2 + 16k$, and hence the number of configurations is at most $(4k^2+16k)^k$.  It follows that the problem can be solved in $\Oh^*((4k^2+16k)^{2k})$ time, and hence is \FPT{}.    
\end{proof} \fi

\iflong \begin{longlemma} 
\label{lem:onerobot}
Let $\I=(G, \R=(\M, \FFF), k, \ell)$ be an instance of \cmplone{}, and let $opt(\I)$ be an optimal solution for $\I$.
Let $v$ be a type-(3) vertex, and let $Q$ be the connected subgraph of size at most $8k$ that is connected by a 2-path $P$ to a nice vertex $n_R$, and such that $v \in V(Q) \cup V(P)$. If a robot $R\in \FFF$ starts on $v$, then the route of $R$ in $opt(\I)$ cannot visit $n_R$ and then visit a vertex in $V(Q)$.
\end{longlemma}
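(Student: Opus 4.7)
The plan is to argue by contradiction via an exchange argument, leveraging the key feature of \cmplone: only the single robot $R_M\in \M$ has a designated destination, while $R$ and every other robot in $\FFF$ is a free movable obstacle whose moves should be minimized in any optimal schedule. Suppose for contradiction that the route $W_R$ of $R$ in $opt(\I)$ visits $n_R$ at some time $t_1$ and some vertex $u\in V(Q)$ at a later time $t_2>t_1$. I pick $t_1$ to be the latest visit of $R$ to $n_R$ that is followed by a visit of $R$ to $V(Q)$, and $t_2$ to be the earliest visit of $R$ to $V(Q)$ after $t_1$.

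The first step is to show that during $[t_1,t_2]$, $W_R$ must be a monotone traversal from $n_R$ along the 2-path $P$ into $V(Q)$: since $P$ separates $V(Q)$ from $n_R$ and the rest of $G$, and by the maximality of $t_1$, the robot $R$ does not revisit $n_R$ in this interval and must therefore walk monotonically along $P$, contributing at least $|E(P)|$ moves. I then construct a modified schedule $\SSS'$ that agrees with $opt(\I)$ for all other robots and with $W_R$ up to time $t_1$, while from time $t_1$ onward $R$ remains at $n_R$ in $\SSS'$, performing only brief one-step detours into the haven $H_{n_R}$ (avoiding $P$) whenever another robot would occupy $n_R$ at some later time in $opt(\I)$. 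Such detours are always feasible: $n_R$ is nice by Definition~\ref{def:nicevertex}, and Lemma~\ref{lem:swap} shows that $R$ can dodge within $H_{n_R}$ without conflicts, since $R$ itself has no destination.

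The main obstacle will be to show that the modified $\SSS'$ is collision-free and has strictly smaller total energy than $opt(\I)$. Validity follows from the detour strategy: $R$ always yields $n_R$ to any traversing robot while staying inside $H_{n_R}$, and the presence of the large subgraphs $C_1,C_2$ of Definition~\ref{def:nicevertex} gives it ample room. For the cost bound, the savings include at least the length of $R$'s trajectory during $[t_1,t_2]$ (which is $\geq |E(P)|$) plus any further motion of $R$ after $t_2$, whereas the total detour cost is $\Oh(1)$ per traversal of $n_R$ by other robots in $opt(\I)$. Since $R_M$ traverses $n_R$ only $\Oh(1)$ times under shortcut-normalization of its route (as $R_M$ has a single source-destination pair in \cmplone), and each other free robot can be normalized to avoid $n_R$ after $t_1$ by applying the same exchange argument recursively, the detour cost is $\Oh(k)$. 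A sharper bookkeeping (keeping $R$ as close to the $n_R$-endpoint of $P$ as needed rather than exactly at $n_R$) ensures the savings strictly dominate, yielding the required reduction in energy and contradicting the optimality of $opt(\I)$.
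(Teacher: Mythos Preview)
Your exchange argument is natural but contains a genuine gap in the detour-cost analysis. You modify $opt(\I)$ by freezing $R$ near $n_R$ from time $t_1$ onward while leaving all other robots' routes untouched, and then claim the cost of $R$'s dodges is $\Oh(k)$. This rests on two assertions that are not justified: (i) that $R_M$ traverses $n_R$ only $\Oh(1)$ times ``under shortcut-normalization'', and (ii) that every other free robot can be assumed not to visit $n_R$ after $t_1$ ``by applying the same exchange argument recursively''. Claim (i) is not available at this point of the paper --- bounds on how often a vertex is visited in an optimal schedule are derived \emph{from} lemmas such as this one, not before it; in particular, nothing prevents $R_M$ in $opt(\I)$ from oscillating through $n_R$ many times while interacting with the other $k-1$ robots. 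Claim (ii) is circular: you are invoking the conclusion of the lemma for other free robots in order to establish it for $R$. Without (i) and (ii), the number of times $R$ must step aside at $n_R$ is bounded only by the total energy $\ell$, so the savings of $|E(P)|$ moves need not dominate the detour cost, and the contradiction does not go through. A secondary issue is that your invocation of Lemma~\ref{lem:swap} does not fit: that lemma reconfigures \emph{all} robots inside a haven, whereas you need a single robot to dodge while every other robot follows a fixed prescribed route through $H_{n_R}$.

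The paper's proof avoids the cost-accounting problem entirely by arguing structurally instead of by exchange. Because $R$ is free, the only reason it would ever travel as far as $n_R$ is to let the unique destination-bearing robot $R_1$ pass it on $P$; this in turn forces $R_1$'s destination to lie in $V(P)\cup V(Q)$ (otherwise $R$ could have simply retreated toward $Q$). But then $R_1$ makes its final entry into $P\cup Q$ before $R$ re-enters, and since $|P|\geq 3k$ there is room on $P$ for all $k$ robots without $R$ ever reaching $Q$. This argument needs no bound on how many times vertices are visited and no recursive appeal to the statement being proved.
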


\begin{proof}
Let $R_1 \neq R$ be the single robot in $\M$. Observe that
the only reason for $R$ to visit $n_R$ is to reverse order with $R_1$ whose destination must be in this case in $V(P) \cup V(Q)$. (Otherwise, $R$ would never have to leave $P \cup Q$.) But then $R_1$ must enter $P \cup Q$ for the last time before $R$ does, and hence $R$ will never end up in a vertex in $Q$ in $opt(\I)$ since $|P| \geq 3k$. 
\end{proof} 
 
\begin{longlemma}
\label{lem:onerobot2}
Let $\I=(G, \R=(\M, \FFF), k, \ell)$ be an instance of \cmplone{}, and let $opt(\I)$ be an optimal solution for $\I$.
Let $v$ be a type-(3) vertex, and let $Q$ be the connected subgraph of size at most $8k$ that is connected by a 2-path $P$ of length at least $k+1$ to a nice vertex $n_R$, and such that $v \in V(Q) \cup V(P)$. If a robot $R\in \FFF$ starts on $v$, then the route of $R$ in $opt(\I)$ cannot visit a vertex in $V(Q)$ and then visit $n_R$.
\end{longlemma}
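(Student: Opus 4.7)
The plan is to prove the lemma by contradiction, closely following the structure of the argument used for Lemma~\ref{lem:onerobot}. Suppose that in $opt(\I)$ the free robot $R$ visits some vertex $u \in V(Q)$ at time $t_1$ and then visits $n_R$ at time $t_2 > t_1$. First I would replicate the opening observation from the proof of Lemma~\ref{lem:onerobot}: since $R$ is a free robot and every path from $V(Q) \cup V(P)$ to the rest of $G$ must cross $n_R$, the only reason for $R$ to visit $n_R$ is to reverse order with $R_1$ across the narrow cut induced by $P$. As in the proof of Lemma~\ref{lem:onerobot}, this implies that the destination of $R_1$ lies inside $V(P) \cup V(Q)$, for otherwise $R$ would have no reason to ever leave $V(Q) \cup V(P)$. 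Moreover, by applying Lemma~\ref{lem:onerobot} to $opt(\I)$ itself, we obtain that $R$ does not revisit any vertex of $V(Q)$ at any time $> t_2$, so the visit to $u$ at $t_1$ is in fact the last visit of $R$ to $V(Q)$.

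Next, I plan to derive a contradiction by exhibiting a schedule of strictly smaller total traveled length. Since both $R$ (moving from $v$ toward $n_R$) and $R_1$ (moving from beyond $n_R$ toward its destination inside $V(P) \cup V(Q)$) must pass through the narrow 2-path $P$ in opposite directions, they must effectively ``swap'' across the haven of $n_R$; this swap is the only purpose of $R$'s traversal of $P$. I would show that $R$'s optimal strategy is to perform this swap as early as possible by moving directly from $v$ along $P$ to $n_R$ and parking in the haven, rather than first detouring through $V(Q)$. Concretely, in the modified schedule $R$ traverses $P$ immediately, reaches $n_R$ by some time $t_2' \leq t_2$, and then remains in the haven for the rest of the schedule, while the other robots (including $R_1$) execute their trajectories from $opt(\I)$ (with a time shift if needed) once $R$ is out of the way. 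Because $R$'s original detour contributes at least one extra round trip across the cut between $V(Q)$ and $V(P)$ beyond what the swap itself requires, the modified schedule has strictly smaller total traveled length, contradicting the optimality of $opt(\I)$.

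The main obstacle will be ensuring that the modified schedule is conflict-free. The subtle point is that in $opt(\I)$, $R$'s presence in $V(Q)$ at time $t_1$ may be a consequence of $R$ yielding to the movement of $R_1$ or of another free robot, so merely excising the detour could in principle produce collisions. However, the structural assumption $|P| \geq k+1$ provides sufficient room along $P$ to schedule the swap (since there are at most $k$ robots in total), and by Lemma~\ref{lem:swap} the haven of $n_R$ can absorb $R$ using only $\Oh(k^3)$ moves within the haven. Using these two facts, I can realize the modified schedule with strictly less total traveled length than $opt(\I)$, delivering the desired contradiction.
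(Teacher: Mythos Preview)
Your approach diverges from the paper's and, as written, does not close. Two concrete issues:

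\textbf{The contradiction does not follow from your cost accounting.} You claim the modified schedule is \emph{strictly} shorter, but the only savings you identify is ``at least one extra round trip across the cut between $V(Q)$ and $V(P)$'', i.e.\ at least two moves, while you simultaneously invoke Lemma~\ref{lem:swap} to absorb $R$ in the haven at a cost of $\Oh(k^3)$ moves. An additive overhead of $\Oh(k^3)$ against a saving of $\geq 2$ does not yield a strictly shorter schedule, so no contradiction with optimality is obtained. More generally, your modification (``send $R$ straight to the haven, then let everyone else run $opt(\I)$'') is not obviously conflict-free: $R$'s original presence in $Q$ at time $t_1$ may have been precisely what allowed another robot to occupy $R$'s would-be position on $P$, and you give no mechanism to repair this at zero cost.

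\textbf{The transfer of the argument from Lemma~\ref{lem:onerobot} is not valid here.} In that lemma the order is $n_R$ then $Q$, and the conclusion ``$t_1\in V(P)\cup V(Q)$'' follows because $R$'s \emph{return} to $Q$ can only be explained by $R_1$ entering $P\cup Q$ last. In the present lemma the order is reversed, and the inference does not carry over: $R$ may be pushed toward $n_R$ for reasons having nothing to do with $R_1$'s destination lying in $V(P)\cup V(Q)$.

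The paper's proof avoids both difficulties by a short case split on \emph{who} pushes $R$ into $Q$. If it is $R_1$ (to change their order inside $Q$), then once that swap has happened $R$ only needs to make room for $R_1$ on $P$, and since $|P|\geq k+1$ there is enough space on $P$ itself, so $R$ never needs to reach $n_R$. If instead $R$ is pushed into $Q$ by another free robot $R'\in\FFF$, the paper uses an \emph{identity exchange}: since neither $R$ nor $R'$ has a destination, one can relabel the two robots from that point on, so that in the modified schedule $R'$ is the one that later proceeds to $n_R$ while $R$ stays in $Q$. This yields a strictly shorter schedule (the back-and-forth that realised the $R$/$R'$ swap becomes unnecessary) without any haven reconfiguration, and hence the desired contradiction. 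This relabeling trick for indistinguishable free robots is the idea your proposal is missing.
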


  \begin{proof}
Let $R_1 \neq R$ be the single robot in $\M$. The robot $R$ might visit $Q$ because it is pushed inside by $R_1$ or by some other robot $R'\in \FFF$. If it is pushed by $R_1$ to change their order in $Q$, then it would  never need to go as far as $n_R$ (since $|P| \geq k+1$) after visiting $Q$ to make space for $R_1$. On the other hand, if it is pushed by another robot $R'$, then since neither of the two has a destination, we can exchange $R$ and $R' \in \FFF$, and change the schedule to a shorter one in which $R'$ goes to $n_R$ and $R$ ends up in $Q$ instead.
\end{proof}

\begin{longlemma} 
 \label{lem:simplify_haven}
Let $\I=(G, \R=(\M,\FFF), k, \ell)$ be an instance of \cmpl{}, $v$ a nice vertex, $H_v$ a haven for $v$, and $S$ a schedule for $\I$. There is a polynomial time algorithm that takes $S$ and computes a schedule $S'$ such that:
\begin{enumerate}
    \item each robot $R_i\in \R$ 
         enters $H_v$ at most once and leaves $H_v$ at most once; moreover, if $R_i$ starts in $H_v$, then it will never re-enter $H_v$ if it leaves it;
    \item the total traveled length of all the robots outside of $H_v$ in $S'$ is upper-bounded by the total traveled length of all the robots outside of $H_v$ in $S$; and
    \item the total traveled length of all the robots inside of $H_v$ in $S'$ is $\Oh(k^4)$. 
\end{enumerate}
\end{longlemma}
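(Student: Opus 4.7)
The plan is to construct $S'$ by consolidating each robot's visits to $H_v$ into a single contiguous in-haven interval, exploiting the reconfiguration flexibility guaranteed by Lemma~\ref{lem:swap}. For each robot $R_i \in \R$ that is inside $H_v$ at some time in $S$, let $a_i$ (resp.~$b_i$) denote the first (resp.~last) such time step. In $S'$, robot $R_i$ is inside $H_v$ precisely during $[a_i, b_i]$: it enters at time $a_i$ at the same vertex as in $S$, exits at time $b_i$ from the same vertex as in $S$, and outside of $[a_i, b_i]$ it follows its original route (which is already outside $H_v$ by definition of $a_i$ and $b_i$). Robots that are never in $H_v$ during $S$ simply inherit their original routes. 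By construction, each robot has at most one entry and at most one exit event in $S'$, and a robot starting inside $H_v$ has $a_i = 0$ so it will not re-enter even if it leaves, which establishes condition~(1).

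To realize this schedule, I process the events (entries and exits) of $S'$ in chronological order; there are at most $2k$ of them. Between consecutive events, the robots inside $H_v$ stay put. Immediately before each event, I insert a block of wait time steps during which every robot outside $H_v$ remains stationary, and invoke Lemma~\ref{lem:swap} on the at most $k$ robots currently inside $H_v$ to reconfigure them to an appropriate target configuration. For an entry event at time $a_i$, this target leaves the landing vertex of $R_i$ unoccupied; for an exit event at time $b_i$, it places $R_i$ at its original exit vertex, so that the corresponding boundary-crossing move of $S$ can be faithfully replayed in $S'$. Consequently, each robot's sequence of outside moves in $S'$ is exactly its sequence of outside moves in $S$ restricted to time steps outside $[a_i, b_i]$ (with extra wait steps interspersed), so no new outside conflicts are introduced and the total outside travel does not exceed that of $S$, yielding condition~(2).

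Finally, condition~(3) follows by a direct accounting: there are $\Oh(k)$ events, each preceded by a single reconfiguration inside $H_v$ costing $\Oh(k^3)$ moves by Lemma~\ref{lem:swap}, plus $\Oh(k)$ boundary-crossing moves; hence the total inside travel of $S'$ is $\Oh(k^4)$. The main technical point to verify is that every required reconfiguration can indeed be realized: this relies on the fact that at every moment $H_v$ contains at most $k$ robots, which fits within the capacity required by Lemma~\ref{lem:swap}, and that the landing and exit vertices of each event lie inside $H_v$ by the choice of $a_i, b_i$. The construction runs in polynomial time, since there are only $\Oh(k)$ events and each reconfiguration is computed in polynomial time by the algorithm underlying Lemma~\ref{lem:swap}.
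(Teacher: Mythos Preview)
Your proposal is correct and follows essentially the same approach as the paper: both arguments keep each robot inside $H_v$ for the single contiguous interval between its first and last visit in $S$, replay the original route outside that interval (interspersed with ``freeze'' periods), and invoke Lemma~\ref{lem:swap} at each of the at most $2k$ entry/exit events to reconfigure the haven at cost $\Oh(k^3)$ per event. The only cosmetic difference is that the paper indexes events by the first/last \emph{vertex} $v^i_{\text{first}}, v^i_{\text{last}}$ rather than by the first/last \emph{time step} $a_i, b_i$, but these are equivalent formulations of the same construction.
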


\begin{proof}
We will construct the schedule $S'$ as follows. 
For every robot $R_i\in \R$, define $v^i_{first}$ and $v^i_{last}$ as the first and last vertex, respectively, that $R_i$ visits in $H_v$. 
Let $W_i$ be the route for $R_i$ in $S$. 
We let its route $W_i'$ in $S'$ follow $W_i$ until it reaches the vertex $v^i_{first}$ for the first time, with a slight modification that we might ``freeze'' $R_i$ at its place for several time-steps when some other robot is either about to enter $H_v$ or is leaving $H_v$. 
In addition, when $R_i$ enters $H_v$ in $v^i_{first}$, we change its routing in a way that it stays in $H_v$ until it is its time to leave from $v^i_{last}$ and never returns to $H_v$ again. 
That is, when a robot $R_i$ is about to enter $v^i_{first}$, we ``freeze'' the execution of the schedule and reconfigure the robots currently in $H_v$ so that $v^i_{first}$ is empty, for example by using Lemma~\ref{lem:swap}, and then move $R_i$ to $v^i_{first}$. 
From now on, we keep $R_i$ in $H_v$, possibly moving it around each time we ``freeze'' the execution of the schedule to let a robot enter $H_v$ or leave $H_v$. 
When all the robots are either in $H_v$ or on the positions where they are at the time when $R_i$ moves from $v^i_{last}$ to some neighbor of $v^i_{last}$ and never returns to $H_v$ afterwards, we ``freeze'' the execution of the schedule and we reconfigure $H_v$ to a configuration where $R_i$ is at $v^i_{last}$ by using Lemma~\ref{lem:swap}. 
From this position $R_i$ again follows $W_i$. 
It is easy to see that the total traveled length of $S'$ outside of $H_v$ is at most the total traveled length of $S$ outside of $H_v$, as each robot is taking exactly the same steps until it reaches $H_v$ for the first time, then it stays in $H_v$ and after it leaves $H_v$, it is repeating the same steps as it does in $S$ after leaving $H_v$ for the last time. 
In addition, all the robot movements inside of $H_v$ took place only when we froze the execution of the schedule to either let a robot enter $H_v$ or leave $H_v$. This happens at most $2k$ times and each reconfiguration takes $\Oh(k^3)$ steps by Lemma~\ref{lem:swap}.
Hence the total traveled length of $S'$ inside of $H_v$ is at most $\Oh(k^4)$.
Since we also modified the solution such that every robot enters and leaves $H_v$ in $S'$ at most once, the lemma follows.\footnote{With a more careful analysis, it is possible to show that we only need at most $k$ steps to enter $H_v$ and at most $\Oh(k^2)$ to leave.}
\end{proof}
\fi

\iflong \begin{longlemma} 
 \label{lem:freebound}
$\I=(G, \R=(\M, \FFF), k, \ell)$ be an instance of \cmpl{}. 
Let $v$ be a nice vertex and $H_v$ be its haven. Then in any optimal solution $opt(\I)$ for $\I$, the total traveled length of all robots in $\R$ in $H_v$, as well as the total number of vertices of $H_v$ visited by $\R$, is $\Oh(k^4)$. 
\end{longlemma}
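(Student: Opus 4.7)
My approach is to derive both bounds as an almost immediate consequence of Lemma~\ref{lem:simplify_haven}, combined with the optimality of $opt(\I)$. I would first apply Lemma~\ref{lem:simplify_haven} to $opt(\I)$ to obtain a modified schedule $S'$. Writing $\mathrm{in}(X)$ and $\mathrm{out}(X)$ for the total traveled length inside and outside $H_v$, respectively, in a schedule $X$, property (2) of that lemma gives $\mathrm{out}(S') \leq \mathrm{out}(opt)$ and property (3) gives $\mathrm{in}(S') = \Oh(k^4)$.

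Since $opt(\I)$ has minimum total traveled length over all valid schedules and $S'$ is valid, $\mathrm{in}(opt) + \mathrm{out}(opt) \leq \mathrm{in}(S') + \mathrm{out}(S')$. Combining this with the two inequalities above and cancelling the common $\mathrm{out}(opt)$ term yields
\[
\mathrm{in}(opt) \;\leq\; \mathrm{in}(S') + \mathrm{out}(S') - \mathrm{out}(opt) \;\leq\; \Oh(k^4),
\]
which establishes the first claim of the lemma.

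For the second claim, I would classify every distinct vertex of $H_v$ ever occupied by some robot of $\R$ in $opt(\I)$ into one of two categories: either (i) the vertex is the starting position of a robot that begins inside $H_v$, of which there are at most $|\R| = k$; or (ii) the vertex is first reached by some robot via a move whose endpoint lies inside $H_v$. Each move of type (ii) contributes at least one unit to $\mathrm{in}(opt)$ (under any reasonable convention where boundary-crossing moves are accounted for in the inside/outside split consistent with Lemma~\ref{lem:simplify_haven}). Summing over both categories gives a total count of at most $k + \Oh(k^4) = \Oh(k^4)$, as required.

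The argument is essentially mechanical once Lemma~\ref{lem:simplify_haven} is in hand; the only point requiring care is the bookkeeping in the vertex-count step, where one must verify that every newly-visited vertex of $H_v$ is either a starting position or is charged to a move contributing to $\mathrm{in}(opt)$, so that the count is controlled by the $\Oh(k^4)$ bound established in the first part.
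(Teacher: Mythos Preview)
Your proposal is correct and takes essentially the same approach as the paper: both arguments apply Lemma~\ref{lem:simplify_haven} to $opt(\I)$ and use optimality to conclude that the inside-$H_v$ length cannot exceed the $\Oh(k^4)$ bound of property~(3), with the paper phrasing this as a contradiction and you as a direct inequality chain. The paper's proof additionally tracks an explicit $2k$ term for boundary-crossing moves (from property~(1)), which you allude to in your parenthetical about conventions; and the paper does not spell out the vertex-count claim at all, whereas your charging argument for it is fine.
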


 \begin{proof}
    Assume, for contradiction, that the total length traveled of $opt(\I)$ inside $H_v$ is at least $c\cdot k^4 + 2k + 1$, where $c$ is the constant in big-Oh notation of Lemma~\ref{lem:simplify_haven}. Then by Lemma~\ref{lem:simplify_haven}, we can turn $opt(\I)$ into a schedule $S$ such that each robot moves exactly the same outside of $H_v$ in both $S$ and $opt(\I)$, each robot passes an edge between $H_v$ and the rest of the graph at most twice, and the total length traveled by all the robots inside $H_v$ in $S$ is at most $c\cdot k^4$. But then the total length traveled in $S$ is smaller than the total length traveled by $opt(\I)$, which is a contradiction. 
\end{proof}
\fi

\iflong 
\begin{longlemma} 
 \label{lem:niceproximity}
$\I=(G, \R=(\M, \FFF), k, \ell)$ be an instance of \cmpl{}. 
Then in any optimal solution $opt(\I)$, the total travelled length of any $R \in \FFF$ that is at distance $\lambda$ from a nice vertex is $\Oh(\lambda\cdot k + k^4)$.
\end{longlemma}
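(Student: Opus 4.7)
My plan is to prove the lemma by an exchange argument: assuming $R$'s travelled length in $opt(\I)$ exceeds $c(\lambda k + k^4)$ for a sufficiently large constant $c$, I construct a strictly cheaper feasible schedule, contradicting optimality.

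Let $v$ be a nice vertex at distance $\lambda$ from $R$'s starting vertex $s_R$, and let $H_v$ be its haven. I first apply Lemma~\ref{lem:simplify_haven} to $opt(\I)$ to obtain an optimal schedule $S$ in which every robot enters and leaves $H_v$ at most once and the total travelled length inside $H_v$ is $\Oh(k^4)$; since the transformation does not increase cost, we may assume $S$ itself is optimal. The key structural property I use is that in $S$ all inside-$H_v$ movement takes the form of $\Oh(k)$ reconfiguration events (one per entry/exit), each produced by Lemma~\ref{lem:swap} at cost $\Oh(k^3)$.

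I then construct an alternative schedule $S'$ in three phases. In Phase~1, move $R$ along a shortest path $P$ from $s_R$ to $v$, pushing any blocking robots forward; once the queue reaches $v$, blockers are deflected into a non-$P$ neighbor of $v$ inside $H_v$, which exists because $v$ has degree at least three by niceness. Since at most $k-1$ blockers each travel at most $\lambda$ steps, the cost of this phase is $\Oh(\lambda k)$. In Phase~2, park $R$ inside $H_v$ off $P$ and slide each pushed blocker back to its original position on $P$, processing them in order of decreasing distance from $v$ so that already-restored blockers do not obstruct later ones; this adds another $\Oh(\lambda k)$. After Phase~2, all robots other than $R$ occupy their time-$0$ positions of $S$, while $R$ sits inside $H_v$. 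In Phase~3, execute $S$ for every robot other than $R$; whenever $S$ would perform an entry/exit reconfiguration inside $H_v$, I invoke Lemma~\ref{lem:swap} with $R$ included as an additional robot in the reconfigured set. Since the set still contains at most $k$ robots, each reconfiguration still costs $\Oh(k^3)$, and there are only $\Oh(k)$ of them, giving $\Oh(k^4)$ total movement inside $H_v$ during Phase~3.

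Summing the phases, the total cost of $S'$ is bounded by (travel of other robots outside $H_v$ in $S$) $+\,\Oh(\lambda k + k^4)$, while the cost of $S$ equals $L_R(S) +$ (travel of other robots outside $H_v$ in $S$) $+$ (travel of other robots inside $H_v$ in $S$), with the last term non-negative. The optimality of $S$ then forces $L_R(S) = \Oh(\lambda k + k^4)$. The main obstacle in this plan is verifying that Phase~3 can be realized at the claimed cost even though $R$ is threaded through every pre-existing reconfiguration; this is resolved by the fact that Lemma~\ref{lem:swap} gives an explicit $\Oh(k^3)$ bound for reconfiguring \emph{any} subset of up to $k$ robots in $H_v$ to any desired configuration, so adjoining $R$ to the participant set of each event is an asymptotically free drop-in, and no cascading conflicts arise.
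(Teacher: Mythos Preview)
Your exchange argument is essentially the paper's: park $R$ in the haven $H_v$ at cost $\Oh(\lambda k)$, replay the other robots' routes while keeping $R$ inside $H_v$, and absorb the resulting inside-$H_v$ overhead into the $\Oh(k^4)$ budget supplied by $\Oh(k)$ invocations of Lemma~\ref{lem:swap}. The only structural difference is the order in which you invoke Lemma~\ref{lem:simplify_haven}: you apply it to $opt(\I)$ up front, whereas the paper applies it to the constructed alternative schedule at the end.

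There is, however, a genuine gap in your first step. You assert that ``the transformation does not increase cost, so we may assume $S$ itself is optimal''. Lemma~\ref{lem:simplify_haven} does \emph{not} guarantee non-increasing total cost: it only says the outside-$H_v$ cost does not increase and the inside-$H_v$ cost becomes $\Oh(k^4)$. If $opt(\I)$ already had tiny inside-$H_v$ cost, the transformation can raise the total. Hence $S$ need not be optimal, and the inequality $\text{cost}(S')\ge \text{cost}(S)$ that you invoke at the end is unavailable. A second, related gap is that you conclude with a bound on $L_R(S)$, while the lemma concerns $L_R(opt(\I))$; you never relate the two.

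Both issues are easy to repair. Compare $S'$ against $opt(\I)$ rather than $S$: since Lemma~\ref{lem:simplify_haven} makes each robot's outside-$H_v$ trajectory a subsequence of its original one, the Phase~3 cost for robots other than $R$ is at most their cost in $opt(\I)$, so $\text{cost}(S')\le \text{cost}(opt(\I)\text{ for robots}\neq R)+\Oh(\lambda k+k^4)$. Optimality of $opt(\I)$ then gives $L_R(opt(\I))\le \Oh(\lambda k+k^4)$ directly. Equivalently, you can drop the initial application of Lemma~\ref{lem:simplify_haven} altogether and run Phase~3 against $opt(\I)$, applying Lemma~\ref{lem:simplify_haven} to the \emph{constructed} schedule to control its inside-$H_v$ cost---which is precisely what the paper does.
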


 \begin{proof}
    Assume for a contradiction that $R \in \FFF$ starts at a vertex $s_R\in V(G)$ that is at distance $\lambda$ from a nice vertex $n_R$, but the traveled length of $R$ in $opt(\I)$ is at least $2\lambda\cdot k +c_1\cdot k^3 + c_2\cdot k^4 +2k+1$, where $c_1$ is the constant in the big-Oh notation from Lemma~\ref{lem:swap} and $c_2$ is the constant in big-Oh notation from Lemma~\ref{lem:simplify_haven}.

    We can now modify $opt(\I)$ by first moving $R$ inside $H_v$, this can be done with total traveled length at most $2\lambda\cdot k + c_1\cdot k^3$ by pushing all robots along a shortest $s_R$-$n_R$ path to $H_v$ using at most $\lambda\cdot k$ steps, reconfiguring inside of $H_v$ using $c_1\cdot k^3$ steps such that the robots that we pushed in $H_v$ can leave back towards their starting positions using additional $\lambda\cdot k$ steps. Afterwards, we follow $opt(\I)$, keeping $R$ always in $H_v$ and whenever a different robot interacts with $H_v$, we reconfigure it using Lemma~\ref{lem:swap} to the position from which it needs to leave next. Let this schedule be $S$. Note that the total traveled length of $S$ outside of $H_v$ is at most $2\lambda\cdot k + c_1\cdot k^3$ larger than the total traveled length of $opt(\I)$ outside of $H_v$ and $R$ stays in $H_v$ after it entered it for the first time. By Lemma~\ref{lem:simplify_haven}, we can adapt $S$ to a schedule $S'$ such that $S'$ still does only at most $2\lambda\cdot k + c_1\cdot k^3$ more steps outside of $H_v$ than $opt(\I)$ does, but does at most $2k$ steps between $H_v$ and $G-H_v$ and at most $c_2\cdot k^4$ steps inside of $H_v$, which contradicts the optimality.  
\end{proof}
\fi

Before we are ready to prove the main theorem for this section, we first need the following lemma, which will allow us to restrict the movement of the free robots.

\begin{lemma} 
\label{lem:ball}
Let $\I=(G, \R=(\M, \FFF), k, \ell)$ be an instance of \cmpl{}.
For every $R_i \in \FFF$ that is within distance $\lambda_i$ from a nice vertex,
there is a set $B(R_i, \lambda_i)$  of vertices of cardinality $k^{\Oh(\lambda_i\cdot k +k^4)}$ such that, for any optimal solution $opt(\I)$, there exists an optimal solution $opt'(\I)$ satisfying that, in $opt'(\I)$ $R_i$ moves only on vertices in $B(R_i, \lambda_i)$ and all the remaining robots move exactly the same in $opt(\I)$ and $opt'(\I)$.
 Moreover, $B(R_i, \lambda_i)$ is computable in $\Oh(|V(G)|+|E(G)|)$ time. 
\end{lemma}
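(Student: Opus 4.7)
The plan is to combine the travel-length bound from Lemma~\ref{lem:niceproximity} with a controlled branching construction that exploits the structural dichotomy of Lemma~\ref{lem:types}. Let $\tau := c(\lambda_i k + k^4)$ denote the upper bound on the total travel length of $R_i$ in any optimal solution guaranteed by Lemma~\ref{lem:niceproximity}; consequently every vertex visited by $R_i$ in an optimal solution lies at distance at most $\tau$ from $s_i$, and $R_i$'s route is a walk of length at most $\tau$ in $G$. Lemma~\ref{lem:specialcase} additionally allows me to assume that $G$ contains no type-(4) vertex, so that every vertex of $G$ is either nice, close to a nice vertex, on a 2-path between nice vertices, or inside a small ($\Oh(k)$-sized) pocket attached to a nice vertex via a 2-path.

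I will construct $B(R_i, \lambda_i)$ by a depth-$\tau$ branching exploration rooted at $s_i$, where from a visited vertex $u$ the exploration branches only into a set $N^\star(u)$ of admissible ``next vertices'' of size $\Oh(k)$, chosen according to the type of $u$ in the sense of Definition~\ref{def:type}: for a nice vertex $w$, $N^\star(w)$ consists of the neighbours of $w$ lying inside the haven $H_w$ together with its neighbours on each of its incident 2-paths; for a 2-path vertex, $N^\star(u)$ is the at most two 2-path neighbours; and for a vertex lying inside a small connected pocket cut off by a 2-path, $N^\star(u)$ consists of all of its (at most $\Oh(k)$) pocket neighbours. A single BFS from $s_i$ in $\Oh(|V(G)|+|E(G)|)$ time suffices to classify the visited vertices by type and to produce their $N^\star$ sets, and by the branching-factor bound the resulting set satisfies $|B(R_i,\lambda_i)|\leq |N^\star|^\tau = k^{\Oh(\tau)} = k^{\Oh(\lambda_i k + k^4)}$, as required.

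For the correctness argument, I will take an arbitrary optimal $opt(\I)$ and modify only $R_i$'s route to obtain $opt'(\I)$ whose support lies in $B(R_i,\lambda_i)$. The modification decomposes the walk of $R_i$ in $opt$ into segments between consecutive ``transitions'' (entries to or exits from havens, 2-paths, and pockets) and replaces each segment by a normalized substitute whose edges lie in $N^\star$: haven segments are rerouted using the reconfiguration scheme underlying Lemma~\ref{lem:simplify_haven}, applied only to $R_i$, so that it remains inside the haven until it must leave; 2-path segments are replaced by straight-line traversal between the 2-path's endpoints; and pocket segments are replaced by direct navigation inside the bounded-size pocket. Since $R_i \in \FFF$ has no destination and the replacements only reshuffle $R_i$'s position in space--time without touching any other robot's trajectory, the resulting total length does not exceed that of $opt(\I)$, and any residual conflicts with the (fixed) other robots can be resolved by inserting waiting steps for $R_i$ alone, which leaves the total length unchanged because $R_i$ is a free robot.

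The main obstacle will be showing that restricting $R_i$ to $N^\star(u)$ at high-degree nice vertices is without loss of generality, since a nice vertex $w$ may have many neighbours in $G$ outside $H_w$ which $R_i$ might use in an arbitrary optimal $opt(\I)$. Handling this cleanly requires arguing that any such detour of $R_i$ outside the haven of a visited nice vertex can be emulated inside the haven by exploiting its three subgraphs $C_1, C_2, C_3$ together with the $\Oh(k^3)$-move reconfiguration of Lemma~\ref{lem:swap}, at no net increase in traveled length because $R_i$ has no destination. It is here, at the interplay between the haven structure and the candidate-set definition $N^\star$, that I expect the technical heart of the proof to lie.
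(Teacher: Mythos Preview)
Your approach differs substantially from the paper's, which is much simpler and avoids the type classification of Lemma~\ref{lem:types} altogether. The paper's key observation is that any vertex $v$ of degree at least $c_1 k^4 + k + 1$ is automatically the centre of a haven (take $v$ together with $c_1 k^4 + k + 1$ arbitrary neighbours). By Lemma~\ref{lem:freebound}, any optimal schedule visits at most $c_1 k^4 + k$ of $v$'s neighbours, so a free robot that reaches $v$ can, after at most one further step, be redirected to an unvisited neighbour and left there permanently without affecting any other robot. The set $B(R_i,\lambda_i)$ is then simply the set of vertices reachable from $s_i$ by a walk of length at most $\tau$ in which only the penultimate vertex may have high degree, augmented by $\Oh(k^4)$ arbitrary neighbours of each high-degree vertex reached. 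This yields a uniform branching factor of $\Oh(k^4)$ and a direct depth-bounded BFS that stops at high-degree vertices.

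Your type-based definition of $N^\star$ has a genuine gap precisely where you anticipated it. A nice vertex $w$ may have unboundedly many neighbours that lie neither in the chosen haven $H_w$ nor on incident 2-paths (for instance, direct edges to other nice vertices), and $R_i$ may need to \emph{traverse} $w$ toward such a neighbour, not merely make a local detour that can be emulated inside $H_w$. Your proposed fix does not handle traversal. Moreover, the conflict-resolution step is unsound as stated: inserting waiting steps for $R_i$ does not resolve a conflict in which another robot (whose route you have frozen) must occupy $R_i$'s current vertex at a given time; $R_i$ must actually vacate, and you have not shown it can always do so within $B$ without increasing length. Finally, Lemma~\ref{lem:simplify_haven} reconfigures \emph{all} robots in the haven, so invoking it ``applied only to $R_i$'' is not supported by that lemma. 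The paper sidesteps all of this: its modification to $opt(\I)$ consists of a single redirection of $R_i$'s last step at a high-degree vertex to a neighbour no robot ever visits, which is trivially conflict-free and length-non-increasing.
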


\iflong
\begin{proof}
First observe that if a vertex $v$ has degree at least $c_1\cdot k^4+k+1$, where $c_1$ is the constant from Lemma~\ref{lem:freebound}, then any connected subgraph $H_v$ of $G$ that consists of $v$ and $k^4+k+1$  
many arbitrary neighbors of $v$ is a haven. 
By Lemma~\ref{lem:freebound}, any optimal solution visits at most $c_1\cdot k^4+k$ vertices of $H_v$ and hence also at most $c_1\cdot k^4+k$ neighbors of $v$. 
This is because every robot enters $H_v$ at most once, and the total traveled length inside $H_v$ is at most $c_1\cdot k^4$. 
It follows that in any optimal schedule $opt(\I)$, any free robot $R_i$ that enters $v$ does at most one step after entering $v$. 
Otherwise, we could get a shorter schedule by moving $R_i$, after entering $v$ for the first time, directly to any neighbor of $v$ in $V(H_v)$ that is not visited by any robot. 
In addition, if $opt(\I)$ moves $R_i$ outside $H_v$ after visiting $v$, then we can modify the schedule $opt(\I)$ to obtain schedule $opt'(\I)$ in which $R_i$ enters $v$ and then moves to a neighbor $w$ of $v$ such that $w\in V(H_v)$ and $w$ is not visited by any other robot in $opt(\I)$. 
By Lemma~\ref{lem:niceproximity}, the total traveled length of any free robot $R_i\in \FFF$ is at most $c_2(\lambda_i\cdot k + k^4)$.
From the discussion above, to obtain $B(R_i,\lambda_i)$, we only need to keep vertices that can be reached from the starting position $s_i$ of $R_i$ by a path $P$ of length at most $c_2(\lambda_i\cdot k + k^4)$ such that only the penultimate vertex on $P$ can have degree larger than $c_1\cdot k^4+k$ in $G$. 
Moreover, for every vertex that has degree larger than $c_1\cdot k^4+k$, we only need to keep $c_1\cdot k^4+k$ arbitrary neighbors. 
It follows that we can obtain the set $B(R_i, \lambda_i)$ by first a modified depth-bounded BFS from $s_i$, such that for any vertex $v$ of degree at least $c_1\cdot k^4+k+1$, we do not continue the BFS from this vertex. 
The depth of this BFS is bounded by  $c_2(\lambda_i\cdot k + k^4)$ and it is easy to see that the number of vertices marked by this BFS to be included in $B(R_i, \lambda_i)$  is at most $(c_1\cdot k^4+k+1)^{1+ c_2(\lambda_i\cdot k + k^4)}$.
Afterwards, for every vertex in $B(R_i, \lambda_i)$ with degree at least $c_1\cdot k^4+k+1$, but with less than $c_1\cdot k^4+k+1$ many neighbors in $B(R_i, \lambda_i)$, we add arbitrary $c_1\cdot k^4+k+1 - |N_G(v)\cap B(R_i, \lambda_i)|$ many neighbors of $v$ to $B(R_i, \lambda_i)$. 
This concludes the construction of $B(R_i, \lambda_i)$. 
It is easy to see that $|B(R_i, \lambda_i)|\le (c_1\cdot k^4+k+1)^{1+ c_2(\lambda_i\cdot k + k^4)}\cdot (c_1\cdot k^4+k+1) = k^{\Oh(\lambda_i+k^4)}$ and the lemma follows.
\end{proof}
\fi

 \begin{theorem}
\label{thm:fptapx}
 Let $\I=(G, \R=(\M,\F), k, \ell)$ be an instance of \cmpl{}. In \FPT{}-time, we can reduce $\I$ to $p=g(k)$-many instances (for some computable function $g(k)$) $\I_1, \ldots, \I_p$ such that  $\I$ is yes-instance if and only if there exists $j\in [p]$ such that $\I_j$ is yes-instance, and for all $j\in [p]$:
 
 \begin{enumerate}
\item We can, in polynomial time, compute a schedule for 
$\I_j$ in which the total traveled length is at most $\sum_{R_i \in \M_j}\dist(s_i, t_i) + c \cdot k^5$, for some constant $c > 0$; 
\item given a schedule for $\I_j$ with total traveled length $\ell_j+d$, for some $d\in \mathbb{Z}$, we can in polynomial time compute a schedule for $\I$ of cost at most $\ell+d$; 

\item the instance  $\I_j=(G_j, \R_j=(\M_j, \F_j), k_j, \ell_j)$ satisfies that $k_j \leq k$, $\ell_j \leq \ell$, and $G_j$ is a subgraph of $G$ obtained by possibly removing degree-2 vertices from $G$; and 
    
\item  if $\M=\{R_1\}$ then $\I_j=(G_j, \R_j=(\M_j, \emptyset), k_j, \ell_j)$, where $k_j \leq k$ and $\ell_j \leq \ell$.
    Moreover, in this case there is an optimal solution $opt(\I_j)$ such that $|opt(\I_j)| = \dist_{G_{j}}(s_1, t_1)+\Oh(k^5)$, and for every robot $R_i \in \M_j \setminus \{R_1\}$, 
    the moves of $R_i$ in $opt(\I_j)$ are restricted to a vertex set of cardinality at most $k^{\Oh(k^4)}$ that is computable in linear time.    
\end{enumerate}
Furthermore, every optimal schedule for $\I$ satisfies the property that every vertex in $G$ is visited at most $\Oh(k^5)$ many times in the schedule. 
\end{theorem}
\ifshort
\begin{proof}[Proof Sketch]
    Let $\I=(G, \R=(\M, \FFF), k, \ell)$ be an instance of \cmpl{}, and let $opt(\I)$ denote an optimal schedule for $\I$. We give a nondeterministic reduction that produces a single instance $\I_j=(G_j, R_j=(\M_j, \emptyset), k_j, \ell_j)$, which can be made deterministic in \FPT{}-time.
    
    To construct $\I_j$, we will be making guesses about the initial and final segments of the robots' routes in $opt(\I)$, which may lead to redefining the starting and final positions for some of them, with the purpose of getting them close to havens -- as explained at the beginning of this section. The guessing is based on the types of the starting and final positions of the robots (see Lemma~\ref{lem:types}). We set $\ell_j=\ell$.

\smallskip
\noindent    \textbf{Case I.}\quad
    If the starting or the final position of a robot is at distance at most $11k$ from a nice vertex, we do not change it. 

\smallskip
\noindent    \textbf{Case II.}\quad
    If the starting or the final position of a robot is a type-(2) vertex on a path $P$ that connects two nice vertices, then we look separately at all the robots that start on $P$ and all the robots that end on $P$ (guessing for each free robot whether it ends on $P$).
    Let us first consider the robots that start on $P$. 
    We make a guess, for each of these robots, about whether it ever leaves $P$, and if it does, from which endpoint of $P$ it leaves for the first time. This splits the robots on $P$ into three sets $S_{\text{Left}}$, $S_{\text{Mid}}$, $S_{\text{Right}}$. Observe that these three sets have to be ``consecutive'' on $P$ (i.e., their robots respect the order of the sets), and we push the starting points of the robots in $S_{\text{Left}}$ and in $S_{\text{Right}}$ to distance at most $k$ from the nice vertex they leave from for the first time. Similarly, we split the robots that end on $P$ into the sets $T_{\text{Left}}$, $T_{\text{Mid}}$, $T_{\text{Right}}$. 
    We redefine the destinations of the robots in $T_{\text{Left}}$ and $T_{\text{Right}}$ to vertices at distance at most $k$ from their respective nice vertices through which they enter $P$, and we compute the additional travel length incurred by moving them directly from these positions at distance at most $k$ from  nice vertices to their destinations and subtract this number from $\ell_j$.
    Recall that, according to our guess, $S_{\text{Mid}}$ and $T_{\text{Mid}}$ each contains precisely the robots that never leave $P$, and hence $S_{\text{Mid}}=T_{\text{Mid}}$.
              We show that for most of the robots in $S_{\text{Mid}}$, any optimal solution takes them directly from their starting positions to their destinations, except for those that start and end at distance at most $k$ from the same nice vertex. 
    For those that do not start and end at distance at most $k$ from the same nice vertex, 
    we compute the travel length incurred by taking them directly to their destinations, subtract it from $\ell_j$, and delete these robots from $\R$.
    Finally, if $S_{\text{Mid}}\neq \emptyset$, then no robot can use $P$ to move between its two endpoints. Moreover, if a robot is on $P$, then it does not need to go beyond the $k$-closest vertices on $P$ to a nice vertex. Hence, if $S_{\text{Mid}}\neq \emptyset$, then we delete all the vertices on $P$ that are at distance at least $k+1$ from a nice vertex.

\smallskip
\noindent    \textbf{Case III.}\quad
    If the starting or the final position of a robot is a type-(3) vertex, which consists of a nice vertex $n_R$ joined by a 2-path $P$ to a vertex $u$ in a connected subgraph $C$ of size at most $8k$, then we start by treating this situation the same as that of a type-(2) vertex, with the role of the second nice vertex replaced by $u$. 
    In particular, we guess the sets $S_{\text{Left}}$, $S_{\text{Mid}}$, $S_{\text{Right}}$, $T_{\text{Left}}$, $T_{\text{Mid}}$, $T_{\text{Right}}$,
         where $S_{\text{Right}}$ (resp.~$T_{\text{Right}}$) are the robots that start their routing by going from a vertex on $P$ to $u$, or from $u$ to a vertex on $P$. 
    We also change the starting and ending positions for the robots that are not in $S_{\text{Mid}}$ to vertices that are at distance at most $k$ from either $n_R$ or $u$; we compute the routing that reflects this change, and subtract its travel length from $\ell_j$.  
    
    If $S_{\text{Mid}}\neq \emptyset$, then the vertices on which the robots in $S_{\text{Mid}}$ are positioned separate the robots that are in $S_{\text{Right}}\cup T_{\text{Right}}$ plus the robots that start or end in $C$, from the rest of the graph. Note that at this point, there is no interaction between the robots in $S_{\text{Right}}\cup T_{\text{Right}}\cup S_{\text{Mid}}$ and the other robots in $\R$. 
    Therefore, we can treat $C$ and the part of $P$ containing the robots in $S_{\text{Right}}\cup T_{\text{Right}}\cup S_{\text{Mid}}$ as a separate instance, which has the same structure as that of a type-(4) vertex. We solve this instance optimally using Lemma~\ref{lem:specialcase}, and keep on $V(P)$ only the $k$-closest vertices to $n_R$. On the other hand, if $S_{\text{Mid}}=\emptyset$ then some robots that start (resp.~end) in $S_{\text{Right}}$ (resp.~$T_{\text{Right}}$) or in $C$ might also visit $n_R$. 
    We show that, in the case of \cmplone, 
    this happens only in a very restricted setting where the single robot in $\M$ starts or ends in $C$, which can be dealt with easily with the help of Lemma~\ref{lem:specialcase}.
     On the other hand, if $|\M|\ge 2$, then this is no longer the case. We now further guess which of the robots that start (resp.~end) in $S_{\text{Right}}$ (resp.~$T_{\text{Right}}$) or in $C$ also visit $n_R$ on the other end of $P$.  Using Proposition~\ref{prop:reconfiguration}, we can show that we can always route these robots at the beginning (resp.~at the end) to (resp.~from) a position at distance at most $k$ from $n_R$ with overhead at most $\Oh(k^3)$ over an optimal routing. Afterwards, we keep in $G$ only the vertices on $P$ at distance at most $k$ from $n_R$, and remove the rest of $V(C) \cup V(P)$. 
    
    This concludes the reduction to obtain the instance $\I_j$ from $\I$.

    Let us now prove Statement 4. 
    Observe that all the free robots can now be either at distance at most $11k$ from a nice vertex and we can use Lemma~\ref{lem:ball} to compute the set $B(R_i,11k)$
    where they can move; or alternatively, they can be at distance at most $k$ from a connected subgraph of size at most $8k$ associated with a type-(3) vertex and they will stay there, and hence, their movement in this case is restricted to $9k$ vertices.
    
    We now show how to compute a schedule for $\I_j$ with total traveled length at most $\dist(s_1, t_1) + \Oh(k^5)$. To do so, we first move all the robots that are at distance at most $11k$ from a nice vertex $v$ to a haven $H_v$ of $v$. We then compute a shortest path $P$ between $s_1$ and $t_1$. We note that $P$ can only interact with a connected subgraph $C$ of a type-(3) vertex at the beginning or at the end. In both cases, we can use  Lemma~\ref{lem:specialcase} to compute an optimal routing of $R_1$ from $C$ (resp.~to) to the first (resp.~from the last) haven $P$ intersects.     
         Afterwards, all free robots on $P$ are in havens. We let $R_1$ follow $P$. Whenever $P$ interacts with a haven $H$ that contains a free robot, we replace the subpath of $P$ between the first interaction of $R_1$ with $H$ and the last interaction of $R_1$ with $H$ with a schedule computed by Lemma~\ref{lem:swap} that reconfigures $H$ taking $R_1$ between its two interaction positions. Each interaction with a haven increases the length of $P$ by $\Oh(k^3)$  and there are at most $k$ havens that contain a free robot. 
    Finally, we move the free robots at the end to their guessed destinations in $\I_j$.      We can show that this incurs a travel length of $\Oh(k^5)$. 

    For Statement 1,  if $|\M|=1$ then Statement 1 follows from Statement 4. Otherwise, all starting positions and destinations are at distance at most $11k$ from nice vertices. Due to this, we can show that we can start by incurring an $\Oh(k^2)$ travel length to route each robot to a nearest haven, and finish by incurring $\Oh(k^4)$ travel length to route all the robots from havens at distance at most $11k$ from their destinations to their destinations. We can then route the robots one-by-one from a ``starting'' haven to a ``destination'' haven along a shortest path in the same manner we routed the robot $R_1$ in the case of \cmplone{}. The routing of each robot incurs an overhead of $\Oh(k^4)$ above its shortest path length, and hence the total overhead for routing all the robots between havens is $\Oh(k^5)$ and Statement 1 follows.
\end{proof}
\fi

\iflong
\begin{proof}
Let $\I=(G, \R=(\M, \FFF), k, \ell)$ be an instance of the problem, and let $opt(\I)$ denote an optimal schedule for $\I$. The reduction will produce a set of instances 
$\I_1, \ldots, \I_p$, where $p=g(k)$,   satisfying the statement of the theorem. For ease of presentation, we will present a nondeterministic reduction that produces a single instance $\I_j=(G_j, R_j=(\M_j, \emptyset), k_j, \ell_j)$, and the reduction can be made deterministic in \FPT{}-time to produce a set of instances satisfying the statement of the theorem. 
We will assume that the reduction makes the right guesses; when we simulate the reduction deterministically, we will discard any enumeration resulting in inconsistent or incorrect guesses. To construct $\I_j$, we will be making guesses about the initial and final segments of the robots' routes in $opt(\I)$, which may lead to redefining the starting and final positions for some of them. The guessing is based on the types of the starting and final positions of the robots (see Lemma~\ref{lem:types}). Set $\ell_j=\ell$.
 Let $R$ be a robot located at a vertex $v_R$. If $v_R$ is nice then we keep $R$ at $v_r$. 

If $v_R$ is a type-(1) vertex, then we do nothing.

If $v_R$ is a type-(2) vertex, then $v_R$ is on a 2-path $P$ whose both endpoints are nice vertices, $n_R$ and $n'_R$. (For visualization purposes, think about $n'_R$ as being the ``left'' endpoint of $P$ and $n_R$ being its ``right'' endpoint.) Note that $|P| > 6k$ (otherwise, $R$ would be within distance $3k+1$ from a nice vertex and $v_R$ would be of type-(1)). Let $S$ be the set of robots that occupy vertices on $P$. We guess a partition of $S$ into three subsets $S_{\text{Left}}, S_{\text{Mid}}, S_{\text{Right}}$, where $S_{\text{Left}}$ is the subset of robots such that the first nice vertex they visit is $n'_R$, $S_{\text{Right}}$ is the subset of robots such that the first nice vertex they visit is $n_R$, and 
$S_{\text{Mid}}$ is the set of robots that do not visit any nice vertex, and hence remain on $P$. Note that $S_{\text{Left}}, S_{\text{Mid}}, S_{\text{Right}}$ may contain robots in $\FFF$. Observe that $S_{\text{Left}}$ must be the $|S_{\text{Left}}|$-many closest robots in $S$ to $n'_R$ and $S_{\text{Right}}$ must be the closest $|S_{\text{Right}}|$-many robots in $S$ to $n_R$. We route the robots in $S_{\text{Left}}$ and $S_{\text{Right}}$ as follows. For each robot $R' \in S_{\text{Left}}$ (resp.~$S_{\text{Right}}$), move it towards $n'_R$ (resp.~$n_R$), pushing along robots on its way if needed, until the first time it reaches a vertex that is at distance at most $k$ from $n'_R$ (resp.~$n_R$). 
Note that each robot in $S_{\text{Left}} \cup S_{\text{Right}}$ is now within distance $k$ from a nice vertex; let $\ell_1$ be the total traveled length during this routing. We change the starting position of each robot in $S_{\text{Left}} \cup S_{\text{Right}}$ in the instance $\I$ to its current position after the routing and subtract $\ell_1$ from $\ell_j$. 

If $v_R$ is a type-(3) vertex, let $C$ be the connected subgraph of size at most $8k$ that is connected by a 2-path $P$ to a nice vertex $n_R$, and such that $v_R \in V(C) \cup V(P)$. If $|P| \leq 3k$ then we do nothing, as in this case $v_R$ will be treated later as a type-(1) vertex. Let $u_R$ be the vertex connecting $P$ to $C$.

Suppose now that $|P| > 3k$, let $S$ be the subset of robots occupying vertices in $V(P)$ and let $S_C$ be the set of robots occupying vertices in $V(C)$. We treat this case similarly to how we treated type-(2) vertices with respect to $n_R$ and $u_R$, and guess a partitioning of $S$ into three subsets $S_{\text{Left}}, S_{\text{Mid}}, S_{\text{Right}}$, where $S_{\text{Left}}$ is 
the subset of robots such that the first vertex with degree at least three in $G$ they visit is $n_R$, $S_{\text{Right}}$ is the subset of robots such that the first vertex with degree at least three in $G$ they visit is $u_R$, and $S_{\text{Mid}}$ is the subset of robots that do not visit any nice vertex nor visit $u_R$, and hence never leave $P$. We route the robots in $S_{\text{Left}}$ and $S_{\text{Right}}$ as follows. For each 
robot $R' \in S_{\text{Left}}$ (resp.~$S_{\text{Right}}$), move it towards $n_R$ (resp.~$u_R$), pushing along robots on its way if needed, until the first time it reaches a vertex that is at distance at most $k$ from $n_R$ (resp.~$u_R$). 
  Let $\ell_2$ be the total traveled length during this routing. We redefine the starting position of each robot in $S_{\text{Left}} \cup S_{\text{Right}}$ in the instance $\I_j$ to be its current position after the routing and subtract $\ell_2$ from $\ell_j$. 
Now if $|\M| = 1$, then by Lemmas~\ref{lem:onerobot}~and~\ref{lem:onerobot2}, no free robot starts in $P$ and either visits $n_R$ and then comes back to $C$, or visits $C$ and then visits $n_R$. We note that in the case that $R_1$ either starts or ends in $C$, it is possible that free robots in $C$ cannot make way for $R_1$ to leave/enter $C$ and need to leave through $n_R$. It is not difficult to see that in this case $S_{\text{Right}}\cup S_{\text{Mid}} =\emptyset$. We can guess which robots in $C$ need to enter $n_R$ and use Lemma~\ref{lem:specialcase} to compute a schedule with the minimum total traveled length that lets precisely these robots leave $C$ toward $n_R$. Afterwards, we compute the routing of the guessed robots that takes them to vertices in $B_{\text{Left}}$ and subtract the total traveled length of this routing from $\ell_j$. We are done with this case for now. 
We note that, assuming that the guesses are correct, in all the above treatments, we maintain the equivalence between the two instances $\I$ and $\I_j$. The following treatment for the case where $|\M|>1$ may result in a non-equivalent instance as the treatment may result in an additive error of $\Oh(k^4)$ steps. 

If $|\M| > 1$, we guess which of the robots in $S_{\text{Right}}\cup S_C$ visit $n_R$ after visiting $C$ and guess the order in which they leave $C$ for the last time before they visit $n_R$ for their first time. We know that the optimal solution does route these robots to leave towards $n_R$ in this order, possibly with some additional robots entering $C$. Hence, it must be possible to reorder these robots as in their guessed order. Note that during this reordering, the remaining robots in $S_C\cup S_{\text{Right}}$ might need to move to some positions in $C$ or on the closest $k$ vertices to $C$. We guess the exact positions where they end after this reordering, and let these guessed positions be their new starting positions. By   Proposition~\ref{prop:reconfiguration}, in $\Oh(k^3)$ steps, we can reconfigure the robots that leave through $n_R$ in the correct order on $P$ and the remaining robots in $S_C\cup S_{\text{Right}}$ to their new starting positions. 
Afterwards, we can move these robots to $S_{\text{Left}}$ and shift their starting positions to be at distance at most $k$ from $n_R$ as we did above for the robots that started in $S_{\text{Left}}$. Let $\ell_3$ be the total traveled length during this rerouting. We subtract $\ell_3$ from $\ell_j$.

For a robot $R \in \M$, let $t_R$ be its destination. 

If $t_R$ is within distance $11k$ from a nice vertex (which includes $t_R$ being a type-(1) vertex), then we distinguish three cases depending on whether the current starting position $s'_R$ of $R$ is also at distance at most $11k$ from a nice vertex, or $s'_R$ is of type-(2), or $s'_R$ is of type-(3). If $s'_R$ is at distance at most $11k$ from a nice vertex, we do nothing. If $s'_R$ is of type-(2), since it is not close to a nice vertex, then our guess for $R$ was that it would never leave the path $P$ associated with this type-(2) vertex.  We treat this case as we do below as if $t_R$ were a type-(2) vertex. 
If $s'_R$ is of type-(3), since it is not close to a nice vertex, then our guess for $R$ was that it would never leave $P\cup C$, where $P$ and $C$, respectively, are the 2-path and the connected subgraph associated with this type-(3) vertex. We treat this case as we do below as if $t_R$ were a type-(3) vertex.

If $t_R$ is a type-(2) vertex, which is on a 2-path $P$ whose endpoints $n_R$ and $n'_R$ are nice, then for each $R' \in \F$, guess whether it ends on $P$ and guess its relative position on $P$ with respect to all the robots in $\R$ that end on $P$. Note that such a robot $R'$ either visits a nice vertex in $opt(\I)$, in which case its starting position is at distance at most $11k$ from a nice vertex, in which case we can assume that it ends on a vertex in $B(R',11k)$ (see Lemma~\ref{lem:ball}) and only allow the guess that $R'$ ends on $P$ if $P$ intersects $B(R',11k)$. 
Alternatively, if $R'$ starts on $P$ and never leaves it,  it will be either later removed from $\I_j$, or restricted to $k$ vertices on $P$ that are closest to either $n'_R$ or $n_R$. In either case, we can restrict the moves of $R'$ to a set $B(R')$ of vertices satisfying $|B(R')|= k^{\Oh(k^4)}$.

Now let $T$ be the set of robots that end on vertices on $P$. We guess a partitioning of $T$ into three subsets $T_{\text{Left}}, T_{\text{Mid}}, T_{\text{Right}}$, where $T_{\text{Left}}$ is the subset of robots such that the last nice vertex they visit is $n'_R$, $T_{\text{Right}}$ is the subset of robots such that the last nice vertex they visit is $n_R$, and $T_{\text{Mid}}$ is the subset of robots that do not visit any nice vertex, and hence never leave $P$. Note that $T_{\text{Mid}}$ has to be equal to the set $S_{\text{Mid}}$ that was guessed above while considering the types of the starting positions of the robots w.r.t.~$P$. 
   
Let $B_{\text{Left}}$ and $B_{\text{Right}}$ be the sequences of the $k$ closest vertices on $P_R$ to $n'_R$ and to $n_R$, respectively. For each robot $R_i \in T_{\text{Left}}$,  
let $x_i$ be the number of robots in $T_{\text{Left}} \cup T_{\text{Mid}}$ whose relative position on $P$ is farther from $n'_R$ than that of $R_i$. 
If $R_i$ has a destination $t_i$, we define a new destination $t'_i$ for $R_i$ to be the vertex on $P$ whose distance from $n'_R$ is equal to $\min\{\dist_G(n'_R, t_i), k-x_i\}$. On the other hand, if $R_i \in \FFF$, then we guess $t'_i$ to be any position at distance at most $k-x_i$ from $n'_R$ that conforms to the guessed relative position of $R_i$ with respect to the other robots on $P$.
We do the same for the robots in $T_{\text{Right}}$. 

For robots in $T_{\text{Left}}$, we now compute the routing that takes each robot $R_i$ from $t'_i$ to $t_i$ directly, if needed pushing along the (free) robots (note that this might include free robots in $T_{\text{Mid}}$) on its way. We let $\ell_3$ be the total traveled length during this computed routing and we subtract $\ell_3$ from $\ell_j$.

If $T_{\text{Mid}} \neq \emptyset$, let $R_i\in T_{\text{Mid}}\cap\M$ be a robot with starting position $s_i$ and destination $t_i$. If 
$\{s_i,t_i\}\subseteq B_{\text{Left}}$, then we do nothing with $R_i$. Notice that all robots in $S_{\text{Left}}$ must have their starting positions closer to $n'_R$ than $s_i$ and all robots in $T_{\text{Left}}$ must have their destinations closer to $n'_R$ than $t_i$; therefore, none of the changes that we did to the starting positions and the destinations of robots in $S_{\text{Left}}\cup T_{\text{Left}}$ affects $R_i$. We do the same if $\{s_i,t_i\}\subseteq B_{\text{Right}}$. 
For the remaining robots in $T_{\text{Mid}}\cap\M$, we compute the routing that takes them directly from their starting positions to their destinations, possibly pushing along free robots in $T_{\text{Mid}}$. We let the total distance traveled by all these robots in $T_{\text{Mid}}$ during this routing be $\ell_4$, we subtract $\ell_4$ from $\ell_j$, and 
we then remove these robots, as well as all the robots they pushed, from the instance. It is not difficult to see that in $opt(\I)$ the travel length incurred by these robots is precisely $\ell_4$. This is because the robots in $T_{\text{Mid}}$ have to be in a sense ordered on $P$. 
In addition, if a robot $R'$ uses $P$ only to avoid other robots passing through a nice vertex $n'_R$ (resp.~$n_R$), after $R'$ enters $P$ from $n'_R$ (resp.~$n_R$), if $R$ leaves $P$ through the same nice vertex $n'_R$ (resp.~$n_R$), then $R'$ will remain in $B_{\text{Left}}$ (resp.~$B_{\text{Right}}$). Therefore, we can start by moving any of these robots in $T_{\text{Mid}}$ that start away from $B_{\text{Left}}$ and $B_{\text{Right}}$ towards their destinations, possibly pushing some robots just outside of these buffers closer to their destinations. Then we can follow an optimal solution to move the robots in $T_{\text{Left}}$ and $T_{\text{Right}}$ to their destinations in $\I_j$ when all the robots on $P$ are in the correct final relative order, and lastly just navigate each robot directly to its destination.

Observe now that what we are left to deal with are the robots in $\FFF \cap T_{\text{Mid}}$, which appear either as the first or last sequence of robots in $T_{\text{Mid}}$ w.r.t.~$P$. 
Note that, if there is such a robot $R_i$ whose $s_i$ is not in 
 $B_{\text{Left}} \cup B_{\text{Right}}$, then we do not need to do anything since any traveled length incurred by $R_i$ has already been accounted for in the previous cases, and we can remove $R_i$ from the instance. Suppose now that $s_i$ is in one of the two buffers, say $B_{\text{Left}}$. If there is a robot in $T_{\text{Left}}$ that pushes $R_i$ outside of $B_{\text{Left}}$, then when we computed its final routing contribution to $\ell_3$, the travel length incurred by $R_i$ has been already accounted for. So now we can assume that the final position of $R_i$ is in $B_{\text{Left}}$. We guess the final position of $t_i$ of $R_i$ in $B_{\text{Left}}$. Note that the robots that we kept in $\I_j$ have their starting positions and their destinations in $B_{\text{Left}}\cup B_{\text{Right}}$. If $T_{\text{Mid}}$ was non-empty at the beginning, this means that $P$ cannot be used by any robots to go between $n'_R$ and $n_R$, and in this case we remove $V(P)\setminus(B_{\text{Left}}\cup B_{\text{Right}})$ from the graph.
 
If $t_R$ is a type-(3) vertex, let $C$ be the connected subgraph of size at most $8k$ that is connected by a 2-path $P$ to a nice vertex $n_R$, and such that $v_R \in V(C) \cup V(P)$. If $|P| \leq 3k$ then we do nothing, as in this case $t_R$ will be treated later as a type-(1) vertex. Let $u_R$ be the vertex connecting $P$ to $C$.

Similarly to how we treated the type-(2) vertices above, for each $R' \in \F$, we guess whether it ends on $P$ and guess its relative position on $P$ with respect to all the robots in $\R$ that end on $P$. If a robot in $\F$ does not end on $P$, we guess whether it ends in $C$, and in which case we also guess the exact vertex in $C$ on which the robot ends. 
Again, as in the above treatment of type-(2) vertices, $R'$ either visits a nice vertex in $opt(\I)$, in which case its starting position is at distance at most $11k$ from a nice vertex in $\I_j$, and in this case we can assume that it ends on a vertex in $B(R',11k)$ (see Lemma~\ref{lem:ball}), and only allow the guess that $R'$ ends on $C\cup P$ if $V(P\cup C)$ intersects $B(R',11k)$. Alternatively, if $R'$ starts on $P\cup C$ and never leaves it, in which case it will be either later removed from $\I_j$, restricted to $k$ vertices on $P$ that are closest to $n_R$, or restricted to $C$ and the $k$ vertices on $P$ closest to $u_R$. In either case, we can restrict the moves of $R'$ in an optimal solution to a set $B(R')$ of vertices, where $|B(R')|= k^{\Oh(k^4)}$.

Now let $T$ be the set of robots that end on $P$ and $T_C$ be the set of robots that end in $C$. We guess a partitioning of $T$ into three subsets $T_{\text{Left}}, T_{\text{Mid}}, T_{\text{Right}}$, where $T_{\text{Left}}$ is the subset of robots such that the last degree three vertex they visit is $n_R$, $T_{\text{Right}}$ is the subset of robots such that the last degree three vertex they visit is $u_R$, and $T_{\text{Mid}}$ is the subset of robots that do not visit any nice vertex, and hence never leave $P$. Note that $T_{\text{Mid}}$ has to be the same as the set $S_{\text{Mid}}$ that was guessed above while considering the types of the starting positions of the robots w.r.t.~$P$.

Let $B_{\text{Left}}$ be the sequence of the $k$ closest vertices on $P$ to $n_R$. For each robot $R_i \in T_{\text{Left}}$,  
let $x_i$ be the sum of the number of robots in $T_C$ and the number of robots in $T$ whose relative destination on $P$ is farther from $n_R$ than that of $R_i$. 
If $R_i$ has destination $t_i$, we define a new destination $t'_i$ for $R_i$ to be the vertex in $B_{\text{Left}}$ whose distance from $n_R$ is equal to $\min\{\dist_G(n'_R, t_i), k-x_i\}$. On the other hand, if $R_i$ is in $\FFF$, then we guess a destination $t'_i$ for $R_i$ from among the positions at distance at most $k-x_i$ from $n_R$ conforming to the guessed relative destinations with respect to the other robots on $P$.

For the robots in $T_{\text{Left}}$, we now compute the routing that takes each robot $R_i$ from $t'_i$ to $t_i$ directly, if needed pushing along the (free) robots (note that this might include free robots in $T_{\text{Mid}}$) on its way. We let $\ell_5$ be the total traveled length during this computed routing and we subtract $\ell_5$ from $\ell_j$.

If $T_{\text{Mid}}\neq \emptyset$, then we observe that $T_{\text{Mid}}$ separates the robots in $T_{\text{Mid}}\cup T_{\text{Right}}\cup T_C$ from the rest of the instance. This means that the remaining robots cannot use $C$ to change their order on $P$, and since their destinations are either in $V(G)\setminus(V(P)\cup V(C))$ or in $B_{\text{Left}}$, they never need to enter vertices of $V(P)\setminus B_{\text{Left}}$. (Note that in the modified instance, the destinations of the robots in $T_{\text{Left}}$ have been changed to vertices in $B_{\text{Left}}$.) If a robot in $T_{\text{Mid}}$ starts and ends in $B_{\text{Left}}$, it still might need to be pushed around by other robots in the solution; let $T'_{\text{Mid}}$ be the set of robots in $T_{\text{Mid}}\cup T_{\text{Right}}\cup T_C$ except those in $T_{\text{Mid}}$ that start and end in $B_{\text{Left}}$. We now use Lemma~\ref{lem:specialcase} to compute an optimal routing for $T'_{\text{Mid}}$ in $G[V(P)\cup V(C)]$. Note that since $P$ has length at least $3k$, there is an optimal solution that starts by moving the robots that start in $B_{\text{Left}}$ outside of $B_{\text{Left}}$, routes the robots so that the robots whose destinations in $C$ are on their correct destinations, and such that the robots with destinations on $P$ are positioned on vertices in $V(P)\setminus B_{\text{Left}}$ in the correct relative position without any robot ever entering $B_{\text{Left}}$ and finishes by routing the robots on $P$ directly to their destinations. Let $\ell_6$ be the total traveled length during this optimal routing. We subtract $\ell_6$ from $\ell_j$, remove the robots in $T'_{\text{Mid}}$ from the instance and the vertices in $V(C)\cup (V(P)\setminus B_{\text{Left}})$ from the graph.

Now, if $T_{\text{Mid}}=\emptyset$, we treat $T_{\text{Right}}$ similarly to $T_{\text{Left}}$. That is, we define $B_{\text{Right}}$ to be the sequence of the $k$ closest vertices on $P$ to $u_R$. For a robot $R_i \in T_{\text{Right}}$, we let $x_i$ be the sum of the number of robots in $T_C$ and the number of robots in $T$ whose relative destinations on $P$ are farther from $n_R$ than that of $R_i$ (i.e., closer to $u_R$). 

If $R_i$ has destination $t_i$, we define a new destination $t'_i$ for $R_i$ to be the vertex in $B_{\text{Right}}$ whose distance from $u_R$ is equal to $\min\{\dist_G(u_R, t_i), x_i+1\}$. On the other hand, if $R_i$ is in $\FFF$, then we guess a destination $t'_i$ for $R_i$ from among the positions at distance at most $x_i+1$ from $u_R$ conforming to the guessed relative destination of $R_i$ with respect to the other robots on $P$. We now compute the routing that takes each robot $R_i\in T_{\text{Right}}$ from $t'_i$ to $t_i$ directly, if needed pushing along the (free) robots (note that this might include free robots in $T_{\text{Mid}}$) on its way. We let $\ell_7$ be the total traveled length during this computed routing and we subtract $\ell_7$ from $\ell_j$.

Now if $|\M| = 1$,
then by Lemmas~\ref{lem:onerobot}~and~\ref{lem:onerobot2}, no robot starts in $P \cup C$ and either visits $n_R$ and then comes back to $C$, or visits $C$ and then visits $n_R$, and we are done with this case. We recall that, assuming that the guesses are correct, in all the above treatments for the destination vertices, we maintain the equivalence between the two instances $\I$ and $\I_j$. The following treatment for the case where $|\M|>1$ may result in non-equivalent instance, as the treatment may result in an additive error of $\Oh(k^4)$ steps.

If $|\M| > 1$, we guess which of the robots in $T_{\text{Right}}\cup T_C$ visit $n_R$ before visiting $C$ and guess the order in which they leave $n_R$ for the last time before they visit $C$; denote the set of these robots by $T_{n_R}^{C}$. For every such robot $R_i$, we now define a new destination $t''_i$ in $B_{\text{Left}}$ in the same manner as we did for the robots in $T_{\text{Left}}$ and move $R_i$ to $T_{\text{Left}}$ (and remove it from $T_{\text{Right}}\cup T_C$). 
We know that the optimal solution does route each robot $R_i$ from $t''_i$ to $t'_i$. To upper bound the contribution of such reconfiguration to the total traveled length, we observe that after all of these robots are at their positions $t''_i$ and only the remaining robots in $T_{\text{Right}}\cup T_C$ are in $V(C)\cup (V(P)\setminus B_{\text{Left}})$, then the robot $(T_{\text{Right}}\cup T_C)\setminus T_{n_R}^C$ are in $V(C)\cup B_{\text{Right}}$ (they will not need to move since we moved them when considering the starting positions in $V(C)\cup V(P)$), and we can route the robots in $T_{n_R}^C$ first to the buffer $B_{\text{Right}}$ by going directly there, maybe pushing some of the robots in $B_{\text{Right}}$ closer to $u_R$. Afterwards, we use Proposition~\ref{prop:reconfiguration} to move the robots in the correct order in $B_{\text{Right}}$ in $\Oh(k^3)$ steps. 
 Let $\ell_8$ be the total traveled length during this reconfiguration. 
 We subtract $\ell_8$ from $\ell_j$ and remove the robots left in $T_{C}\cup T_{\text{Right}}$ from the instance and the vertices in $V(C)\cup (V(P)\setminus B_{\text{Left}})$ from the graph.
 
Now, we know that the optimal solution does route these robots to leave towards $C$ in this order possibly even interacting with some robots that started in $C$, but do not end there.  Moreover, we can use the algorithm of Lemma~\ref{lem:specialcase} to reconfigure these robots in the correct order on $P$. Afterwards, we can move these robots to $S_{\text{Right}}$ and shift their starting positions to $B_{\text{Right}}$ in the same manner as we did above for the robots that started in $S_{\text{Right}}$. Let $\ell_9$ be the total traveled length during this reconfiguration. 
 We subtract $\ell_9$ from $\ell_j$.

Observe that some of the free robots still do not have destinations, but may have remained in the instance. According to our guesses, each such robot $R$ is either within a distance at most $11k$ from some nice vertex, or starts in a type-(3) vertex such that the 2-path $P$ associated with that type-(3) vertex has length at least $3k$, and never leaves $C\cup B_{\text{Right}}$, where $C$ is the connected subgraph of size at most $8k$ associated with that type-(3) vertex and $B_{\text{Right}}$ is the buffer of $k$ vertices on $P$ that are closest to $C$. By Lemma~\ref{lem:ball}, we can assume that in $opt(\I)$ the moves of every free robot $R$ that is at distance at most $11k$ from a nice vertex are restricted to $B(R,11k)$. 
Moreover, $B(R,11k)$ has cardinality $k^{O(k^4)}$ and is computable in $\Oh(|V(G)| + |E(G)|)$ time. We can guess the vertex in $B(R,11k)$ that $R$ ends up on in $opt(\I)$ and assign it as its final destination in $\I_j$. Define $B(R) = B(R,11k)$ in this case.
Now for each free robot $R$ that is in $V(C) \cup B_{\text{Right}}$ and never leaves it, we guess which vertex, out of the at most $9k$ vertices in $V(C) \cup B_{\text{Right}}$ it ends up on in the optimal solution guessed above, and assign it as its final destination in $\I_j$. Define $B(R) = V(C) \cup B_{\text{Right}}$ in this case.

Let $\I_j=(G_j, \R_j=(\M_j, \emptyset), k_j, \ell_j)$ be the resulting instance from the above. It is clear that $k_j \leq k$ since we never added any robots to the instance. Moreover, in the case where $|\M|=1$, assuming the guesses are correct, the equivalence between the two instances $\I$ and $\I_j$ follows (as argued above). Now to show in this case that for any optimal solution $opt(\I_j)$ of $\I_j$, we have $|opt(\I_j)|=\dist_{G_j}(s_1, t_1) + \Oh(k^5)$, where $R_1$ is the only robot in $\M$, let $P_1$ be a shortest path in $G_j$ between $s_1$ and $t_1$, and note that $|P_1|=\dist_{G_j}(s_1, t_1)$. Observe that every robot is now positioned at distance at most $11k$ from a nice vertex in $G_j$, or it is at a distance at most $k$ from some connected subgraph $C$ of a type-(3) vertex and will never visit the nice vertex associated with that type-(3) vertex. For every nice vertex $w$, let $H_w$ be the haven associated with $w$ as in Definition~\ref{def:nicevertex}.
First, we move all the robots within distance $11k$ from $w$ into $H_{w}$.  

Observe that $P_1$ can only interact with the connected subgraph $C$ of a type-(3) vertex either at the beginning of $P_1$ before visiting any havens, or at the end when it will no longer visit a haven afterwards. This is true since $P_1$ is a (simple) shortest path and each such connected subgraph $C$ is separated by a nice vertex from the rest of the graph. Suppose $s_1$ is at distance at most $k$ from such a connected subgraph $C$ separated by a nice vertex $n_R$; the argument is similar in the case where $t_1$ is at  distance at most $k$ from $C$. Note that all the robots that are at a distance at most $k$ from $C$ are the robots that visit $C$ before visiting $n_R$. Moreover, in an optimal solution, no such robot (other than $R_1$) visits $n_R$, as stipulated by the guess and Lemma~\ref{lem:onerobot2}. Hence, the optimal solution does reconfigure these robots in such a way that $R_1$ is the closest robot to $n_R$ among all of them, or reconfigure $R_1$ to its destination in $C$. In either of these cases, by Proposition~\ref{prop:reconfiguration},  we can perform this reconfiguration using at most $O(k^3)$ steps. After this reconfiguration (and similar one at the end of the routing), all the robots on $P_1$ are in havens. 

Let us now iteratively modify $P_1$ to a final routing of $R_1$ in which $R_1$ enters and leaves each haven at most once. Note that there are at most $k$ havens that contain at least one robot and we repeat the following process at most $k$ times each increasing the length of the routing by at most $\Oh(k^3)$. Let $H_w$ be the first haven in which $R_1$ collides with a robot on $P_1$. Let $v_{first}$ be the first vertex in $H_w$ that $R_1$ visits on $P_1$ and let $v_{last}$ be the last vertex in $H_w$ it visits on $P_1$. We first reconfigure the robots on $H_w$ such that $v_{first}$ is empty. This can be done using at most $k-1$ steps (at most $k-1$ robots each doing one step). Now we follow $P_1$ until it enters $v_{first}$. Afterwards, we use Lemma~\ref{lem:swap} to reconfigure the set of all robots that are currently in $H_w$ to obtain a resulting configuration in which $R_1$ is at $v_{last}$. The cost of this reconfiguration is $\Oh(k^3)$, and since this process is repeated at most $k$ times, the overall additional cost incurred is $\Oh(k^4)$. 

Note that technically in $\I_j$, $\M_j$ can contain other robots that $R_1$ and all these robots have destinations. However, all the robots that we did not navigate to the correct destinations are at the moment in havens, so the additional loss to the optimal solution that we did not account for to navigate $R_1$ from its destination in $\I_j$ (which is close to the haven) to its destination in $\I$ cannot be more than $\Oh(k^3)$, as it interacts in this case with at most one haven. Hence, we can assume that each of these robots is at distance at most $\Oh(k^3)$ from its destination, and we can iteratively navigate them to their destinations (which are close to havens), each incurring at most $\Oh(k^4)$ reconfiguration steps. 

It follows that there is a schedule for $R_1$ of cost $\dist_{G_j}(s_1, t_1) + \Oh(k^5)$, and the first part of Statement 4 of the theorem holds. Moreover, every robot in $\R_j$ other than $R_1$ has been assigned a final destination and its movement has been restricted to a set of vertices $B(R)$ of cardinality at most $k^{\Oh(k^4)}$, and the second part of Statement 4 of the theorem holds as well.
 
Let us now prove Statement 1 of the theorem, which states that in polynomial time we can compute a schedule for $\I_j$ with total traveled length at most $f(k)+\sum_{R_i \in \M_j}\dist(s_i, t_i)$, and given such a schedule for $I_j$, in polynomial time we can compute a schedule for $\I$ of cost at most $|opt(\I)| + 
     \delta(k)$, for some computable function $\delta$.

First notice that all the modifications to the instance, except for those performed in the treatment of type-(3) vertices, preserve the optimality of the solution. In addition, in the treatment of type-(3) vertices, the additional possible cost over that of the optimal solution is $\Oh(k^4)$ to let all the robots first leave the connected subgraphs associated with type-(3) vertices and only after all the robots that wish to leave have left, the robots that need to enter can enter. This means that whenever we have a schedule for $\I_j$ of cost $opt(\I_j) + x$, for some $x\in \mathbb{N}$, we can get a solution for $\I$ of cost  $opt(\I_j) + x+\Oh(k^4)$. This establishes the second part of Statement 1 of the theorem. Also observe that $opt(\I_j)\ge \sum_{R_i \in \M_j}\dist(s_i, t_i)$. 
Note that thanks to this treatment of type-(3) vertices,  all the starting positions and the destinations of the robots are at distance at most $11k$ from  nice vertices. We start by pushing all the robots to the closest haven to their starting position. This takes at most $\Oh(k^2)$ many steps. We temporarily choose for each robot $R_i$ an arbitrary destination $t'_i$ in the haven that is closest to its final destination $t_i$. 
Note that this increases the distance of each robot to its destination by $\Oh(k)$ many steps. Now for each robot $R_i$, we compute a shortest path $P_i$ from its current position in a haven to $t'_i$. We navigate each robot $R_i$ along $P_i$ to $t'_i$, treating the collisions in havens in exactly the same manner as we did for robot $R_1$ when proving Statement 4 of the theorem. This incurs a total cost of $\sum_{R_i \in \M_j}\dist(s_i, t_i)+ O(k^5)$. For each haven $H_w$, compute a BFS tree rooted at $w$ that spans the haven and the destinations of the robots that are in it. If two subtrees corresponding to havens $H_u$ and $H_w$ overlap, then we route all robots in $H_u$ to $H_w$ (note that the largest distance between two havens in a connected cluster is at most $22k^2$). This incurs a cost of $\Oh(k^3)$ steps overall. 
Afterwards, compute once again for the haven $H_w$ a BFS tree rooted at $w$ spanning each haven and the destinations of the robots in it. Now we can reconfigure the robots so that they are on the path $w$ to their destinations in the BFS tree and the robots on the same leaf-to-root path are in the same relative order as their destinations. This reconfiguration incurs a cost of $\Oh(k^4)$. Afterwards, each robot is at distance $\Oh(k^2)$ from its destination and we can send the robots directly to their destinations along the leaf-to-root path in the BFS tree. This incurs an additional $\Oh(k^3)$ many steps. Hence, the total traveled length by all robots is is $\sum_{R_i \in \M_j}\dist(s_i, t_i)+ O(k^5)$.

Finally, all the modifications performed while reducing $\I$ to $\I_j$, except for those $\Oh(k^4)$ steps needed in the treatment of type-(3) vertices, can be characterized as direct moves to each robot $R_i$ (without going back-and-forth) between two vertices that $R_i$ must visit in this order in $opt(\I)$. Therefore, the optimal solution on which the guess for $\I_j$ is based, does take $R_i$ from its starting position in $\I$ to its starting position in $\I_j$, then to its terminal position in $\I_j$ and finally from its terminal position in $\I$. If we just sum up these distances, and using the fact that optimal length for $I_j$ is at most $\sum_{R_i \in \M_j}\dist(s_i, t_i)+ O(k^5)$, we can conclude that $opt(\I)$ can do only at most $\Oh(k^5)$ steps that are not moving $R_i$ closer to either its start in $\I_j$, its destination in $\I_j$ or its destination in $\I$, and hence any vertex cannot be visited by a robot more than $\Oh(k^5)$ times.

Statement 2 of the theorem follows from the fact that we obtain $G_j$ is a subgraph of $G$, and
$\ell_j$ from $\ell$ by subtracting the moves the robots make to go from their starting positions in $\I$ to their starting positions in $\I'$ and the moves that they take from ending position in $\I'$ to their ending positions in $\I$, which are part of some optimal solution for $\I$. 
Statement 3 of the theorem follows from the fact that we obtain $G_j$ by only removing 2-paths in the original graph whose endpoints are also degree-2 vertices in $G$.
 \end{proof}
 \fi

The approximation algorithm claimed in the introduction (restated below) follows directly from Statements 1 and 2 of Theorem~\ref{thm:fptapx}.

\fptapxcorollary*

\section{An \FPT\ Algorithm for \cmplone\ Parameterized by $k$}
\label{sec:rushhour}
The aim of this section is to establish Theorem~\ref{thm:fptrushhour} by using Theorem~\ref{thm:fptapx}. 

\fptrushhour*

\iftrue \begin{proof}
Let $\I=(G, \R=(\M,\F), k, \ell)$ be an instance of \cmplone{}. By Statement 4 of Theorem~\ref{thm:fptapx}, in \FPT-time we can compute an equivalent instance $\I_j=(G_j, \R_j=(\M_j, \emptyset), k_j, \ell_j)$ to $\I$ satisfying that $k_j \leq k$ and $|opt(\I_j)| = \dist_{G_{j}}(s_1, t_1)+\Oh(k^5)$, where $\M=\{R_1\}$. Moreover, for every robot $R_i \in \M_j \setminus \{R_1\}$, 
    the moves of $R_i$ in $opt(\I_j)$ are restricted to the vertices of a vertex-set $B(R_i)$ of cardinality at most $k^{\Oh(k^4)}$ that is computable in linear time.

Define a state graph ${\cal Q}$ whose vertices are $k$-tuples with coordinates defined as follows. The first coordinate of a $k_j$-tuple corresponds to $R_1$ and can be any of the at most $n$ vertices in $V(G_j)$; the $i$-th coordinate of a tuple, where $i \in \{2, \ldots, k_j\}$, encodes the possible location of $R_i$ and is confined to the vertices in $B(R_i)$. We purge any tuple in which a vertex in $V(G_j)$ appears in more than one coordinate of the tuple (i.e., is occupied by more than one robot in the same time step). Since $|B(R_i)| =k^{\Oh(k^4)}$, for $i \in \{2, \ldots, k_j\}$, it follows that the number of vertices in ${\cal Q}$ is at most $\Oh(n \cdot k^{\Oh(k^5)})$. Two vertices/states $S$ and $S'$ in ${\cal Q}$ are adjacent if there is a valid (i.e., causing no collision) single (parallel) move for the robots from their locations in $S$ to their locations in $S'$; the weight of an edge $(S, S')$ in ${\cal Q}$ is the number of robots in $S$ that have moved (i.e., their positions have changed). 

Define the starting configuration $S_{start}$ in ${\cal Q}$ to be the $k_j$-tuple whose coordinates correspond to the starting positions of the robots in $\I_j$. Define $S_{final}$ to be the $k_j$-tuple whose coordinates correspond to the destinations of the robots in $\I_j$. Now compute a shortest (weighted) path from $S_{start}$ to $S_{final}$ in ${\cal Q}$ (e.g., using Dijkstra's algorithm) and accept the instance $\I$ if and only if the weight of the computed shortest path is at most $\ell_j$. The running time of the algorithm is $\Oh(|V({\cal Q})|^2)=\Oh(n^2 \cdot k^{\Oh(k^5)})$, and it is clear that the above algorithm decides the instance correctly. It follows that \cmplone{} is \FPT{} parameterized by $k$.
\end{proof} \fi

 \section{An \FPT{} Algorithm Parameterized by Treewidth and $k$}
\label{sec:treewidth}

In this section, we present an \FPT{} algorithm for {\cmpl} parameterized by the number of robots and the treewidth of the input graph combined. 
This result implies that the problem is \FPT{} on certain graph classes such as graphs of bounded outerplanarity, which include subgrids of bounded height. In this sense, the result can be seen as complementary to the recently established \NP-hardness of the problem on bounded-height subgrids~\cite{GeftHalperin}.

\iflong \subsection{Terminology and Observations}

 We begin by setting up the required terminology and proving some observations used for our algorithm.
\fi
 
\begin{our_definition}[Semi-routes]
	{\rm 
	A \emph{semi-route} for $R_i$ is a tuple $W_i=(u^i_0, \ldots, u^i_{t})$ such that each $u^i_j$ is either a vertex of $G$ or the symbol $\bot$, and such that (i) $u^i_0=s_i$ and moreover, if $R_i \in \M$, then $u^i_{t}=t_i$, and (ii) $\forall j \in [t]$, either $u^i_{j-1} =u^i_j$ or $u^i_{j-1}u^i_{j} \in E(G)$ or one out of $u^i_{j-1}$ and $u^i_j$ is the symbol $\bot$. 
 The notion of two conflicting semi-routes is identical to that for routes, except that we only consider time steps where neither semi-routes has $\bot$. 
\iflong 
Formally, two semi-routes $W_i=(u^i_0, \ldots, u^i_{t})$ and $W_j=(u^j_0, \ldots, u^j_{t})$, where $i \neq j \in [k]$, are \emph{non-conflicting} if (i) $\forall r \in \{0, \ldots, t\}$, either $u^i_r=\bot$ or $u^j_r=\bot$ or
 $u^i_r \neq u^j_r$, and (ii) $\nexists r \in \{0, \ldots, t-1\}$ such that $u^j_{r+1} =u^i_r$ and $u^i_{r+1} =u^j_r$ and $\bot\notin \{u^i_r,u^i_{r+1},u^j_{r}, u^j_{r+1}\}$. Otherwise, we say that $W_i$ and $W_j$ \emph{conflict}.\fi
	}
\end{our_definition}

Intuitively, the above definition gives us a notion of partial routes, where the robots can be thought of as having {become ``invisible'' (but still potentially in motion)} for some time steps during a route (assume only a small part of the graph is visible to us). These time steps where a robot disappears from view are represented by the symbol $\bot$. {Note that a robot can ``reappear'' at a different vertex than the one it ``disappeared'' at.} A {\em semi-schedule} for $\R$ is a set of semi-routes during a time interval $[0, t]$ that are pairwise non-conflicting.  The  \emph{length} of a semi-route and a semi-schedule is naturally defined as the number of time steps in which the robot moves from one vertex to a different vertex, and the \emph{total traveled length} of a semi-schedule is the sum of the lengths of its semi-routes.
\ifshort
 A {\em boundaried graph}~\cite{CyganFKLMPPS15} is a graph $G$ with a set $X\subseteq V(G)$ of distinguished vertices called {\em boundary vertices}; the set $X$ is called the {\em boundary} of $G$. A boundaried graph $(G,X)$  is called  a {\em $p$-boundaried} graph if $|X|\leq p$. A {\em $p$-boundaried subgraph} of  a graph $G$ is a $p$-boundaried graph $(H,Z)$ such that (i) $H$ is a vertex-induced subgraph of $G$ and (ii) $Z$ separates $V(H)\setminus Z$ from $V(G)\setminus V(H)$.  
\fi

In what follows, let $\I=(G, \R=(\M, \F), k, \ell)$ be an instance of {\cmpl}. Let $(H,Z)$ be a $p$-boundaried subgraph of $G$ with  boundary $Z$ containing all the terminals. Let $S$ be a schedule for this instance with routes $W_i, i\in [k]$.

\begin{our_definition}[Signatures of Schedules]
{\rm  
Call the tuples in the set $(Z\cup \{\uparrow,\downarrow\})^k$, {\em configuration tuples} for $(H,Z)$.
We define the {\em signature} of the schedule $S$ with respect to $(H,Z)$ as a sequence of tuples $\tau_0,\dots, \tau_t$, where each $\tau_i$ is a configuration tuple defined as follows: the $j^{th}$ coordinate of $\tau_i$ signifies whether at the end of time step $i$ \begin{itemize}\item $R_j$ is on a vertex $v\in Z$, in which case this value is $v$; or \item  whether it is inside $H$ but not on the boundary, in which case this value is $\downarrow$; or \item whether it is disjoint from $H$, in which case this value is  $\uparrow$. 
 	
 \end{itemize}
The tuple $\tau_i$ is called the {\em signature of the schedule $S$ with respect to $(H,Z)$ at time step} $i$. The tuple $\tau_0$ and $\tau_t$ are called {\em starting} and {\em ending} configuration tuples. }
\end{our_definition}

\begin{our_definition}[Checkpoints of Schedules]\label{def:affectSchedule}
{\rm 	We say that the schedule $S$:
\begin{itemize}\item  {\em externally affects $(H,Z)$ at time steps} $i$ and $i+1$ if some robot moves from a vertex outside $H$ to a vertex in $Z$ during time step $i+1$, or if some robot moves from some vertex of $Z$ to a vertex outside $H$ during time step $i+1$. That is, for some robot $R_j$, we either have that  $u^j_{i}\notin V(H)$ and $u^j_{i+1}\in Z$, or we have $u^j_{i}\in Z$ and $u^j_{i+1}\notin V(H)$. 
\item  {\em internally affects} $(H,Z)$ at time steps $i$ and $i+1$ if some robot moves from a vertex of $H$ to a vertex of $Z$ during time step $i+1$, or if some robot moves from some vertex of $Z$ to a vertex in $H$ during  time step $i+1$. 
That is, for some robot $R_j$, we have $u^j_{i} \neq u^j_{i+1}$ and moreover, we either have that $u^j_{i}\in V(H)$ and $u^j_{i+1}\in Z$ or we have $u^j_{i}\in Z$ and $u^j_{i+1}\in V(H)$.
\item  {\em affects} $(H,Z)$ at time steps $i$ and $i+1$ if it internally or externally affects $(H,Z)$ at time steps $i$ and $i+1$. 
\end{itemize}
The pairs of consecutive time steps at which the schedule affects $(H,Z)$ are called {\em checkpoints of the schedule with respect to} $(H,Z)$. 
We drop the explicit reference to $(H,Z)$ whenever it is clear from the context. 
Two checkpoints $(x,x+1)$ and $(y,y+1)$ are said to be {\em consecutive} if $x<y$ and there is no checkpoint $(z,z+1)$ such that $x<z<y$. 
 }
\end{our_definition}
Roughly speaking, the checkpoints are the time steps at which the schedule interacts in some way with the separator $Z$. The interaction involves a robot either moving to or from a vertex of $Z$. As a result, the notion of checkpoints enables one to ``decompose'' a solution along the vertices of $Z$ between consecutive checkpoints. We will argue that whenever $Z$ is sufficiently small, the number of possible checkpoints will also be small.

\iflong
\begin{longdefinition}[$v$-Rigid Pairs]\label{def:vRigidCheckpoints}
	{\rm Let $v\in Z$. A pair of configuration tuples $(\tau_1,\tau_2)$ is a {\em $v$-rigid pair} if there is a unique coordinate $j$ in which these tuples differ, exactly one out of $\tau_1[j],\tau_2[j]$ is equal to $v$ and exactly one out of $\tau_1[j],\tau_2[j]$ is equal to $\downarrow$. }
	\end{longdefinition}
	
	The above definition is useful when considering checkpoints where the schedule internally affects $(H,Z)$ with the restriction that exactly one robot moves into $Z$ from $V(H)\setminus Z$ or from $Z$ into $V(H)\setminus Z$.  

\fi 
\begin{our_definition}[Semi-schedules]\label{def:affectSemiSchedule}
{\rm 
A {\em semi-schedule with respect to} $(H,Z)$ is a semi-schedule where every vertex on every semi-route is contained in $V(H)$ and for every subtuple $(u,\bot,\dots, \bot,v)$ in a semi-route, where $u,v\in V(H)$, it must be the case that $u,v\in Z$. 
 		}
\end{our_definition}
The intuition here is that whenever the robot ``vanishes'' or ``reappears'', it happens at the boundary $Z$. This allows us to analyze how a schedule interacts with the subgraph $H$ since every movement between $H$ and the rest of $G$ must happen through the boundary $Z$.

\begin{our_definition}
	[Signatures of Semi-schedules]\label{def:signatureSemiSchedule}
	{\rm 
	The {\em signature} $\tau_0,\dots, \tau_t$ of a semi-schedule with respect to $(H,Z)$ is defined similarly to that of a schedule after making the assumption that $\bot$ denotes a vertex outside $H$, that is,  we have the symbol $\uparrow$ in the $j^{th}$ coordinate of tuple $\tau_i$ if and only if the robot $R_j$ has $u^j_i=\bot$. 
	A semi-schedule {\em externally affects} $(H,Z)$ at time steps $i$ and $i+1$ if some robot ``moves''  from $\bot$ to $Z$ during time step $i+1$ or it ``moves'' from $Z$ to $\bot$ during time step $i+1$. That is, for some robot $R_j$, we have $u^j_{i}=\bot$ and $u^j_{i+1}\in Z$ or we have $u^j_{i}\in Z$ and $u^j_{i+1}=\bot$. 
 	 The notions of internally affecting $(H,Z)$, affecting $(H,Z)$, checkpoints with respect to $(H,Z)$ are defined exactly as for schedules. }
\end{our_definition}

\begin{our_definition}[Checkpoint Tuples]
{\rm {\em 	Checkpoint tuples} of a schedule or semi-schedule $S$ with respect to $(H,Z)$ are defined as those configuration tuples for $(H,Z)$ that appear in the signature of $S$ at the time steps participating in checkpoints of $S$ with respect to $(H,Z)$. The {\em checkpoint tuple sequence} of $S$ with respect to $(H,Z)$ is the set of checkpoint tuples ordered according to their respective time steps.}
\end{our_definition}

Our dynamic programming algorithm will make use of the following observation about signatures of schedules and semi-schedules to narrow down the search space. 

\begin{observation}\label{obs:fixedStartingPartiallyFixedFinishingTuple}\label{obs:sameTuplesInBetweenAffectedTuple}
Let $\tau_0,\dots, \tau_t$ be the signature of a schedule or semi-schedule with respect to $(H,Z)$. Then, the following properties hold:

\begin{enumerate}\item  $\tau_0=(s_1,\dots,s_k)$. Moreover, the coordinates of $\tau_t$ that correspond to the robots in $\M$ are fixed, that is, $\tau_t=(t_1,\dots,t_{|\M|},v_1,\dots,v_{k-|\M|})$ for some vertices $v_1,\dots, v_{k-|\M|}$ distinct from $\{t_1,\dots,t_{|\M|}\}$.
 \item The first two and last two tuples in the signature are at checkpoints. That is, $(0,1)$ and $(t-1,t)$ are checkpoints.
	\item Let $(x,x+1),(y,y+1)$ be consecutive checkpoints such that $x+1<y$. Then, the tuples $\tau_{x+2},\dots, \tau_{y-1}$ are identical. 
	\item Let $(x,x+1)$ be a checkpoint. There is a coordinate $j$ such that  $\tau_x[j]\neq \tau_{x+1}[j]$ and at least one of $\tau_{x}[j]$ or  $\tau_{x+1}[j]$ is a vertex of $Z$.
	\item Let $(x,x+1)$ be a checkpoint. There is no coordinate $j$ such that  $\tau_x[j] =~\uparrow$ and $\tau_{x+1}[j]=~\downarrow$ or $\tau_x[j]=~\downarrow$ and $\tau_{x+1}[j]=~\uparrow$.	
	\item For every time step $x$, $\tau_x$ contains at most one occurrence of any vertex of $Z$.
	\item Let $(x,x+1)$ be a checkpoint and suppose that $\tau_x[j]\neq v\in Z$ and $\tau_{x+1}[j]=v$. 
	Then, either there is no $j'$ such that $\tau_{x}[j']=v$ or it must be the case that $\tau_{x+1}[j']\neq v$. 
	 	\item Parallel movements between vertices of the boundary $Z$ in a single time step must happen along edges that exist and moreover, no edge can be used by two robots.

\end{enumerate}

	\end{observation}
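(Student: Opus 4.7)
The plan is to verify each of the eight listed properties by unpacking the relevant definitions. Each item is essentially a structural consequence of how the signature encodes the interaction of the schedule with the boundary $Z$, so the proof naturally splits into eight short arguments that share two key ingredients: first, the fact that $Z$ separates $V(H)\setminus Z$ from $V(G)\setminus V(H)$, which is built into the definition of a $p$-boundaried subgraph; and second, the fact that routes (respectively semi-routes) are pairwise non-conflicting, so no two robots occupy the same vertex of $V(H)$ at the same time and no two robots traverse the same edge in opposite directions in one step.

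I will begin with item 1, which follows immediately from the definition of a (semi-)route: $u^i_0 = s_i$ for every $i$, and $u^i_t = t_i$ for every $R_i \in \M$, combined with the assumption that all terminals lie in $Z$, so the relevant coordinates of $\tau_0$ and $\tau_t$ are exactly the corresponding terminals. For item 2, I would appeal to the standard normalization that the first and last time steps of a schedule involve at least one actual move (otherwise we can trim the leading or trailing stationary configurations without changing the total traveled length); since every starting vertex lies in $Z$ and every $\M$-destination lies in $Z$, any such move is either internally or externally affecting and hence $(0,1)$ and $(t-1,t)$ are checkpoints. Item 3 is the main structural statement: between consecutive checkpoints $(x,x+1)$ and $(y,y+1)$ with $x+1 < y$, no time step in the interior can affect $(H,Z)$. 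Using the separator property, a robot whose coordinate is a vertex $v \in Z$ cannot move to $V(H)\setminus Z$, to $V(G)\setminus V(H)$, or even to another vertex of $Z$ without triggering an affecting move; symmetrically, a robot with symbol $\downarrow$ stays in $V(H)\setminus Z$ and a robot with symbol $\uparrow$ stays outside $V(H)$, because crossing $Z$ would likewise produce a checkpoint. Consequently every coordinate of the tuple is preserved, so $\tau_{x+2},\dots,\tau_{y-1}$ are identical.

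For items 4 and 5, I would observe that item 4 is a direct reading of the definition of an affecting move, since any robot triggering the checkpoint either enters or leaves $Z$, thereby producing a changing coordinate whose old or new value is a vertex of $Z$; and item 5 is immediate from the separator property of $Z$, because a transition between $\uparrow$ and $\downarrow$ would require a robot to cross from $V(G)\setminus V(H)$ into $V(H)\setminus Z$ (or vice versa) in a single step without touching $Z$, which is impossible. Items 6, 7 and 8 are all consequences of the non-conflicting property: item 6 because two robots cannot occupy the same vertex of $Z$ at time $x$; item 7 because if $R_j$ arrives at $v \in Z$ at time $x+1$, any other robot sitting on $v$ at time $x$ must have vacated it by time $x+1$, on pain of a vertex collision; and item 8 because any swap along a boundary edge by two robots in a single time step is forbidden by the second clause in the definition of non-conflicting routes, while a movement that is not along an edge of $G$ is simply not a valid route step.

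I do not foresee any serious obstacle in this argument: every item reduces to a one-line check against the definitions. The only mildly subtle point is the implicit normalization underlying item 2, which one should either assume at the outset or justify by trimming idle prefixes and suffixes of the schedule; and the case analysis inside item 3, which, although routine, should be spelled out carefully in terms of exactly which kinds of single-step moves are permitted when no checkpoint occurs.
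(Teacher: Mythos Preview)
Your proposal is correct and follows essentially the same approach as the paper: the paper's proof simply attributes items 1--2 to the terminals lying in $Z$, items 3--5 to the definition of checkpoints together with the separator property of $Z$, items 6--7 to no two robots occupying the same vertex, and item 8 to the non-conflicting condition on routes. Your version is a more detailed unpacking of exactly these points, and your explicit mention of the trimming normalization for item 2 is a useful clarification that the paper leaves implicit.
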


	 \begin{proof}
		The first two statements follow from the fact that the terminals are contained in $Z$. The third, fourth and fifth statements follow from the definition of checkpoints and the fact that the boundary is a separator between $V(H)\setminus Z$ and $V(G)\setminus V(H)$. The sixth and seventh statements are due to the fact that no vertex can be occupied by two robots at the same time. The eighth statement is due to the fact that a schedule or semi-schedule must be comprised of pairwise non-conflicting routes or semi-routes. 
	\end{proof}

\iflong \subsection{Using Bounded Checkpoints to Solve the Problem on Bounded Treewidth Graphs}
\fi 

\begin{observation}\label{obs:boundedSignaturesWithSmallBoundary}
Let $\I=(G, \R=(\M, \F), k, \ell)$ be an instance of {\cmpl}. Let $(H,Z)$ be a $p$-boundaried subgraph of $G$ with  boundary $Z$ containing all the terminals. 
	 If there is a schedule $S$ for $\I$ in which each vertex is {entered} by any robot
    at most $\nu(k)$ times, then the following hold.
 \begin{enumerate}\item The number of checkpoints of $S$ with respect to $(H,Z)$ is bounded by $\bigoh(pk\cdot \nu(k))$. \item The number of possible checkpoint tuple sequences of  $S$ with respect to $(H,Z)$ is bounded by $g(k,p)=p^{\bigoh(pk\cdot \nu(k))}$. 
 \end{enumerate}
\end{observation}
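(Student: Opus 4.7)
The plan is to prove both statements via careful bookkeeping of how the robots in $S$ interact with the boundary $Z$.

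For statement (1), I would invoke Definition~\ref{def:affectSchedule} to observe that at every checkpoint $(i,i+1)$ at least one robot performs a non-trivial move $u^j_i \to u^j_{i+1}$ with at least one of the two endpoints lying in $Z$. I would then charge each such witnessing move to a single \emph{entry} of a robot into a vertex of $Z$: a move that steps \emph{into} a vertex $v\in Z$ is charged to that very entry, whereas a move that steps \emph{out of} $v\in Z$ is charged to the most recent preceding entry of the same robot into $v$ (if no such preceding entry exists, it must be the starting position, of which there are at most $k$ in total). Since by hypothesis each vertex is entered by any robot at most $\nu(k)$ times, and since $|Z|\le p$ and $|\R|=k$, the total number of such entries is at most $pk\cdot \nu(k)$. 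Each entry is charged by at most a constant number of witnessing moves (one incoming, one outgoing), and each checkpoint is charged at least once, so the number of distinct checkpoint time steps is $\bigoh(pk\cdot \nu(k))$.

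For statement (2), I would combine (1) with a crude count of distinct checkpoint tuples. Each checkpoint tuple belongs to $(Z\cup\{\uparrow,\downarrow\})^k$, which has cardinality at most $(p+2)^k=p^{\bigoh(k)}$ (one could tighten this using Observation~\ref{obs:sameTuplesInBetweenAffectedTuple}(6), but this is unnecessary). The checkpoint tuple sequence contains two tuples per checkpoint, so its length is $\bigoh(pk\cdot \nu(k))$. Multiplying these counts together yields at most $\bigl(p^{\bigoh(k)}\bigr)^{\bigoh(pk\cdot \nu(k))}=p^{\bigoh(pk\cdot \nu(k))}$ possible sequences after folding the extra factor of $k$ into the asymptotic exponent, which is legitimate since $\nu(k)$ is itself a function of the parameter $k$.

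The main obstacle, I expect, will be making the charging argument in (1) watertight. One needs to correctly handle moves that occur entirely within $Z$ (from one boundary vertex to another, which also count as checkpoints under Definition~\ref{def:affectSchedule}), and to avoid double-counting when a single checkpoint is witnessed simultaneously by several robots moving in or out of $Z$. Both issues are resolved by charging each checkpoint to a single arbitrarily chosen witnessing entry of a robot into a vertex of $Z$, so that the total number of charges is bounded by the total number of such entries. Once this bookkeeping is in place, statement (2) is essentially immediate.
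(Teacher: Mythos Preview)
Your proposal is correct and follows essentially the same approach as the paper's proof: both bound the checkpoints by counting robot visits to vertices of $Z$ (at most $pk\cdot\nu(k)$ such visits), and both bound the number of sequences by raising the $(p+2)^k$ choices per configuration tuple to the power of the sequence length. Your charging argument makes explicit the bookkeeping that the paper leaves implicit, and your remark about absorbing the stray factor of $k$ into $\nu(k)$ applies equally to the paper's own computation.
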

  \iflong \begin{proof}
From Definition~\ref{def:affectSchedule}, we have that the number of checkpoints of $S$ with respect to $(H,Z)$ is bounded by $\bigoh(pk\cdot \nu(k))$. This is because each of the $p$ vertices in $Z$ can be visited at most $\nu(k)$ times by any of the $k$ robots. Moreover, the number of possible choices for a specific checkpoint tuple is bounded by $(p+2)^{k}$. This is simply because these are configuration tuples. This gives us the stated bound on the number of possible checkpoint tuple sequences of $S$ with respect to $(H,Z)$.  
\end{proof} 
\fi 

We now proceed by defining the partial solutions which play a central role in our algorithm.

\begin{our_definition}[Partial Solutions]\label{def:partialSolutionForGivenSequence}
{\rm Let $\I=(G, \R=(\M, \F), k, \ell)$ be an instance of {\cmpl}. Let $(H,Z)$ be a $p$-boundaried subgraph of $G$ with  boundary $Z$ containing all the terminals.} 
{\rm 	Given an ordered set of configuration tuples $\gamma_1,\dots, \gamma_q$ for $(H,Z)$ where $q$ is even, $\gamma_1$ is a starting configuration tuple and $\gamma_q$ is an ending configuration tuple, a {\em partial solution corresponding to the sequence} $(\gamma_1,\gamma_2),\dots, (\gamma_{q-1},\gamma_{q})$ is a semi-schedule $S$ with respect to $(H,Z)$ such that the checkpoint tuple sequence of $S$ with respect to $(H,Z)$ is exactly these tuples in this order. That is, the checkpoints of $S$ with respect to $(H,Z)$ are $(x_1,x_2),\dots,(x_{q-1},x_q)$ and $\tau_{x_i}=\gamma_{i}$ for each $i\in [q]$. }
\end{our_definition}

\iflong 
\subsubsection{Dynamic Programming over Tree Decompositions}
\fi

\fpttw*

\ifshort
\begin{proof}[Proof Sketch.]
Let $\I=(G, \R=(\M, \F), k, \ell)$ be the given instance of {\cmpl}. 
   We start with a tree decomposition $(T,\beta)$ of $G$ of optimal width and add the terminals to every bag. Call the resulting tree decomposition $(T,\beta')$ and let its width be $w$.  By invoking Theorem~\ref{thm:fptapxcorollary} and setting $\nu(k)=\bigoh(k^{5})$ in Observation~\ref{obs:boundedSignaturesWithSmallBoundary}, we infer that the length of the checkpoint tuple sequence of $S$ with respect to each boundaried graph $(G_{x,T}^\downarrow,\beta'(x))$ where $x\in V(T)$, is at most twice the number of checkpoints in this sequence. Hence, the length of this checkpoint tuple sequence is bounded by $\lambda(k,w)=\bigoh(wk^6)$. Here, $G_{x,T}^\downarrow$ denotes the graph induced by the vertices that lie either in the bag $\beta'(x)$ or below it.
Moreover, by the same observation, the number of possible checkpoint tuple sequences of $S$ with respect to each boundaried graph $(G_{x,T}^\downarrow,\beta'(x))$ is at most $g(k,w+1)=w^{\bigoh(wk^{6})}$.

Based on this fact, our goal is to use dynamic programming to compute, for every $x\in V(T)$ and boundaried graph $(G_{x,T}^\downarrow,\beta'(x))$, and for every possible checkpoint tuple sequence of length at most $\lambda(k,w)$ with respect to this boundaried graph, the length of the best partial solution corresponding to this checkpoint tuple sequence within the boundaried graph. 
 When $x$ is the root node,  $G_{x,T}^\downarrow$ is exactly the input graph $G$. Hence, the solution is given by the minimum entry in the table computed at the root node. Note that we only describe an algorithm to compute the length of an optimal solution. However, it is straightforward to see that an optimal schedule can also be produced in the same running time. 
\end{proof}
\fi

\iflong \begin{proof}

Let $\I=(G, \R=(\M, \F), k, \ell)$ be the given instance of {\cmpl}. 
We assume that a schedule exists. This can be checked in polynomial time \cite{YuR14}. Let $\rho$ denote the upper bound on the size of an optimal schedule given by our additive approximation. If $\ell \geq \rho$, then we are done. So assume this is not the case.

We start with a nice tree decomposition $(T,\beta)$ of $G$ of optimal width. This can be computed using Proposition~\ref{fact:findtw}. Now, add the terminals to every bag. Call the resulting tree decomposition $(T,\beta')$ and let its width be $w$.  Note that although we have added the terminals to all the bags, the tree $T$ has remained the same. Hence, we just have a slightly modified version of  nice tree decompositions, where  the root bag and every leaf bag is exactly the set of terminals. However, the remaining nodes (introduce, forget, join) are as described in Definition~\ref{def:treewidth}. 
Moreover, we have the mapping $\gamma':V(T)\to 2^{V(G)}$ defined as $\gamma'(t)=\bigcup_{u\preceq t} \beta'(u)$.

By invoking Theorem~\ref{thm:fptapxcorollary} and setting $\nu(k)=\bigoh(k^{5})$ in Observation~\ref{obs:boundedSignaturesWithSmallBoundary}, we infer that the length of the checkpoint tuple sequence of $S$ with respect to each boundaried graph $(G_{x,T}^\downarrow,\beta'(x))$ where $x\in V(T)$ is at most twice the number of checkpoints in this sequence and so, is bounded by $\lambda(k,w)=\bigoh(wk^6)$. Moreover, by the same observation, the number of possible checkpoint tuple sequences of $S$ with respect to each boundaried graph $(G_{x,T}^\downarrow,\beta'(x))$ where $x\in V(T)$, is at most $g(k,w+1)=w^{\bigoh(wk^{6})}$.

Based on this fact, 
our goal is to compute, for every $x\in V(T)$ and boundaried graph $(G_{x,T}^\downarrow,\beta'(x))$, and for every possible checkpoint tuple sequence of length at most $\lambda(k,w)$ with respect to this boundaried graph, the length of the best partial solution corresponding to this checkpoint tuple sequence within the boundaried graph. For ``infeasible'' checkpoint tuple sequences, we assign them a length of  $\rho+1$ to indicate infeasibility. When $x$ is the root node,  $G_{x,T}^\downarrow$ is exactly the input graph $G$. Hence, the solution is given by the minimum entry in the table computed at the root node. Note that we only describe an algorithm to compute the length of an optimal solution. However, it will be straightforward to see that an optimal schedule can also be produced in the same running time.

Consider a node $x\in V(T)$ and let ${\cal C}_x$ denote the set of possible configuration tuples over $\beta'(x)$. That is, ${\cal C}_x=(\beta'(x)\cup \{\uparrow,\downarrow\})^k$. For every sequence $\sigma=((\tau_1,\tau_2),\dots,(\tau_s,\tau_{s+1}))$ where each $\tau_i\in {\cal C}_x$, we say that $\sigma$ is a {\em sequence at} $x$. Moreover,  $\sigma$ is called {\em good} if Properties 1-2 and Properties 4-8 listed in Observation~\ref{obs:sameTuplesInBetweenAffectedTuple} hold. Any other sequence at $x$ is {\em bad}. The length of the sequence $\sigma$ described above is $s+1$. 

For every good sequence $\sigma$ at $x$ of length at most $\lambda(k,w)$, we define $h^{x}_\sigma$ to be the value of the best partial solution corresponding to $\sigma$ in the boundaried graph $(G_{x,T}^\downarrow,\beta'(x))$. For every bad sequence $\sigma$ at $x$ of length at most $\lambda(k,w)$, $h^{x}_\sigma$ is defined as $\rho+1$. This is just to indicate that we do not look for partial solutions corresponding to these sequences as satisfying the properties in Observation~\ref{obs:sameTuplesInBetweenAffectedTuple} is a necessary condition for a sequence to correspond to checkpoint tuples. Note that for a given sequence $\sigma$ of length at most $\lambda(k,w)$, it is straightforward to check the relevant properties from Observation \ref{obs:sameTuplesInBetweenAffectedTuple} and thus determine whether it is good or bad in time $\lambda(k,w)n^{\bigoh(1)}$. 

We next proceed to the description of the computation of our dynamic programming table.  Every bad sequence $\sigma$ of length at most $\lambda(k,w)$ is identified at every bag and $h_\sigma$ is set to $\rho+1$. As standard, we next proceed bottom up, starting from the leaf nodes.

\subparagraph*{Leaf node.} At each leaf node $x$, and every good sequence $\sigma$ at $x$ of length at most  $\lambda(k,w)$ ,  $h^{x}_\sigma$ is computed by brute force since the graphs have at most $w+1$ vertices.

\subparagraph*{Forget node.}
Let $x$ be a forget node with child node $y$. Let $v$ be the unique vertex in $\beta'(y)\setminus\beta'(x)$.
Let $\sigma$ be a good sequence of length at most $\lambda(k,w)$ at the node $x$. We compute $h^{x}_\sigma$ as follows.

\begin{enumerate}
\item Let $\Sigma_v$ denote the set of all sequences obtained from $\sigma$ by going over all possible ways of substituting some occurrences of $\downarrow$ in $\sigma$ with $v$. 

Since there are at most $(w+1)\cdot \lambda(k,w)$ occurrences of $\downarrow$ in $\sigma$, the size of the set $\Sigma_v$ is bounded by $\lambda'(k,w)=2^{(w+1)\cdot \lambda(k,w)}$. 
 
The idea behind this step is to ``guess'', in the partial solution corresponding to $\sigma$, all instances where a movement is made from a vertex in $\beta'(x)$ to the forgotten vertex $v$ or from $v$ to a vertex in $\beta'(x)$.  
Once this is done, we would still need to ``guess'' all instances where a movement is made from $v$ to a vertex not in $\beta'(x)$ or vice-versa. Since $x$ is a forget node, all such movements from $v$ must go ``below'' to the vertices in $\gamma'(y)\setminus\beta'(y)$ and vice-versa. This guess is done next. 

\item Let $\Sigma'_v$ denote the set of all sequences obtained by going over every sequence $\sigma'$ in $\Sigma_v$ and inserting at most $\lambda(k,w)$ $v$-rigid pairs of tuples belonging to ${\cal C}_y$ (see Definition~\ref{def:vRigidCheckpoints}) between the elements of $\sigma'$ in all possible ways. 

 For each sequence $\sigma'$ in $\Sigma_v$, the number of resulting sequences is at most $(k\cdot (w+2)^k)^{\lambda(k,w)}$. This is because there are $2k\cdot (w+2)^k$ possible $v$-rigid pairs (fix the coordinate of $v$ and allow all possible entries in the remaining coordinates) and at most $\lambda(k,w)$ possible positions. Hence, the set $\Sigma'_v$ has size at most $\lambda''(k,w)=\lambda'(k,w)\cdot (k\cdot (w+2)^k)^{\lambda(k,w)}$.

As discussed in the last step, the goal of this step is to guess all steps where a movement is made from $v$ to a vertex not in $\beta'(x)$ or in the other direction, in which case the robot must move to or from $\gamma'(y)\setminus\beta'(y)$. This is why we have the requirement of replacing $v$ with $\downarrow$ or vice-versa in consecutive tuples, motivating Definition~\ref{def:vRigidCheckpoints}. Moreover, we only need to guess pairs of tuples where {\em only} the coordinate that has $v$ is changed, because all other pairs would already be in the sequence $\sigma$ and all occurrences of $v$ in these pairs guessed correctly in the previous step. 

\item 
Inductively, for every good sequence $\sigma'$ in $\Sigma'_v$ of length at most $\lambda(k,w)$, we would have computed $h^{y}_{\sigma'}$ at the node $y$. 

Hence, $h^{x}_\sigma$ is defined to be the mimimum $h^{y}_{\sigma'}$ where $\sigma'$ ranges over all good sequences of length at most $\lambda(k,w)$ in the set $\Sigma'_v$. 

The correctness is a consequence of the fact that our guesses in the first two steps exhaustively explore, for a partial solution $S$ corresponding to $\sigma$, all possible checkpoint tuples of $S$ with respect to $(G_{y,T}^\downarrow,\beta'(y))$.
 
\end{enumerate}

\subparagraph*{Introduce node.}
Let $x$ be an introduce node with child node $y$. Let $v$ be the unique vertex in $\beta'(x)\setminus\beta'(y)$. Let $\sigma$ be a good sequence of length at most $\lambda(k,w)$ at the node $x$. We compute $h^{x}_\sigma$ as follows.

\begin{enumerate}\item  Let $\mu$ denote the total number of moves made in $\sigma$ from $v$ to vertices of $\beta'(y)$ or vice-versa. That is, $\mu$ denotes the number of pairs $(\tau_i,\tau_{i+1})$ in $\sigma$ such that for some $j$, either $\tau_i[j]=v$ and $\tau_{i+1}[j]=v'$ for some $v'\in \beta'(y)$ or $\tau_i[j]=v'$ and $\tau_{i+1}[j]=v$ for some $v'\in \beta'(y)$. 

The idea behind this step is to separate the required partial solution corresponding to $\sigma$ into two parts -- one part for all the moves between $v$ and $\beta'(y)$ and the second part will then be a semi-schedule that is a partial solution corresponding to an appropriate sequence at the child node $y$. Notice that there cannot be a move made directly from $v$ to vertices of $\gamma'(y)\setminus\beta'(y)$ because $x$ is an introduce node and so, $v$ is separated from $\gamma'(y)\setminus\beta'(y)$ by $\beta'(y)$.

	\item  Let $\sigma'$ denote the sequence obtained from $\sigma$  by replacing every occurrence of $v$ with $\uparrow$ and remove all pairs of tuples that are identical.
The idea in this step is to focus on the part of the partial solution corresponding to $\sigma$ (call it $S$) that lies in the boundaried graph $(G_{y,T}^\downarrow,\beta'(y))$, which is a semi-schedule (call it $S'$) with respect to $(G_{y,T}^\downarrow,\beta'(y))$. Since $v$ is a vertex outside $(G_{y,T}^\downarrow)$, all occurrences of $v$ must be treated as an ``external'' vertex and so, we replace these occurrences with $\uparrow$. If this results in a pair $(\tau_i,\tau_{i+1})$ where the tuples are identical, then this pair cannot be a pair of checkpoint tuples of $S'$ with respect to $(G_{y,T}^\downarrow,\beta'(y))$ and so, we remove these.

	\item Finally, $h^{x}_\sigma$ is defined to be $\mu$ plus the value of $h^{y}_{\sigma'}$ at the node $y$.
	\end{enumerate}

\subparagraph*{Join node.}
Let $x$ be a join node with children nodes $y_1$ and $y_2$.
Let $\sigma$ be a good sequence of length at most $\lambda(k,w)$ at the node $x$. 
Let $\mu$ denote the total number of moves made in $\sigma$ between vertices of $\beta'(y)$.
That is, $\mu$ denotes the number of tuples $(u,v,\tau_i,\tau_{i+1},j)$ where $u,v\in \beta'(y)$, 
$(\tau_i,\tau_{i+1})$ is a pair in $\sigma$, $\tau_i[j]=u$ and $\tau_{i+1}[j]=v$.

We now compute $h^{x}_\sigma$ as follows. 
\begin{enumerate}
\item Let $\Sigma'$ denote the set of all sequences obtained from $\sigma$ by going over every occurrence of $\downarrow$ in $\sigma$, and replacing it with one out of $\downleftarrow$ or $\downrighttarrow$. 

The set $\Sigma'$ has size at most $\lambda'(k,w)$ by the same reasoning as that used in the first step of processing forget nodes.

 The idea in this step is to ``guess'', for every occurrence of a vertex of $\gamma'(x)\setminus\beta'(x)$ in $\sigma$ (which is indicated by $\downarrow$), whether this vertex is in $\gamma'(y_1)\setminus\beta'(y_1)$ (denoted by $\downleftarrow$) or in $\gamma'(y_2)\setminus\beta'(y_2)$ (denoted by $\downrighttarrow$) in some partial solution $S$ corresponding to $\sigma$. 
 
\item Let $\Sigma''$ denote pairs  $(\sigma_1,\sigma_2)$ of sequences obtained as follows.
For every $\sigma'\in \Sigma'$ do:
\begin{enumerate}\item  In $\sigma'$, replace all occurrences of $\downleftarrow$ with $\downarrow$ and replace all occurrences of $\downrighttarrow$ with $\uparrow$. Call the result $\sigma_1$.  \item In $\sigma'$, replace all occurrences of $\downleftarrow$ with $\uparrow$ and replace all occurrences of $\downrighttarrow$ with $\downarrow$. Call the result $\sigma_2$. 
\item Add the pair $(\sigma_1,\sigma_2)$ to $\Sigma''$.

\end{enumerate}

The idea in this step is to generate, for every guess made in the first step, two sequences $\sigma_1$ and $\sigma_2$, expressing the checkpoint tuple sequences of $S$ with respect to the boundaried graphs $(G_{y_1,T}^\downarrow,\beta'(y_1))$ and $(G_{y_2,T}^\downarrow,\beta'(y_2))$, respectively.    Hence, to obtain $\sigma_1$, the vertices ``guessed'' to be below $\beta'(y_2)$ are denoted by $\uparrow$ 
and the vertices ``guessed'' to be below $\beta'(y_1)$ are denoted by $\downarrow$. Similarly, to obtain $\sigma_2$, the vertices ``guessed'' to be below $\beta'(y_1)$ are denoted by $\uparrow$ 
and the vertices ``guessed'' to be below $\beta'(y_2)$ are denoted by $\downarrow$.

We next use the fact that the length of $S$ can be expressed as the sum of the lengths of partial solutions $S_1$ and $S_2$ corresponding to $\sigma_1$ and $\sigma_2$ minus the number of movements made between vertices of $\beta'(x)$ to avoid double counting these moves.  

\item Hence, $h^{x}_\sigma$ is defined to be the minimum over all $h^{y_1}_{\sigma_1}+h^{y_2}_{\sigma_2}-\mu$ where we range over all pairs  $(\sigma_1,\sigma_2)\in \Sigma''$ where both $\sigma_1$ and $\sigma_2$ are good sequences. If such $\sigma_1$ and $\sigma_2$ do not exist, then $h^{x}_\sigma$ is set to $\rho+1$ to indicate infeasibility.

\end{enumerate}

The number of entries at each bag of the tree decomposition as well as the time required to compute the table at each bag is clearly FPT in $k$ and $w$. This completes the proof of Theorem~\ref{thm:boundedTreewidthAlgorithm}. 
\end{proof} \fi

\section{Using the Total Energy as the Parameter}
In the final part of our paper, we consider the complexity of \cmpl\ in settings where we are dealing with many robots but the energy budget $\ell$ is small. While intuitively this may seem like an ``easier'' case due to the fact that $\ell$ immediately bounds the number of robots that will end up moving in a hypothetical schedule, we show that the problem (and even its restriction to the often-studied case where all robots have destinations) is unlikely to be fixed-parameter tractable when parameterized by $\ell$.

 \wone*

\iftrue \begin{proof}
We provide a parameterized reduction from the well-known \W[1]-complete \textsc{Multicolored Clique} problem~\cite{DowneyFellows13,CyganFKLMPPS15}: decide whether a given $\kappa$-partite graph $(V_1,\dots,V_\kappa,E)$ contains a clique of size $\kappa$. To avoid any confusion, we explicitly remark that in this proof we use $\kappa$ to denote the parameter of the initial instance of \textsc{Multicolored Clique} and not the number of robots in the resulting instance of \cmpl.

Our reduction takes an instance $(V_1,\dots,V_\kappa,E)$ of \textsc{Multicolored Clique} and constructs an instance $(G,\R=(\M,\F),\kappa',\ell)$ of \cmpl\ as follows. To obtain $G$, we: 
\begin{enumerate}
\item subdivide each edge $e\in E$ $\kappa^3$ many times; 
\item attach a single new pendant vertex $v'$ to each vertex $v\in V_1\cup \dots \cup V_\kappa$ in the original graph; and
\item for each pair of integers $i,j$ such that $1\leq i< j\leq \kappa$, construct a new vertex $s_{i,j}$ and make it adjacent to every vertex in $V_i$, and construct a new vertex $t_{i,j}$ and make it adjacent to every vertex in $V_j$.
\end{enumerate}
For $\R=(\M, \F)$, we set $\F=\emptyset$ and add two sets of robots into $\M$. First, for each vertex $v$ in $V_1\cup \dots \cup V_\kappa$, we construct a \emph{blocking} robot $r^v$ and set $v$ as its starting point as well as its destination.
Second, for each of the newly constructed $s_{i,j}\in V(G)$, we construct a \emph{clique} robot $r^{i,j}$, set $s_{i,j}$ as its starting point and $t_{i,j}$ as its destination. Finally, we set $\ell:=2\kappa+{\kappa \choose 2}\cdot (\kappa^3+3)$ and $\kappa':=|\R|$. This completes the description of our reduction.

Towards proving correctness, we first observe that a hypothetical schedule for $(G,\R,\kappa',\ell)$ can never use less than $\ell:=2\kappa+{\kappa \choose 2}\cdot (\kappa^3+3)$ travel length. Indeed, there are ${\kappa \choose 2}$ many clique robots, and the shortest path between their starting and destination points has length $\kappa^3+3$. Moreover, every schedule must route each clique robot $r^{i,j}$ through at least one vertex in $V_i$ and at least one vertex in $V_j$, and hence a hypothetical solution must spend a length of at least $2$ to move at least one of the blocking robots in each $V_p$, $p\in [\kappa]$, out of the way and then back to its initial position (which is also its destination).

Now, assume that $(G,\R,\kappa',\ell)$ is a yes-instance. By the argument above, this implies that there is a schedule which moves precisely one of the blocking robots in each $V_p$, $p\in [\kappa]$; let us denote by $w_p$ the starting vertex of the unique blocking robot which is moved from $V_p$. Since $(G,\R,\kappa',\ell)$ is a yes-instance, $E$ must contain an edge between each $w_i$ and $w_j$, $1\leq i<j\leq \kappa$; in particular, these vertices induce a clique in $(V_1,\dots,V_\kappa,E)$.

For the converse, assume that $(V_1,\dots,V_\kappa,E)$ contains a $\kappa$-clique $w_1,\dots,w_\kappa$. We construct a solution for $(G,\R,\kappa',\ell)$ as follows. First, for each robot starting at $w_i$, $i\in [\kappa]$, we spend a length of 1 to move it to the pendant vertex $w_i'$. Next, we route each robot $r^{i,j}$ from $s_{i,j}$ to $w_i$, through the $(\kappa^3+1)$-length path to $w_j$ (whereas the existence of this path is guaranteed by the existence of the edge $w_iw_j\in E$), and then to its destination $t_{i,j}$. Finally, we route each blocking robot which originally started at $w_i$ back to $w_i$ from $w'_i$. The travelled length of this schedule is precisely $2\kappa+{\kappa \choose 2}\cdot (\kappa^3+3)$, as desired.
\end{proof} \fi

As mentioned in the introduction, we complement Theorem~\ref{thm:wone} with a fixed-parameter algorithm on graph classes of bounded local treewidth.

Before proceeding to the proof, we recall the syntax of Monadic Second Order Logic (MSO) of graphs. It contains the logical connectives $\vee,$ $\land,$ $\neg,$ 
$\Leftrightarrow,$ $\Rightarrow,$ variables for vertices, edges, sets of vertices and sets of edges, and the quantifiers $\forall$ and $\exists$, which can be applied to these variables. It also contains the following binary relations: (i) $u\in U$, where $u$ is a vertex variable and $U$ is a vertex set variable; (ii) $d \in D$, where $d$ is an edge variable and $D$ is an edge set variable;
(iii) $\mathbf{inc}(d,u),$ where $d$ is an edge variable, $u$ is a vertex variable, with the interpretation  that the edge $d$ is incident to $u$; 
(iv)  equality of variables representing vertices, edges, vertex sets and edge sets. 
 
 The well-known Courcelle's theorem \cite{Courcelle90,ArnborgLS91} states that checking whether a given graph $G$ models a given MSO formula $\phi$ can be done in {\FPT} time parameterized by the treewidth of $G$ and the size of $\phi$. Morover, this result holds even if some vertices of $G$ are given labels or colors (i.e., we allow a fixed number of additional unary relations over $V(G)$). This is because one can produce an equivalent graph $G'$ such that $G$ has bounded treewidth if and only if $G'$ does, plus an alternate MSO formula $\phi'$ such that $G$ models $\phi$ if and only if $G'$ models $\phi'$.

\fptenergy*

\iftrue
\begin{proof}
	Let $\I=(G, \R=(\M, \F), k, \ell)$ be the given instance of {\cmpl} where $G$ belongs to a graph class that is closed under taking subgraphs and in which the treewidth of a graph of diameter $d$ is bounded by $f(d)$ for some function $f$. Let $\M'$ denote the set of those robots in $\M$ whose final position is distinct from the starting position. Since each of the robots in $\M'$ must move at least one step, it follows that if $|\M'|>\ell$, then the instance is a no-instance. So, assume that $|\M'|\leq \ell$. Let $U$ denote the set of starting points of the robots in $\M'$. 

 Consider an optimal schedule $S$. We say that an edge $e$ appears in a route $W=(u_0, \ldots, u_{t})$ in $S$ if the endpoints of $e$ are the vertices $u_{j-1}$ and $u_{j}$ for some $j\in [t]$. We say that $e$ appears in $S$ if it appears in some route in $S$. Let $E_S$ denote the set of edges of $G$ that appear in $S$. We observe that every connected component of the subgraph of $G$ induced by $E_S$ (call it $G_S$) contains a vertex of $U$. If this were not the case and there is a connected component of $G_S$ disjoint from $U$, then we can delete all the routes whose edges appear in this connected component and still have a feasible schedule for the given instance, contradicting the optimality of $S$.  Moreover, since $S$ has total energy at most $\ell$, it follows that at most $\ell$ edges can appear in $S$. Hence, it follows that $S$ is contained in the subgraph $H$ defined as the union of the subgraphs induced by the $\ell$-balls centered at the starting positions of the robots in $\M'$.

Notice that since $H$ is obtained by taking the union of at most $\ell$ graphs, each of diameter at most $2\ell$, it follows that $H$ has diameter at most $2\ell^2$ and so, the treewidth of $H$ is upper bounded by $f(2\ell^2)$. It remains to argue that {\cmpl} is FPT parameterized by the energy and treewidth of the input graph. We do this by making some guesses, obtaining a bounded number of MSO formulas such that input is a yes-instance if and only if at least one of these formulas is true on the graph $H$ with a label function (with constant number of labels) and then invoking Courcelle's theorem~\cite{Courcelle90,ArnborgLS91}. Let $\M''$ denote the subset of $\M\setminus \M'$ contained in $H$ and let $\R'=\R\cap V(H)$.

	Suppose that $\alpha\leq \ell$ robots move in $S$ and let the routes in $S$ be $\{W_i=(v^i_{1},\dots,v^i_{d_i})\mid i\in [\alpha]\}$. Assume that the first $|\M'|$ routes, $W_1,\dots,W_{|\M'|}$, correspond to the robots in $\M'$.   We guess $\alpha$ and $d_1,\dots, d_\alpha$ as these are all at most $\ell$. Moreover, we guess, for  every pair of vertices $v^i_j$ and $v^p_q$ whether they are equal. Let this be expressed by a function $\mu(v^i_j,v^p_q)$ that evaluates to 1 if and only if these vertices are guessed to be equal. 
	Thus, even though we do not know the actual vertices in the solution (except for the starting and endpoint positions of the robots in $\M'$), this guess gives us the ``structure'' of the entire solution. Notice that the number of guesses is bounded by a function of $\ell$.
		Now, we consider the labeled graph $H$ where the starting points of all robots in $\R'\setminus \M''$ are labeled red and the starting points of the robots in $\M''$ are labeled blue. Fix a guess of $\alpha$, the numbers $d_1,\dots,d_\alpha$ and a guess of the function $\mu$ and express the following as an MSO formula.  

There are ordered vertex sets $\{W_i=(v^i_{1},\dots,v^i_{d_i})\mid i\in [\alpha]\}$ such that the following hold:
\begin{enumerate} \item Each $W_i$ is a walk in $H$.
\item  For every $i,j,p,q$, $v^i_j$ and $v^p_q$ are equal if and only if $\mu(v^i_j,v^p_q)$ is 1.
\item  The walks $W_i$ are pairwise non-conflicting routes.
 \item  For each of the first $|\M'|$ walks $W_1,\dots,W_{|\M'|}$, the initial and final vertices are precisely the starting and ending points, respectively, of some robot in $\M'$.
\item  For each labeled vertex appearing on any walk $W_i$, this vertex must be the first vertex of some walk $W_i$.
\item For each walk $W_i$ starting at a blue vertex, it must terminate at the same blue vertex. 
 	
\end{enumerate}

Expressing the above in MSO is straightforward to do, so we do not go in to lower level details of the MSO formula.

We next argue that the input is a yes-instance if and only if there is a guess of $\alpha$,  the numbers $d_1,\dots,d_\alpha$ and a guess of the function $\mu$ such that the resulting MSO formula is true on $H$. Consider the forward direction. An optimal schedule $S$ with routes $W_1,\dots,W_\alpha$ naturally gives us a ``correct'' guess of $\alpha$,  $d_1,\dots,d_\alpha$ and the function $\mu$. Moreover notice that Properties 1-4 are clearly satisfied. For Property 5, notice that only the starting points of robots that never move can be disjoint from the set of initial vertices of the routes $\{W_i\mid i\in [\alpha]\}$ in the optimal schedule $S$. Since these robots do not move, they also cannot appear on any route $W_i$ or they would obstruct a robot that moves. Finally, Property 6 is satisfied since every robot in $\M''$ that moves also ends up returning to its starting point. The converse is trivial.
   \end{proof}
\fi

 \section{Conclusion}
 \label{sec:conclusion}
 In this paper, we presented parameterized algorithms for fundamental coordinated motion planning problems on graphs. In particular, we proved that \cmplone is fixed-parameter tractable parameterized by the number of robots, and that \cmpl is fixed-parameter tractable parameterized by the number $k$ of robots and the treewidth of the input graph combined. This latter result implies that \cmpl{} is fixed-parameter tractable in several graph classes which may be of interest w.r.t.~application domains, including graphs of bounded outerplanarity. 

 We conclude by highlighting three directions that may be pursued in future works:
  
 \begin{enumerate}
\item Is \cmpl{} \FPT{} parameterized by $k$ (alone) on planar graphs, or even on general graphs?

\item What is the complexity of \cmpl{} on trees? While \cmplone\ is known to be in $\Pol$ on trees, the complexity of the general problem on trees remains unresolved.

\item The focus of this paper was on minimizing the travel length/energy, which is one of the most-studied optimization objectives. Nevertheless, there are other important objectives, most notably that of minimizing the makespan. It seems that at least in the settings considered here, optimizing the makespan may be a computationally more challenging problem.
 In particular, the question of whether  \textsc{Coordinated Motion Planning} on trees is \FPT{} parameterized by $k$ remains open.

   \end{enumerate}

\bibliography{ref}

\end{document}